\newcommand{\Reals}{\mathbb{R}}
\newcommand{\area}{\mathrm{area}}
\newcommand{\diam}{\mathrm{diam}}
\newcommand{\dist}{\mathrm{dist}}
\DeclareMathOperator{\poly}{poly}
\newcommand{\etal}{\emph{et al.}\xspace}
\newcommand{\eps}{\varepsilon}
\newcommand{\cF}{\mathcal{F}}
\newcommand{\cB}{\mathcal{B}}
\newcommand{\cO}{\mathcal{O}}
\newcommand{\bn}{\mathbf{n}}
\newcommand{\sph}{\mathbb{S}}
\newtheorem{question}[theorem]{Question}
\newenvironment{myquote}%
  {\list{}{\leftmargin=4mm\rightmargin=4mm}\item[]}%
  {\endlist}
\newcommand{\tsp}{{\sc tsp}\xspace}
\newcommand{\mtsp}{{\sc Metric tsp}\xspace}
\newcommand{\etsp}{{\sc Euclidean tsp}\xspace}
\newcommand{\otsp}{{\sc tsp with obstacles}\xspace}
\newcommand{\poltsp}{{\sc tsp in a simple polygon}\xspace}
\newcounter{ctr}
\edef\csname c\Alph{ctr}\endcsname{\noexpand\mathcal{\Alph{ctr}}}
\newif\ifComments
   \newcommand{\skb}[1]{\textcolor{red}{S: #1}}
   \newcommand{\lt}[1]{\textcolor{orange}{ LT: #1}}
    \newcommand{\skb}[1]{}
    \newcommand{\lt}[1]{}
\title{Realizing Metric Spaces with Convex Obstacles}
\author{S\'andor Kisfaludi-Bak \footnote{This work was supported by the Research Council of Finland, Grant 363444.}}{Aalto University, Finland}{sandor.kisfaludi-bak@aalto.fi}{https://orcid.org/0000-0002-6856-2902}{}
\author{Leonidas Theocharous}{University of Ottawa, Canada}{ltheocha@uottawa.ca}{https://orcid.org/0000-0002-1707-6787}{}
\authorrunning{S. Kisfaludi-Bak, L. Theocharous}
\keywords{traveling salesman, geodesic distance} 
\begin{document}
\nolinenumbers

\maketitle

\begin{abstract} 
 The presence of obstacles has a significant impact on distance computation, motion-planning, and visibility. These problems have been studied extensively in the planar setting, while our understanding of these problems in $3$- and higher-dimensional spaces is still rudimentary. In this paper, we study the impact of different types of obstacles on the induced geodesic metric in $3$-dimensional Euclidean space. We say that a finite metric space $(X, \dist_X)$ is \emph{approximately realizable} by a collection $\mathcal{T}$ of obstacles in $\mathbb{R}^3$ if for any $\eps>0$ it can be embedded into $(\mathbb{R}^3\setminus \bigcup_{T\in\mathcal{T}} T,\dist_\mathcal{T})$ with worst-case multiplicative distortion $1+\eps$, where $\dist_\mathcal{T}$ denotes the geodesic distance in the free space induced by $\mathcal{T}$. We focus on three key geometric properties of obstacles --convexity, disjointness, and fatness-- and examine how dropping each one of them affects the existence of such embeddings.

Our main result concerns dropping the fatness property: we demonstrate that any finite metric space is realizable with $1+\varepsilon$ worst-case multiplicative distortion using a collection of convex and pairwise disjoint obstacles in $\mathbb{R}^3$, even if the obstacles are congruent and equilateral triangles. Based on the same construction, we can also show that if we require fatness but drop any of the other two properties instead, then we can still approximately realize any finite metric space.

Our results have important implications on the approximability of \otsp, a natural variant of \tsp introduced recently by Alkema~\emph{et al.}~(ESA~2022). Specifically, we use the recent results of Banerjee \emph{et al.} on \tsp in doubling spaces~(FOCS~2024) and of Chew~\emph{et al.} on distances among obstacles~(Inf. Process. Lett. 2002) to show that \otsp admits a PTAS if the obstacles are convex, fat, and pairwise disjoint. If any of these three properties is dropped, then our results, combined with the APX-hardness of \mtsp, demonstrate that \otsp is APX-hard.
\end{abstract}

\newpage
\section{Introduction}

Understanding navigation~\cite{schwartz83,doi:10.1177/027836498300200304,10.5555/49142,Halperin2018MotionPlanning,agarwal24}, distance ~\cite{wangdahl1974,wang21,asano1986,sharir84,hersh99,mitchell93} and visibility~\cite{Chvatal1975Combinatorial,orourke1987art,fekete2014chromatic,rieck_et_al:LIPIcs.ISAAC.2022.67,doi:10.1137/1.9781611973105.60,lubiw_et_al:LIPIcs.ESA.2021.65} problems among obstacles has been one of the well-motivated research directions in computational geometry. Most of the work has focused on the planar setting, where the metric could be the geodesic distance inside a simple polygon, or more generally in a polygon with holes in the Euclidean plane.
In the plane one can benefit from the fact that paths will often cross, whereas in higher dimensions paths will typically stay disjoint. The additional degree of freedom in $3$- and higher-dimensional space makes the fundamental distance and navigation problems among obstacles harder, and they have not been explored as thoroughly.

Fortunately, we can still get a useful metric structure for certain obstacles in $d$-dimensional Euclidean space (henceforth denoted by $\Reals^d$). We say that a set $\cT$ of obstacles is $\alpha$-\emph{fat} if for each obstacle object $T\in \cT$ we have $\frac{r_{in}(T)}{r_{circum}(T)}\geq \alpha$, where $r_{in}(T)$ denotes the radius of the maximum inscribed ball of $T$, and $r_{circum}(T)$ denotes the radius of the minimum enclosing ball of $T$. We will think of $\alpha$ as a universal constant and talk about collections of fat objects. 
For a set $\cT$ of obstacles let $\cF_\cT=\Reals^d\setminus \bigcup_{T\in \cT} T$ denote the \emph{free space} outside the obstacles. We denote by $\dist_\mathcal{T}(u,v)$ the geodesic distance between $u,v\in \cF_\cT$, which is the infimum of the length of piecewise linear curves in $\cF_\cT$ containing $u$ and $v$. Chew \etal ~\cite{chew2002} established that for any set of convex, disjoint and fat obstacles in $\Reals^d$ one can navigate around the obstacles with constant overhead compared to the Euclidean distance. Their theorem implies that $(\cF_\cT,\dist_\cT)$ has bounded \emph{doubling dimension}, that is, for any $r>0$ and any $x\in \cF_\cT$ the geodesic ball of center $x$ and radius $r$ can be covered by at most $2^\delta$ geodesic balls of radius $r/2$, for some constant $\delta$. The minimum number $\delta$ satisfying the above condition is the doubling dimension. In \Cref{sec:doubling} we establish an explicit bound on the doubling dimension among pairwise disjoint convex $\alpha$-fat obstacles in $\Reals^d$. Metrics of bounded doubling dimension are well-studied generalizations of Euclidean spaces, and retain many useful properties of $\Reals^d$.

A natural question is whether the conditions of fatness, convexity, and pairwise disjointness are all needed for the doubling dimension bound. More generally, one wonders what finite metric spaces can be realized if the condition of fatness, disjointness or convexity is dropped. 
We say that a metric space $(X,\dist_X)$, is \emph{realizable} with a set $\cT$ of obstacles in $\Reals^d$ if one can embed $(X,\dist_X)$ isometrically into $(\cF_\cT,\dist_\cT)$. An \emph{approximate realization} allows for an embedding with a worst-case multiplicative distortion of $(1+\eps)$.

The starting point of our investigation is the following question. Let $\mathfrak P$ denote the disjunction of some subset of properties among (i) convex (ii) pairwise disjoint (iii) fat. 

\begin{question}\label{q:metricembed}
    Is it possible to (approximately) realize the finite metric space $(X,\dist_X)$ in~$\Reals^3$ with obstacles of property~$\mathfrak P$?
\end{question}

It is also worth exploring how various types of obstacles impact the algorithms and complexity of some classic distance problems. The study of the shortest path problem among a set of polygonal obstacles $\cP$ goes back at least 50 years, to the work of Wangdahl \etal \cite{wangdahl1974} in 1974. Their approach to compute a shortest path between two points $s,t$ (as well as many later approaches; see for instance \cite{sharir84,asano1986}) relied on running Dijkstra's algorithm on the visibility graph of the vertices of $\cP$ together with $s$ and $t$. Any such method has an $\Omega(n^2)$ lower bound (where $n$ is the number of vertices in $\cP)$, due to the size of the visibility graph. To achieve subquadratic running times, later algorithms shifted to the continuous Dijkstra paradigm \cite{mitchell93,hersh99}. Finally, Wang  presented an optimal $O(n\log{n})$-time, $O(n)$-space algorithm \cite{wang21}. Data structures for two-point shortest path queries have also been studied and, recently, De Berg \etal presented an $O(n^{10+\eps})$-space data structure with $O(\log{n})$ query time \cite{deberg24}. The shortest path problem among obstacles can be generalized to account for movement through regions with different weights, a variation known as the \emph{weighted region problem}. The general problem admits a $(1+\eps)$-approximation algorithm~\cite{papad91}, while cases with restricted region shapes or weights have also been considered~\cite{bose22,mitchell_wrp}.
  While these are all positive results, computing shortest paths among $3$-dimensional obstacles is a much harder problem. In fact, it is NP-hard~\cite{MitchellS04}. On the other hand, the problem admits an FPTAS ~\cite{Clarkson87,Har-Peled99}.

It is also natural to consider classic geometric optimization problems in the presence of obstacles.
Recently, Alkema \etal \cite{alkema23} introduced a natural generalization of \etsp called \otsp. In \otsp, we are given a set of $n$ points to visit and a collection $\cT$ of obstacles of total complexity $m$ to avoid. Our goal is to find the shortest tour between a set of sites in the plane or some higher-dimensional space, while avoiding a given set of obstacles. In~\cite{alkema23} the authors studied a variant of \otsp called \poltsp, where the sites lie in a simple polygon $P$, that is, the obstacle that the salesman has to avoid is the complement of $P$. They gave an exact algorithm for this problem, running in $2^{O(\sqrt{n}\log{n})}+\poly(n,m)$ time.

In this paper, we will demonstrate the algorithmic impact of various types of obstacles via studying approximation algorithms and APX-hardness for \otsp. In particular, we consider the problem in $\Reals^3$ for different kinds of obstacles. Again, let $\mathfrak P$ denote the disjunction of some subset of properties among (i) convex (ii) pairwise disjoint (iii) fat.

\begin{question}\label{q:TSP}
    Is there a PTAS for \otsp in $\Reals^3$ if the obstacles satisfy property~$\mathfrak P$?
\end{question}


\subsection{Our contribution}

Our main contribution is the following theorem, which shows that all finite metric spaces are approximately realizable with pairwise disjoint convex obstacles in $\Reals^3$.

\begin{restatable}{theorem}{thmMain}\label{thm:approxrealizemetric}
    Let $\eps\in (0,1)$ and let $(X,\dist_X)$ be a metric space of size $|X|=n$ and spread $\Phi:=\max_{a,b,c,d\in X}(\frac{\dist_X(a,b)}{\dist_X(c,d)})$. Then there exists a collection $\cT$ of $O(n^{17}\Phi^5/\eps^{5})$ pairwise disjoint congruent equilateral triangular obstacles and an injection $f:X\rightarrow \Reals^3$ such that
    \[\dist_X(a,b) \leq \dist_{\cT}(f(a),f(b))\leq  (1+\eps)\cdot \dist_X(a,b).\]
    for all $a,b\in X$.
    The set of obstacles can be constructed in $\poly(n,\Phi,1/\eps)$ time.
\end{restatable}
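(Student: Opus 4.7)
My strategy is to place all images $f(a_1), \ldots, f(a_n)$ inside a small ball $B \subset \Reals^3$ of Euclidean diameter much less than $\min_{a \ne b} \dist_X(a,b)$, and then use triangular obstacles outside $B$ to inflate the free-space geodesic distances to the prescribed values $\dist_X(a_i, a_j)$. Because equilateral triangles are non-fat obstacles in $\Reals^3$, I can build arbitrarily narrow zigzag corridors whose length between two fixed endpoints can be tuned to any desired value up to a chosen resolution $\delta$, simply by controlling the number and placement of triangular baffles. After scaling so that $1 \leq \dist_X(a,b) \leq \Phi$ and snapping every pairwise distance to a multiple of $\delta = \Theta(\eps/\poly(n))$, I plan to organize the construction into $\binom{n}{2}$ parallel \emph{floors} stacked along one coordinate axis, one floor per unordered pair $\{a_i, a_j\}$. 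Each floor hosts a single narrow corridor whose two mouths descend toward $B$ and terminate near $f(a_i)$ and $f(a_j)$, and whose unique tight traversal between the two mouths has length equal to $\dist_X(a_i, a_j)$ up to additive error $\delta$. The floors are isolated by large triangular barriers punctured only at the mouths, and around each image $f(a_i)$ I would erect a triangular \emph{funnel} that occludes all direct sight lines between images and forces any outgoing path to enter exactly one mouth.

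The upper bound $\dist_{\mathcal{T}}(f(a_i), f(a_j)) \leq (1+\eps)\dist_X(a_i, a_j)$ is immediate from the existence of the dedicated corridor for each pair, provided $\delta$ is small enough. The real challenge is the matching lower bound, since a priori a path could leave the corridor intended for $\{a_i, a_j\}$ and chain several other corridors to reach $f(a_j)$ faster. I plan to rule this out by proving that any free-space path from $f(a_i)$ to $f(a_j)$ decomposes into a sequence of full corridor traversals $\{a_i, a_{k_1}\}, \{a_{k_1}, a_{k_2}\}, \ldots, \{a_{k_t}, a_j\}$ separated by short transit arcs inside $B$ whose total length is negligible; since each corridor traversal has length at least the $\dist_X$-distance of its endpoint pair, the triangle inequality on $(X, \dist_X)$ then yields the desired lower bound. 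The topological heart of this decomposition is the claim that a path entering a floor through one mouth must exit through the other mouth of the same floor, which follows from the fact that each floor is enclosed by a triangular barrier whose only two gaps are the mouths. Making this precise, together with controlling the cross-mouth transit cost inside $B$, is the most delicate step and requires careful case analysis of how a generic polygonal path crosses the bounding triangles.

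Finally, the obstacle count is a routine multiplicative estimate. Each of the $\binom{n}{2}$ corridors requires $O(\Phi/\delta)$ triangles to realize its prescribed length at resolution $\delta$; each floor barrier contributes an additional $\poly(n, \Phi/\delta)$ triangles to enforce isolation in three dimensions; and each image funnel needs $O(n)$ triangles to block sight lines to the other $n-1$ mouths. With $\delta = \Theta(\eps/n^c)$ for an appropriate constant $c$ that absorbs the cumulative distortion incurred by snapping, the funnels, and the finite tightness of corridor paths, these factors combine to the stated $O(n^{17}\Phi^5/\eps^5)$ bound. Every coordinate and every triangle in the construction is determined by an explicit formula in the input distances, so the total construction time is $\poly(n, \Phi, 1/\eps)$.
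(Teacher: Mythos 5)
Your high-level architecture---one dedicated channel per pair $\{x_i,x_j\}$, each tuned to length $\approx\dist_X(x_i,x_j)$, plus a triangle-inequality argument to handle paths that chain through several channels---is the same skeleton the paper uses (there the channels are physical tubes connecting rounded cubes). Your obstacle and time bounds are also plausible. But there is a genuine gap that sits exactly at the heart of the problem, and it is the part you dismiss as routine.

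You repeatedly assert that equilateral triangles can be used to form \emph{barriers}, \emph{funnels}, and \emph{sealed corridors}: ``The floors are isolated by large triangular barriers punctured only at the mouths,'' ``a triangular funnel that occludes all direct sight lines,'' ``I can build arbitrarily narrow zigzag corridors \dots simply by controlling the number and placement of triangular baffles.'' In $\Reals^3$ none of this is automatic. A flat triangle is a $2$-dimensional obstacle with a $1$-dimensional boundary, and a polygonal path in $\Reals^3$ can always slip around that boundary: finitely many disjoint triangles never tile a surface, so they never by themselves block a direction of travel. The planar zigzag picture of Dumitrescu--T\'oth that your ``baffles'' evoke does not transfer to $3$-space precisely because the ambient codimension jumps from $1$ to $2$. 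Making thin obstacles act as a wall is the paper's main technical contribution: one places equilateral triangles tangent to a whole one-parameter family of smooth offset surfaces $\cS(\delta)$ at the points of carefully constructed $(\zeta,O(\zeta))$-nets, with the net points partitioned into $O(1)$ shift classes used at distinct offsets so that consecutive ``layers'' stagger; this forces any geodesic crossing from $\cS(-1/2)$ to $\cS(1/2)$ to have length $\Omega(\zeta)$ per iteration, and iterating gives arbitrary polynomial separation (Lemmas~\ref{lem:between-offsets}--\ref{lem:separation} and Theorem~\ref{thm:polynomial-sep} in the paper). The curvature control needed to keep small tangent triangles between nearby offsets, the net construction on the various patch types, and the shift-class partition are the actual work. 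Your proof proposal has no substitute for any of this, so your ``barriers'' and ``funnels'' are, as stated, penetrable, and the topological decomposition of a generic path into corridor traversals---which you correctly identify as the delicate step---cannot even get off the ground, because a path need not respect the corridors at all.

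A secondary (and smaller) concern: placing all $n$ images inside a single small ball $B$, with $\binom{n}{2}$ floors all feeding into $B$, creates a congestion problem that the paper avoids by spreading the images $40n^2$ apart along the $x$-axis and giving each cube its own $y$-coordinate band of holes. You would need your funnels to not only occlude sight lines but also to leave enough room inside $B$ for $n-1$ disjoint mouths per image, and then the ``short transit arcs inside $B$'' would have to be shown negligible against $\min\dist_X$; none of this is argued. Fixable in principle, but secondary to the missing wall construction.
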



Let $\mathfrak P$ denote the disjunction of some subset of properties among (i) convex (ii) pairwise disjoint (iii) fat.
The theorem directly answers \Cref{q:metricembed} if the property $\mathfrak P$ does not require that objects are fat. We can use some ideas from this construction to give a complete answer to \Cref{q:metricembed} in case of approximate realizations: we can approximately realize any metric space with fat pairwise disjoint obstacles 
(that are non necessarily convex) and with convex fat obstacles that are not necessarily disjoint, see \Cref{thm:fatobstacles}.

We note that our construction is quite generic and we get the same result with other convex shapes that are constant-fat in $\Reals^2$, such as with unit disks or squares instead of congruent equilateral triangles. Using \Cref{thm:approxrealizemetric} and \Cref{thm:fatobstacles} we can use TSP lower bounds designed for general metric spaces to answer \cref{q:TSP}.

\begin{restatable}{corollary}{corAlgoLower}
\label{cor:algoandlower}
    Let $\cT$ be a collection of obstacles in $\Reals^3$ with property $\mathfrak P$ of total complexity $m=\poly(|X|)$.
    Then \otsp has a PTAS if  $\mathfrak P=$ ``convex and pairwise disjoint and fat'', and it is APX-hard for all other~$\mathfrak P$.
\end{restatable}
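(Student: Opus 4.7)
The plan is to handle the algorithmic and hardness halves of the corollary separately, since they rely on disjoint ingredients.

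For the PTAS case, when $\mathfrak P$ asks for obstacles that are convex, fat, and pairwise disjoint, the strategy is to invoke Banerjee \etal's PTAS for \tsp on doubling metrics on the site set $f(X)\subset \cF_\cT$ equipped with $\dist_\cT$. The explicit doubling-dimension bound established in \Cref{sec:doubling}, which ultimately rests on Chew \etal's constant-overhead navigation result, guarantees that $(\cF_\cT,\dist_\cT)$ restricted to the input sites is a metric of bounded doubling dimension. To feed this into the doubling-metric PTAS one needs a distance oracle, which can be precomputed by running a standard FPTAS for shortest paths among convex obstacles in $\Reals^3$ (Clarkson, Har-Peled) on each pair of sites, producing a $(1+\eps_1)$-approximation in $\poly(n,m,1/\eps_1)$ time. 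Composing a $(1+\eps_1)$-accurate oracle with a $(1+\eps_2)$-approximate tour yields a $(1+\eps_1)(1+\eps_2)$-approximate tour in the true geodesic metric, so rescaling $\eps_1,\eps_2$ yields the claimed PTAS.

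For the hardness direction, whenever $\mathfrak P$ omits at least one of the three properties, I would reduce from an APX-hard variant of \mtsp whose distance spread is $O(1)$, such as TSP$(1,2)$. Given an instance $(X,\dist_X)$ with $|X|=n$, apply \Cref{thm:approxrealizemetric} or \Cref{thm:fatobstacles} with a small constant $\eps$ chosen strictly below the APX-hardness gap of \mtsp: \Cref{thm:approxrealizemetric} covers every $\mathfrak P$ that does not require fatness, while \Cref{thm:fatobstacles} handles the cases where fatness is required but either convexity or disjointness is dropped. Either theorem produces, in $\poly(n)$ time, an obstacle set $\cT$ of complexity $\poly(n)$ in $\Reals^3$ together with an injection $f\colon X\to\cF_\cT$ satisfying $\dist_X(a,b)\leq \dist_\cT(f(a),f(b))\leq (1+\eps)\dist_X(a,b)$. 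Summing the distortion inequality over the edges of any Hamiltonian cycle gives
\[
\mathrm{OPT}_{\mathrm{MTSP}}(X,\dist_X)\;\leq\;\mathrm{OPT}_{\otsp}(\cT,f(X))\;\leq\;(1+\eps)\,\mathrm{OPT}_{\mathrm{MTSP}}(X,\dist_X),
\]
so the inapproximability gap of \mtsp transfers, up to the factor $1+\eps$, to \otsp with obstacles of property $\mathfrak P$.

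The main obstacle I expect is in the PTAS direction: turning an abstract doubling-metric PTAS into a genuinely geometric algorithm requires checking that every primitive used by Banerjee \etal (distance queries, net construction, and so on) can be realized with polynomial overhead from the raw obstacle description. The hardness direction, by contrast, is essentially a black-box application of our realization theorems; the only quantitative check is that the obstacle-count bound $O(n^{17}\Phi^5/\eps^5)$ of \Cref{thm:approxrealizemetric} stays polynomial once $\Phi$ and $\eps$ are fixed as constants, which is immediate from $\Phi=O(1)$ for TSP$(1,2)$ instances.
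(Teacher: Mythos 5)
Your proposal follows essentially the same two-part plan as the paper, and both halves are sound. For the PTAS direction you use the same chain (Chew~\etal $\to$ doubling dimension $\to$ Har-Peled distance oracle $\to$ Banerjee~\etal doubling-metric PTAS); the paper's only extra step is to note explicitly that Har-Peled's approximate distances \emph{form a metric} (they are genuine shortest-path distances in a precision graph, so the triangle inequality holds) and that the approximate metric's doubling dimension is still $O(1)$ --- this addresses exactly the concern you flag at the end, so it is worth spelling out rather than leaving as ``the main obstacle I expect.'' For the hardness direction your route through TSP$(1,2)$ and the paper's route through the Karpinski--Lampis--Schmied construction are interchangeable at the level of establishing APX-hardness: both supply an APX-hard metric TSP source with small spread, which is the crucial requirement for the obstacle count $O(n^{17}\Phi^5/\eps^5)$ in \Cref{thm:approxrealizemetric} to stay polynomial. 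Your choice of TSP$(1,2)$ is a little cleaner since its spread is trivially $2$, whereas the paper needs Observation~\ref{tsp-inapprox} to verify by inspection that the $123/122$-gap construction has polynomial spread; the paper's choice in turn yields a concrete inapproximability constant $\frac{123}{122}-\eps$ for \otsp rather than the (unspecified but smaller) constant inherited from TSP$(1,2)$. Both are valid proofs of the stated corollary.
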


The PTAS is based on the fact that for $\mathfrak P= \text{``convex and  pairwise disjoint and fat''}$ the free space outside the obstacles has bounded doubling dimension, see \Cref{sec:doubling}. We can also efficiently compute approximate shortest paths using Har-Peled's algorithm~\cite{Har-Peled99}, and running any of the known PTASes for TSP in doubling metrics\cite{BartalGK16,banerjee24} we get the desired algorithm. The lower bound is based on realizing the metric space from the lower bound of Karpinski, Lampis and Schmied~\cite{karpinski} using obstacles. The construction in~\cite{karpinski} has polynomial spread, which allows us to apply \Cref{thm:approxrealizemetric}. See \Cref{sec:algoandlower} for a proof of \Cref{cor:algoandlower}.


\subsection{Related work}

Motion planning is a central problem in robotics, where the goal is to compute a collision-free path for a set of moving objects (robots) in a space with obstacles. The complexity of the problem depends significantly on the underlying environment and the degrees of freedom of the robots. The pioneering works of Schwartz and Sharir \cite{schwartz83,doi:10.1177/027836498300200304}, and later Canny \cite{10.5555/49142}, introduced the first general techniques for solving this problem in the plane. These algorithms have a polynomial dependency on $n$, the combinatorial complexity of the obstacles, but an exponential dependency on $k$, the degrees of freedom of the robots. For a detailed discussion of these foundational works, refer to \cite{Halperin2018Robotics,Halperin2018MotionPlanning}. To highlight the problem's inherent difficulty, note that even in the restricted case of rectangular robots navigating a rectangular environment, the problem is PSPACE-hard \cite{Hopcroft84}. Recently, Agarwal \cite{agarwal24} \emph{et al.} presented an FPTAS with a running time of $O(n^2\varepsilon^{-O(1)}\log{n})$ for the case of two square robots moving in a polygonal environment. This is the first known PTAS for two robots in such settings.

Focusing now on geodesic distance, many well-studied problems in the Euclidean setting have been explored within polygonal domains under this metric. Notable examples include geodesic convex hulls of points in polygonal domains \cite{barba_et_al:LIPIcs.SWAT.2018.8,toussaint1986relative}, the 1- and 2-center problems \cite{Pollack1989,oh2019geodesic}, and geodesic spanners~\cite{abam15}. 

Finally, Abam and Seraji \cite{abam21} constructed an $8\sqrt{3}$-spanner of size $O(n\log^3 n)$ amid axis-parallel boxes in $\mathbb{R}^3$, which is the only known geodesic spanner result in 3-dimensional space with obstacles. 

\subsection{Overview of the proof of \Cref{thm:approxrealizemetric}}

Our approach to proving \Cref{thm:approxrealizemetric} is inspired by the work of Dumitrescu and T\'oth~\cite{DumitrescuT23}, who studied a similar problem in the planar setting. They denote by \(\varrho(P)\) the ratio between the geodesic and the Euclidean diameter of a polygon \(P\), and show that over all convex polygons with \(h\) convex holes, the supremum of \(\varrho(P)\) is between \(\Omega(h^{1/3})\) and \(O(h^{1/2})\). Crucially, their construction includes building a ``wall'' of polynomially many short segments that lengthen the geodesic diameter by a large factor, see Figure~\ref{fig:dt}(i) for an illustration.

\begin{figure}
\begin{center}
\includegraphics{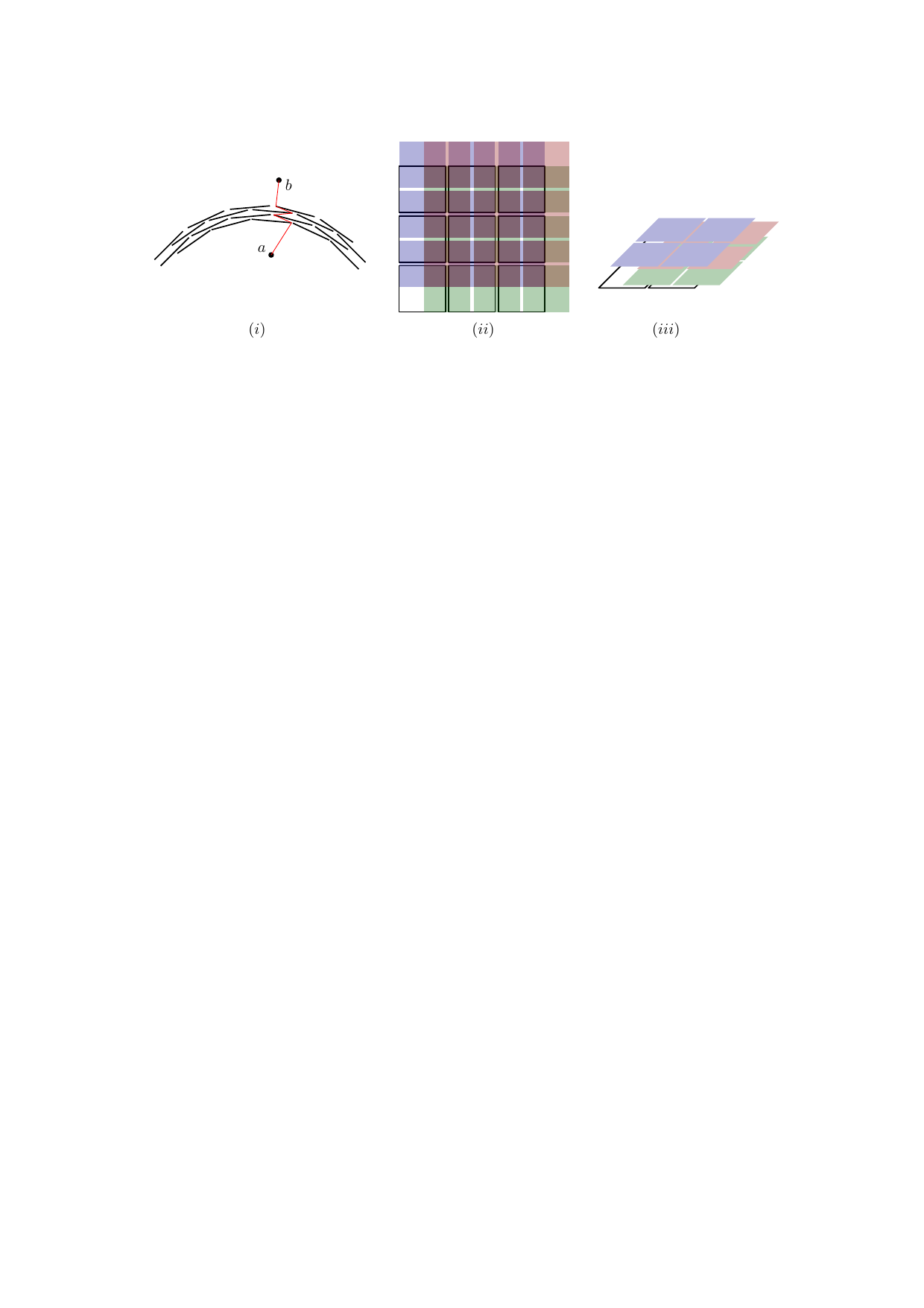}
\end{center}
\caption{(i) The construction of Dumitrescu and T\'oth~\cite{DumitrescuT23} for a wall in $\Reals^2$ consisting of convex obstacles. The shortest path from $a$ to $b$ must zigzag between the layers of segments. (ii) The view of a flat wall from $z=\infty$, with 4 shifts (iii) The flat wall construction viewed from a generic point. The planes containing the four types of squares have slightly differing $z$-coordinates, and they are sandwiched between the planes $z=0$ and $z=0.05$.}
\label{fig:dt}
\end{figure}

Our strategy to embed a metric space is as follows. We will define a custom closed surface $\cS$ embedded in $\Reals^3$, and map each point $x_i\in X$ to some point $f(x_i)$ inside. We then place a large collection of obstacles near the surface $\cS$ (within distance $1/2$) that ensures that any geodesic that would connect the inside and outside of $\cS$ must be very long, longer than the maximum distance in $(X,\dist_X)$. This means that the minimum distance between a point $f(x_i)$ and $f(x_j)$ must be realized by a geodesic that stays within $\cS$. By setting up $\cS$ correctly, we are able to give an upper and lower bound on the length of any geodesic connecting $f(x_i)$ and $f(x_j)$ so that we can ensure $\dist_X(x_i,x_j) \leq \dist_{\cT}(f(x_i),f(x_j))\leq  (1+\eps)\cdot \dist_X(x_i,x_j)$.\\

\noindent \textbf{Construction of the surface $\cS$.} Before discussing our obstacle placement, we first describe the construction of $\cS$ and the embedding $f$. (Note that the main text will define the surface $\cS$ last.) First, without loss of generality we assume that the minimum distance in $(X,\dist_X)$ is at least some large polynomial in $n:=|X|$, as we can use a scaling on the final construction to get the desired distance. The function $f$ simply assigns the points $x_i$ to equally spaced points along the $x$-axis. The surface $\cS$ will contain a cube $\cR_i$ centered at each point $f(x_i)$.

\begin{figure}
\begin{center}
\includegraphics[scale=0.8]{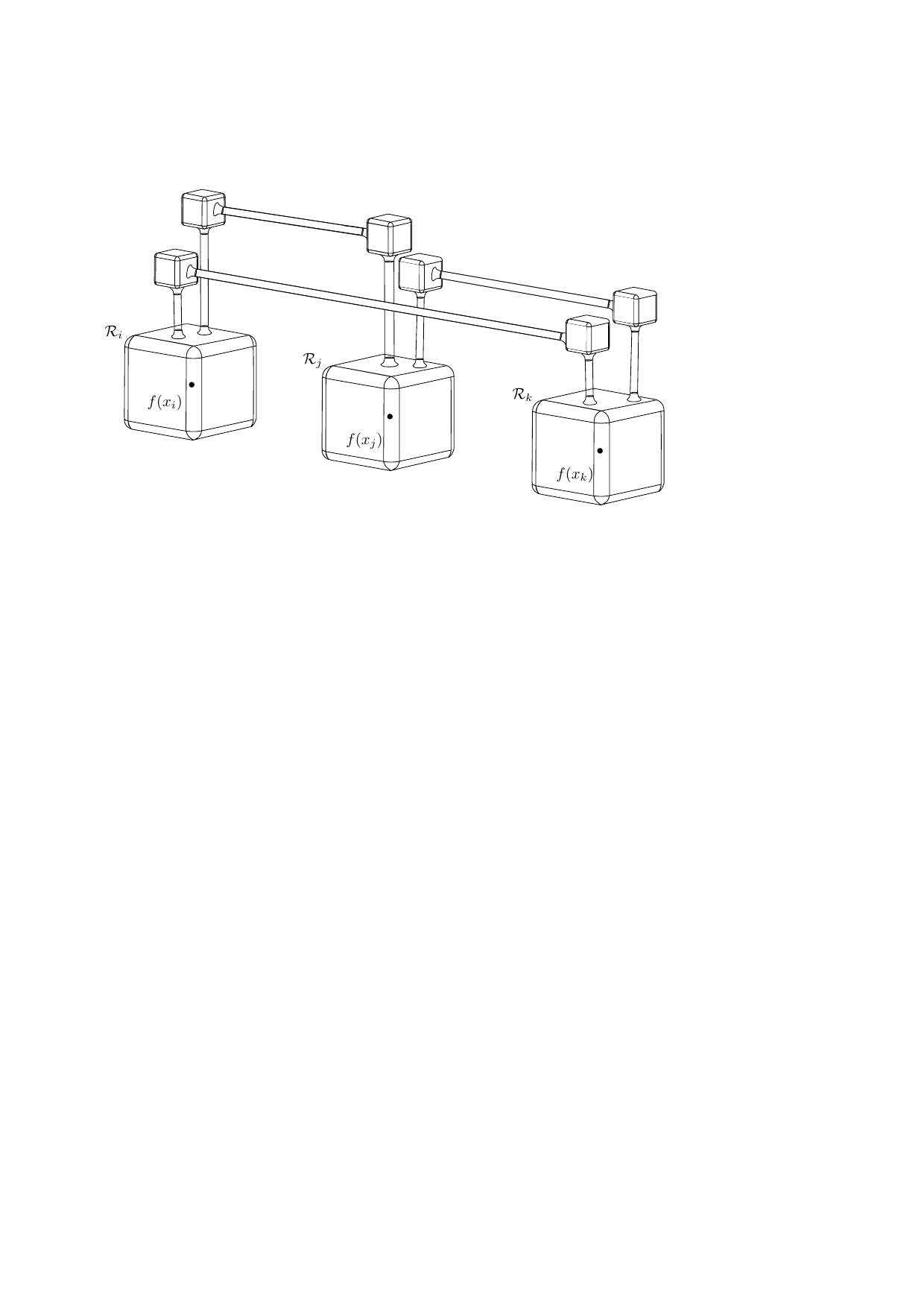}
\end{center}
\caption{The surface $\cS$ for a $3$-point metric space. The shortest geodesic from $f(x_i)$ to $f(x_j)$ is forced to go through some tubes, consisting of two vertical and a horizontal cylinder. The length of such a path can be adjusted by changing the length of the vertical cylinders.}
\label{fig:construction}
\end{figure}

For each pair $i,j$ we realize the distance of $\dist_X(x_i,x_j)$ via a sequence of three tubes (cylinders), where the first and third cylinder are vertical (parallel to the $z$-axis) and the second is parallel to the $x$-axis, see Figure~\ref{fig:construction}. The tubes connect a hole in the top of the cube $\cR_i$ to a hole on the cube $\cR_j$, and in order to realize the turns, we use two small cubes with holes that connect the first and second as well as the second and third tube in the sequence. Notice that the the length of the second horizontal tube is determined by $f(x_i)$ and $f(x_j)$, thus in order to set the length of the cube sequence, we can vary the length of the two vertical tubes. We place these tubes at different depths ($y$-coordinates) so that they remain disjoint from each other. One can see that this leads to a good approximate realization, however, this surface needs to be smoother (differentiable) in order to ease the placement of obstacles. Thus $\cS$ is defined using differentiable surface patches: we need a rounded cube and a differentiable joint that allows the connection of a flat cube face with a circular cylinder boundary.

\noindent\textbf{A flat wall example.} 
Let us consider an example of a simple placement of convex objects that increases the geodesic distance between the half-space $z<0$ and the half-space $z>1$ to significantly more than $1$. We call this construction a \emph{flat wall}. In this construction we use squares rather than triangles. Let $S_1$ be the set of axis-parallel squares of side length $0.99$ centered at the points $(a,b,0.01)$ for $a,b \in \mathbb{Z}$ in the $z=0.01$ plane. Repeat the same construction $3$ more times at heights $z=0.02,0.03,0.04$ by translating $S_1$ with the vectors $(1/2,0,0.01),(1/2,1/2,0.02),(0,1/2,0.03)$, to get the sets $S_2,S_3$, and $S_4$, respectively. See Figure~\ref{fig:dt}(ii) and (iii). Consider a geodesic connecting the planes $z=0$ and $z=0.05$: its intersection point $p$ with $z=0$ will be within $\ell_\infty$-distance $0.25$ to the center of some square in one of the four shifts, and thus it will need to have length at least $0.24$ to reach the boundary of this square before it can get to $z=0.05$. Using further copies of these squares shifted to different height ranges can increase the lower bound of $0.24$ as necessary. By making more layers that are packed more densely we can achieve any separation between the halfspaces $z<0$ and $z>1$.

While the above construction is quite intuitive, we need a more careful approach to make a similar separation near an embedded closed surface. One can see that the flat construction cannot be adapted to a surface with sharp edges, so we will need to use a smoother closed surface. Even then, we will not be able to place layers of convex objects arbitrarily close together, as convex $2$-dimensional shapes will not stay inside a curved surface. The challenge is to keep our objects in different layers disjoint, while creating a strong separation.

\noindent\textbf{Curved wall construction via placing triangle layers.} 
The surface $\cS$ that we define has the crucial property that it has offset surfaces (often called parallel surfaces) that are well-defined and differentiable for any distance offset $\delta\in [-1/2,1/2]$ from $\cS$. The offset surface at distance $\delta$ is denoted by $\cS(\delta)$. Since $\cS(\delta)$ is a closed surface, we are able to talk about the inside and outside of $\cS(\delta)$, and we can also define the region between two offsets $\cS(\delta)$ and $S(\delta')$ when $\delta<\delta'$.
In our construction, the offset $\delta$ will have a role similar to the $z$-coordinate in the flat wall construction above: we will place triangles tangent to $S(\delta)$ at some given collection of points. Using a large collection of different offsets, we can ensure the required separation between the inside of $\cS(-1/2)$ and the outside of $\cS(1/2)$.

\begin{figure}
\begin{center}
\includegraphics[scale=0.9]{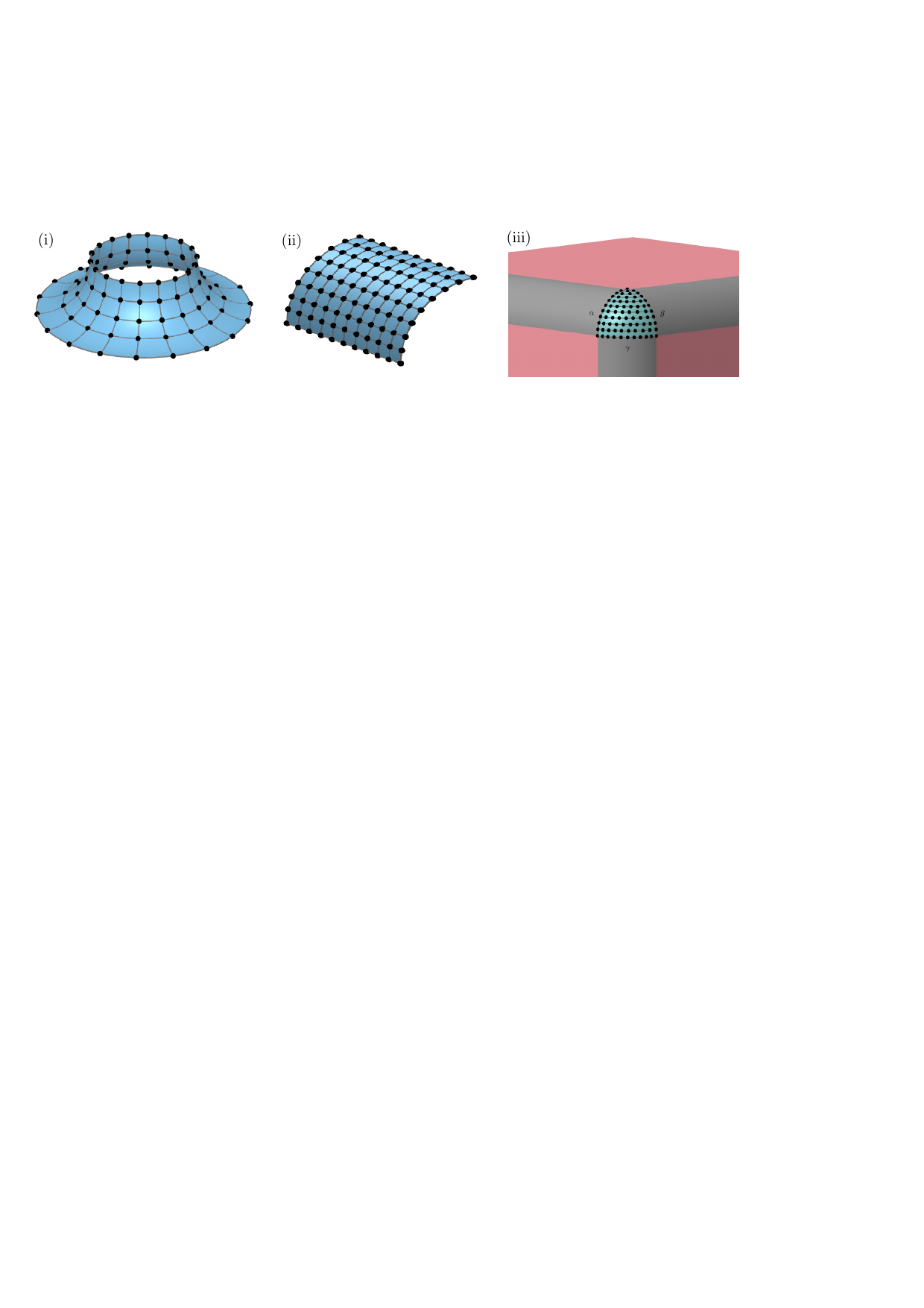}
\end{center}
\caption{ (i) A joint patch with its corresponding net. (ii) A quarter-cylinder patch with its corresponding net. (iii) A spherical triangle patch with its corresponding net.}
\label{fig:joint-cylinder}
\end{figure}


We need a handful of geometric properties from $\cS(\delta)$ to ensure that we can place our obstacles near it. The most important property is to limit their curvature\footnote{While our results could be presented with a more general differential geometric toolbox, the need to construct these $(a,b)$-nets prompted us to use specific surface patches that are easy to handle algorithmically. One could adapt our techniques to more general smooth embedded surfaces. As a trade-off our surfaces are not smooth (not even twice differentiable) at the shared boundary of our surface patches. We have also used the Euclidean distance function rather than measuring distances within these surfaces to make our construction more elementary and explicit.} in some sense. We require that at any point $p$ on $\cS(\delta)$ there exist two balls of radius at least $1/2$ whose unique intersection point with $\cS(\delta)$ and with each other is~$p$. This helps to ensure that a small $2$-dimensional convex shape of diameter $\mu\ll 1$ tangent to $\cS(\delta)$ at $p$ stays within distance $O(\mu^2)$ of $\cS(\delta)$. In particular, a small triangle will stay between the surfaces $\cS(\delta-\mu^2)$ and $\cS(\delta+\mu^2)$, and thus triangles of the same size tangent to $\cS(\delta-2\mu^2)$ and $\cS(\delta+2\mu^2)$ will remain disjoint from this triangle.

Finally, we must place the triangles in neighboring layers with ``shifts'' to ensure that a constant number of consecutive triangular layers of side-length $\mu$ triangles creates a separation of at least $\Omega(\mu)$. Here translation is not an option, so we define an \emph{$(a,b)$-net}, a weaker version of an $\eps$-net (in the metric space sense). 
For two points $u,v\in\mathbb{R}^3$ we will denote by $uv$ the line segment between $u$ and $v$, and by $|uv|$ the length of ${uv}$. An \emph{$(a,b)$-net} of a surface $\cS$ embedded in $\Reals^3$ is a point set $N\subset \cS$ where the pairwise distance of points in $N$ is at least~$a$, and for any point $q\in \cS$ there is a $p\in N$ such that $|pq|\leq b$.

We can show that the offsets of the net points provide a weaker but still useful net in the offset surfaces: a ``shared'' $(\Omega(\zeta), O(\zeta))$-net can be constructed for the surfaces $\cS(\delta)$. Here, ``shared'' means that we first find a $(\zeta,O(\zeta))$-net on $\cS$; sliding every net point along its surface normal gives, on each surface $\cS(\delta)$, an $(\Omega(\zeta), O(\zeta))$-net. These net points are then partitioned into constantly many classes, where the distances within points in each class $C$ are large enough so that equilateral triangles of side length $\mu=\Theta(\zeta)$ tangent to~$\cS(\delta)$ at each point of~$C$ will remain disjoint; i.e., these triangles would correspond to a single square set $S_1$ from our flat wall example. Using all of these partition classes in different offsets $\cS(\delta+i\zeta)$ ensures a separation of $\Omega(\zeta)$ for geodesics passing between these layers, as seen with the four shifts of squares in our flat wall example. In our construction we must set $\zeta$ small enough to be able to accommodate several iterations of the above construction between $\cS(-1/2)$ and $\cS(1/2)$, but large enough to get the desired length lower bound for any geodesic going from the inside of $\cS(-1/2)$ to the outside of $\cS(1/2)$. This concludes the generic construction of the curved wall, as well as the overview of the ideas behind~\Cref{thm:approxrealizemetric}.


\section{Doubling dimension among convex fat disjoint obstacles}\label{sec:doubling}

We denote by $B(p,r)$ the Euclidean ball of radius $r$ centered at $p\in \mathbb{R}^d$, and by $B_\mathcal{T}(p,r)$ the geodesic ball of radius $r$ centered at  $p\in \cF_\cT$.

We recall the following key result by Chew~\etal~\cite{chew2002} on convex fat objects. Based on this, we will subsequently show that the free space induced by a set of convex,fat and disjoint obstacles in $\mathbb{R}^d$ has bounded doubling dimension, under the shortest-path metric.

\begin{lemma}\label{le:boundary-path}
Let $\cT$ be a set of convex, $\alpha$-fat and disjoint obstacles in $\Reals^d$. Then, for any $u,v\in \cF_\cT$, we have that $\dist_\cT(u,v)\leq \beta \cdot |uv|$, where $\beta$ is a constant depending on $d$ and $\alpha$ defined as follows:
\[
\beta =
\begin{cases}
1 + \frac{4}{\pi\alpha}, & \text{for } d = 2, \\
1 + \frac{8d^d}{\alpha}, & \text{for } d \geq 3.
\end{cases}
\]
\end{lemma}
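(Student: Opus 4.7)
The plan is to convert the straight segment $\sigma = uv$ into an obstacle-avoiding curve by locally detouring around every obstacle that $\sigma$ penetrates. Let $T_1,\ldots,T_k$ be the obstacles crossed by $\sigma$, listed in order, and let $p_i, q_i$ be the entry and exit points of $\sigma$ through $T_i$, with $\ell_i := |p_iq_i|$. Since the obstacles are pairwise disjoint, the intervals $[p_iq_i]$ are disjoint subintervals of $\sigma$, so $\sum_i \ell_i \le |uv|$. I will replace each chord $p_iq_i$ by a detour $\gamma_i$ running just outside $\partial T_i$; pairwise disjointness means that by pushing $\gamma_i$ to an arbitrarily small offset of $\partial T_i$ we stay inside $\cF_\cT$. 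Granted a per-obstacle bound $|\gamma_i| \le \beta\,\ell_i$ with the $\beta$ from the statement, the concatenated path has length
\[
 |uv| - \sum_i \ell_i + \sum_i |\gamma_i| \;\le\; |uv| + (\beta-1)\sum_i \ell_i \;\le\; \beta\,|uv|,
\]
and since $\dist_\cT(u,v)$ is at most the length of any such curve, the lemma follows.

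Thus everything reduces to a one-obstacle statement: for a convex $\alpha$-fat body $T\subset\Reals^d$ and any chord $pq\subset T$, there is a path on $\partial T$ from $p$ to $q$ of length at most $\beta(d,\alpha)\cdot|pq|$. For $d=2$ I would take $\gamma$ to be the shorter of the two boundary arcs between $p$ and $q$, and bound its length by combining two ingredients: (a) Cauchy's perimeter formula gives $\mathrm{perim}(T) \le 2\pi r_{circum}(T) \le (2\pi/\alpha)\, r_{in}(T)$; (b) convexity plus the inscribed disk couple a lower bound on $\ell$ to how far the chord lies from the inscribed disk's center. Separating the case where the chord passes through (or near) the inscribed disk from the case where it merely grazes $T$, a short trigonometric computation should yield the constant $1+4/(\pi\alpha)$ stated for $d=2$.

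For $d\ge 3$ the boundary $\partial T$ is a $(d-1)$-dimensional surface, and a natural strategy is to reduce to the planar case: slice $T$ with a $2$-plane $H$ containing the chord $pq$, chosen so that the cross-section $T\cap H$ is as fat as possible, then detour along $\partial(T\cap H)$ in $H$ and apply the 2D bound. The fatness of $T\cap H$ can be controlled because the inscribed ball of $T$ (of radius at least $\alpha\, r_{circum}(T)$) intersects $H$ in a disk whose radius is comparable to $\alpha\, r_{circum}(T)$ for a carefully chosen $H$. The worst-case loss over this choice, compounded across $d$ coordinate directions, is what I expect to produce the $d^d$ factor in the final constant; alternatively one can cover $T$ by $d^{O(d)}$ axis-aligned boxes and detour coordinate-wise around each box, which more directly yields a bound of the form $O(d^d/\alpha)$.

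The main obstacle is exactly the per-obstacle bound, and specifically the behavior on grazing chords: a chord that barely clips the obstacle has $\ell_i\to 0$, so the ratio $|\gamma_i|/\ell_i$ is dangerous unless one shows that the corresponding arc length also shrinks at the same rate. This is precisely where fatness is used: without it, a convex obstacle can be made arbitrarily needle-shaped and the ratio becomes unbounded, whereas $\alpha$-fatness forces every projection and every local slice to remain reasonably round. Extracting this into the clean explicit constants $1+4/(\pi\alpha)$ and $1+8d^d/\alpha$ is the finicky part of the argument, but the qualitative picture—convexity lets us go around, and fatness controls how far around we must go—is robust and drives the whole proof.
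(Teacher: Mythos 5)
The paper does not actually prove this lemma: it cites it from Chew \emph{et al.}~[chew2002] and uses it as a black box in Lemma~\ref{le:doubling-space}, so there is no in-paper proof to compare your attempt against. That said, your top-level reduction is the standard and correct skeleton for such a statement: replace the straight segment $uv$ by local detours around each obstacle it pierces, use pairwise disjointness to conclude the entered/exited subintervals $[p_i q_i]$ are disjoint along $uv$ (hence $\sum_i \ell_i \le |uv|$) and that each detour can be taken at an arbitrarily small offset of $\partial T_i$ while staying in $\cF_\cT$, and then reduce everything to a single-obstacle estimate of the form $|\gamma_i| \le \beta\,\ell_i$. The arithmetic you do with this reduction is correct.

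The remaining gap is the per-obstacle bound, as you yourself say, and two things are worth flagging. First, your $d\ge 3$ slicing idea is closer to a clean proof than you give it credit for, and the pessimism about "compounded loss across $d$ coordinate directions" is misplaced: given the chord $pq$ and the center $o$ of an inscribed ball of $T$, take $H$ to be the $2$-plane spanned by $p$, $q$, $o$ (perturb if collinear). Then $T\cap H$ contains a disk of radius $r_{in}(T)$ and sits inside a disk of radius $r_{circum}(T)$, so it is $\alpha$-fat with no loss at all, its relative boundary lies on $\partial T$, and the $2$D constant applies verbatim for every $d$. This in fact gives something stronger than the stated $1+8d^d/\alpha$, which is fine since the lemma is only an upper bound, but you should state the choice of $H$ explicitly rather than gesture at it. Second, your $2$D sketch still omits the key quantitative ingredient that handles the grazing chords you correctly identify as dangerous: $\alpha$-fatness bounds the sharpness of every corner of a convex body. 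Concretely, if $T\subset \Reals^2$ has a vertex with interior angle $\phi$, the inscribed-ball center lies at distance at least $r_{in}/\sin(\phi/2)$ from that vertex, while $\diam(T)\le 2r_{circum}$, so $\sin(\phi/2)\ge \alpha/2$. The Cauchy perimeter bound alone cannot rescue the grazing case because it is a global statement, whereas the dangerous chords are local; it is this corner-angle bound (or an equivalent local statement on smooth boundary pieces) that keeps $|\gamma_i|/\ell_i$ bounded as $\ell_i\to 0$. Pinning that down and then doing the trigonometry to recover the precise constant $1+4/(\pi\alpha)$ is the work that remains.
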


Since a set $\cT$ of convex, fat and disjoint obstacles distorts distances by a constant factor, it follows that the doubling dimension of $(\cF_\cT,\dist_\cT)$ will be bounded; we provide a proof for completeness.
\begin{lemma}\label{le:doubling-space}
    Let $\cT$ be a set of convex, $\alpha$-fat and disjoint obstacles in $\Reals^d$. Then the metric space $(\cF_\cT,\dist_\cT)$ has doubling dimension $O(d(d\log d + \log(1/\alpha)))$.
\end{lemma}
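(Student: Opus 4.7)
The plan is to use \Cref{le:boundary-path} as a quasi-isometry statement between Euclidean and geodesic distance, reducing the doubling dimension question to the well-known Euclidean covering number. Indeed, for all $u,v\in\cF_\cT$ we have $|uv|\leq \dist_\cT(u,v)\leq \beta|uv|$, so $B_\cT(p,r)\subseteq B(p,r)\cap\cF_\cT$, and conversely any Euclidean ball $B(c,s)$ has geodesic diameter at most $2\beta s$ inside the free space.

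Concretely, I would proceed as follows. First, I would cover the Euclidean ball $B(p,r)$ by Euclidean balls of radius $s=r/(4\beta)$. The standard volume argument for the doubling constant of $\Reals^d$ shows that $N\leq (1+8\beta)^d$ such balls suffice. Next, discard every covering ball that does not meet $B_\cT(p,r)$; for each remaining covering ball $B(c_i,s)$, pick a point $c_i'\in B(c_i,s)\cap B_\cT(p,r)\subseteq \cF_\cT$. Then for any $q\in B(c_i,s)\cap\cF_\cT$, the Euclidean distance satisfies $|c_i'q|\leq 2s=r/(2\beta)$, and applying \Cref{le:boundary-path} yields
\[
\dist_\cT(c_i',q)\;\leq\;\beta\cdot |c_i'q|\;\leq\;\tfrac{r}{2},
\]
so $B(c_i,s)\cap\cF_\cT\subseteq B_\cT(c_i',r/2)$. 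Taking the union over the retained $c_i'$ gives a cover of $B_\cT(p,r)$ by at most $(1+8\beta)^d$ geodesic balls of radius $r/2$.

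Taking logarithms, the doubling dimension is at most $d\log_2(1+8\beta)$. Finally, I would plug in the value of $\beta$ from \Cref{le:boundary-path}: for $d\geq 3$ we have $\beta = 1 + 8d^d/\alpha$, so $\log_2\beta = O(d\log d + \log(1/\alpha))$, giving the claimed bound $O(d(d\log d+\log(1/\alpha)))$. The $d=2$ case is absorbed into the same asymptotic. There is no real obstacle here; the only sanity check is that the covering centers can be chosen in $\cF_\cT$, which is automatic because we retain only those Euclidean balls that meet $B_\cT(p,r)\subseteq \cF_\cT$.
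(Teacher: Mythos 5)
Your proof is correct and takes essentially the same approach as the paper: reduce to the Euclidean doubling property via the quasi-isometry of Lemma~\ref{le:boundary-path}, cover $B(p,r)$ by Euclidean balls of radius $r/(4\beta)$, and replace each by a geodesic ball of radius $r/2$ centered at a free-space point inside it. The only cosmetic difference is that you bound the covering number directly by the volume argument $(1+8\beta)^d$ whereas the paper iterates the Euclidean doubling constant $\log(2\beta)+1$ times; both yield the stated $O(d(d\log d+\log(1/\alpha)))$.
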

\begin{proof}
Let $B_\cT(p,r)$ be a geodesic ball centered at $p \in \mathcal{F}_\mathcal{T}$. Note that $B_\cT(p,r)\subseteq B(p,r)$.  We will construct a covering of $B_\cT(p, r)$ by geodesic balls of radius $r/2$. Towards that, we cover $B(p, r)$ by $c_d$ Euclidean balls of radius $r/2$, where $\log(c_d)=\Theta(d)$ is the doubling dimension of $\mathbb{R}^d$~\cite[Ch. 10]{heinonen2001metric}. Repeating this process recursively $\log(2\beta) + 1$ times, the ball $B(p, r)$ is covered by $c_d^{\log(2\beta)+1}$ Euclidean balls of radius $r / (2^{\log(2\beta)+1}) = r / (4\beta)$. Let $\mathcal{B}$ denote the resulting collection of Euclidean balls. Since $B_\cT(p,r)\subseteq B(p,r)$, this collection also covers $B_\cT(p,r)$.

We transform $\cB$ into a collection $\mathcal{B}_\cT$ of geodesic balls  in $(\mathcal{F}_\mathcal{T}, \mathrm{dist}_\mathcal{T})$ of radius $r/2$, such that the union of these geodesic balls covers $B_\cT(p, r)$. For each Euclidean ball $B(q,r/(4\beta)) \in \mathcal{B}$, we handle the following cases:

1. If $B(q, r / (4\beta)) \cap \mathcal{F}_\mathcal{T} = \emptyset$, then we ignore $B(q, r / (4\beta))$ in the cover.

2. If $B(q, r / (4\beta)) \cap \mathcal{F}_\mathcal{T} \neq \emptyset$, let $o \in B(q, r / (4\beta)) \cap \mathcal{F}_\mathcal{T}$. Include the geodesic ball $B_\cT(o, r/2)$ in $\mathcal{B}_\cT$. To verify coverage, note that if $s \in B(q, r / (4\beta)) \cap \mathcal{F}_\mathcal{T}$, then:
   \[
   |os| \leq \frac{r}{2\beta},
   \]
   By Lemma~\ref{le:boundary-path}, the geodesic distance between $o$ and $s$ satisfies:
   \[
   \mathrm{dist}_\mathcal{T}(o, s) \leq \beta \cdot |os| \leq \beta \cdot \frac{r}{2\beta} = \frac{r}{2}.
   \]
   Hence, $s \in B_\cT(o, r/2)$, and $B(q, r / (4\beta)) \cap \mathcal{F}_\mathcal{T} \subseteq B_\cT(o, r/2)$.

Each ball in $\mathcal{B}_\cT$ has radius $r/2$, and the union of these geodesic balls covers $B_\cT(p, r)$. The total number of geodesic balls in $\mathcal{B}_\cT$ is at most $c_d^{\log(2\beta)+1}$. Therefore, the doubling dimension of $(\mathcal{F}_\mathcal{T}, \mathrm{dist}_\mathcal{T})$ is given by 
\[
\log(c_d^{\log(2\beta)+1}) = (\log(2\beta)+1)\cdot\log(c_d) = O(d(d\log d + \log(1/\alpha))).\qedhere
\]
\end{proof}

\section{Realizing metric spaces with convex disjoint obstacles} \label{sec:main}

This section contains the main result of our paper. We show that by using only convex and disjoint obstacles in $\Reals^3$, it is possible to realize any metric space approximately. 

\subsection{Surface patches}\label{sec:patches}

We will consider surfaces embedded in $\Reals^3$ that are differentiable (i.e., have well-defined normals at each point) described by differentiable surface patches. More precisely, our surface $\cS$ will consist of the following types of surface patches:
\begin{itemize}
    \item Axis-parallel squares of side length at least $1$ with 0 or more disjoint circular holes of radius $2$, such that each hole is at distance at least $2$ from the square, and the hole centers have pairwise distance at least $6$. We will refer to such a patch as a \emph{square patch}.
    \item Sections of a sphere of radius $1$ cut out by axis-parallel planes through the center of the sphere. These sections correspond to octants of spheres and will be referred to as \emph{spherical triangle patches}.
    \item Cylinders of axis length at least $1$ and radius $1$, or their sections of angle $\pi/2$ around their axes, where the cylinder axis is parallel to some coordinate axis. We will refer to such a patch as a \emph{cylinder patch} and a \emph{quarter-cylinder patch}, respectively.
    \item \emph{Joints}, which are defined as a surface of revolution obtained by rotating the quarter circular arc $\{(x,y) \mid x^2+y^2=1, x\geq 0, y\geq 0\}$ around the axis $x=2$. We only allow rotations where the axis of revolution is parallel to some coordinate axis. We will refer to such a patch as a \emph{joint patch}.
\end{itemize}
Notice in particular that a joint patch can be used to connect a circular hole of a square patch to a cylinder patch, and the quarter cylinder  and spherical triangle patches can be used to "round" an axis parallel cube, i.e., to get the boundary of $C\oplus B_1$, where $C$ is an axis-parallel cube and $B_1$ is a ball of radius $1$, and $\oplus$ denotes the Minkowski sum.

We say that a surface $\cS$ is a \emph{patchwork} if it is a closed differentiable surface that is the union of some patches of the above types.

Let $\cS$ be a patchwork, and let $\bn: \cS\rightarrow \sph^2$ denote its normal bundle. For a fixed $\delta\in (-1/2,1/2)$ we define the \emph{offset surface of $\cS$ at distance $\delta$}, denoted by $\cS(\delta)$, as $\cS(\delta):=\{p+\delta\bn(p)\mid p\in \cS\}$. Because of the restriction of $|\delta|< 1/2$ we have that $\cS(\delta)$ is also a differentiable surface, and in fact it consists of patches of the following types, which we will call $\delta$-patches:
\begin{itemize}
    \item Square patches (with or without holes of radius $2$).
    \item Sections of a sphere of radius $1+\delta$ cut out by axis-parallel planes through the center of the sphere.  
    \item Cylinders of radius $1+\delta$ or their sections of angle $\pi/2$ around their axes, where the axis is parallel to some coordinate plane.   
    \item Joints, which are obtained by rotating the quarter circular arc $\{(x,y) \mid x^2+y^2=(1-\delta)^2, x\geq 0, y\geq 0\}$ around the axis $x=2$; again we only allow rotations where the axis of revolution is parallel to some coordinate axis.
\end{itemize}

 We note that when $\cS(\delta)$ is a bounded closed surface of the above type (i.e., it is finite and has no boundary), $\Reals^3\setminus \cS(\delta)$ consists of two disjoint connected components, one of which is bounded, called \emph{the inside of $\cS(\delta)$}, and the other is unbounded, called \emph{the outside of $\cS(\delta)$}. Moreover, when $\delta<\delta'$, then $\cS(\delta)$ is contained in the inside of $\cS(\delta')$, and the set of points \emph{between} $\cS(\delta)$ and $\cS(\delta')$ are those that are outside $\cS(\delta)$ and inside $\cS(\delta')$.

We call a differentiable surface $\cS$ \emph{$r$-touchable}\footnote{On a smooth surface this would correspond to bounding the principal curvature in the interval $[-1/r,1/r]$.} if for any $p\in \cS$ and any ball $B$ of radius $r$ tangent to $\cS$ at $p$ we have that $B$ intersects $\cS$ only at $p$. We say that a patchwork is a \emph{$1$-patchwork} if it is $1$-touchable.
Note that if $\cS$ is a $1$-patchwork then the offsets $\cS(\delta)$ of $\cS$ are $r$-touchable for any $r\leq 1-|\delta|$. In particular, since we will always have $|\delta|\leq 1/2$, all offsets $S(\delta)$ will be $1/2$-touchable.

\begin{lemma}\label{lem:between-offsets}
    Let $\cS(\delta)$ be an offset of a $1$-patchwork surface $\cS$. Let $H_p$ be the plane tangent to $\cS(\delta)$ at $p$ and let $\eps$ be such that $|\delta|<\frac{1}{2}-2\eps^2$. Then, for any $q\in H_p$ where $|pq|\leq \eps$ we have that $q$ lies between $\cS(\delta-2\eps^2)$ and $\cS(\delta+2\eps^2)$.
\end{lemma}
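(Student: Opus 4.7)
The plan is to use the $1$-touchability of $\cS$ to sandwich $\cS(\delta)$ locally between two balls tangent to $H_p$ at $p$, read off a quadratic height bound that places $q$ within distance $2\eps^2$ of $\cS(\delta)$, and then identify this tubular neighbourhood with the region between $\cS(\delta-2\eps^2)$ and $\cS(\delta+2\eps^2)$.

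First, since $\cS$ is a $1$-patchwork, the remark just before the lemma gives that $\cS(\delta)$ is $r$-touchable for every $r\le 1-|\delta|$, and the assumption $|\delta|<\frac{1}{2}-2\eps^2$ lets me take $r:=\frac{1}{2}+2\eps^2$. I would place the two balls $B^{\pm}:=B(p\pm r\bn(p),r)$ of radius $r$ tangent to $H_p$ at $p$ from opposite sides. By $r$-touchability each of them meets $\cS(\delta)$ only at $p$, so near $p$ the surface $\cS(\delta)$ is disjoint from the interiors of $B^+$ and $B^-$. Choosing coordinates with $p$ at the origin, $H_p$ as the $xy$-plane and $\bn(p)$ along the positive $z$-axis, the "outside both balls" condition forces $\cS(\delta)$ to be locally the graph of a function $z=h(x,y)$ satisfying $|h(x,y)|\le r-\sqrt{r^2-(x^2+y^2)}$. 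The elementary inequality $r-\sqrt{r^2-t^2}\le t^2/r$ (valid for $t\le r$) then yields $|h(x,y)|\le (x^2+y^2)/r$. Writing $q=(x_0,y_0,0)$ with $x_0^2+y_0^2=|pq|^2\le\eps^2$, we get $|h(x_0,y_0)|\le \eps^2/r<2\eps^2$, where the strict inequality uses $r>\frac{1}{2}$. Hence the point $q':=(x_0,y_0,h(x_0,y_0))$ lies on $\cS(\delta)$ with $|qq'|<2\eps^2$, and therefore $\dist(q,\cS(\delta))<2\eps^2$.

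Finally, I would turn this distance bound into the statement that $q$ lies between the two offsets. Because $\cS(\delta)$ is $r$-touchable with $r>2\eps^2$, its reach is at least $r$; combined with the definition $\cS(\delta)=\{p+\delta\bn(p):p\in\cS\}$ and the fact that $\cS(\delta)$ inherits the normals of $\cS$, this yields the iteration identity that offsetting $\cS(\delta)$ by $\eta$ produces exactly $\cS(\delta+\eta)$ for every $|\eta|<r$. Consequently the open tube $\{x:\dist(x,\cS(\delta))<2\eps^2\}$ is foliated by the level sets $\cS(\delta+\eta)$ for $\eta\in(-2\eps^2,2\eps^2)$ and coincides with the region strictly between $\cS(\delta-2\eps^2)$ and $\cS(\delta+2\eps^2)$. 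The bound $\dist(q,\cS(\delta))<2\eps^2$ therefore places $q$ strictly between the two offsets, as required. The main subtlety I expect is this last step: showing that the tubular-neighbourhood picture for $\cS(\delta)$ really is global (and not only local near $p$), which should follow from a reach argument for $C^1$ patchworks, or can be verified directly on each of the patch types listed in Section~\ref{sec:patches}.
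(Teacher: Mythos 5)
Your proof is correct and follows essentially the same route as the paper: sandwich $\cS(\delta)$ at $p$ between two tangent balls afforded by $1$-touchability, read off the quadratic bound $\dist(q,\cS(\delta))\le 2\eps^2$, and conclude via the offset/foliation picture. The one difference is cosmetic — you parametrize $\cS(\delta)$ locally as a graph over $H_p$ and use balls of radius $\tfrac12+2\eps^2$, while the paper intersects the line through $q$ parallel to $\bn(p)$ with two balls of radius $\tfrac12$ and reads off $|q_1q|\le 2|pq|^2$ from the right triangle $\triangle oq_1q_3$; you are also more explicit about the final step (that the $2\eps^2$-neighbourhood of $\cS(\delta)$ coincides with the region between $\cS(\delta-2\eps^2)$ and $\cS(\delta+2\eps^2)$), a point the paper leaves implicit.
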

\begin{figure}
\begin{center}
\includegraphics{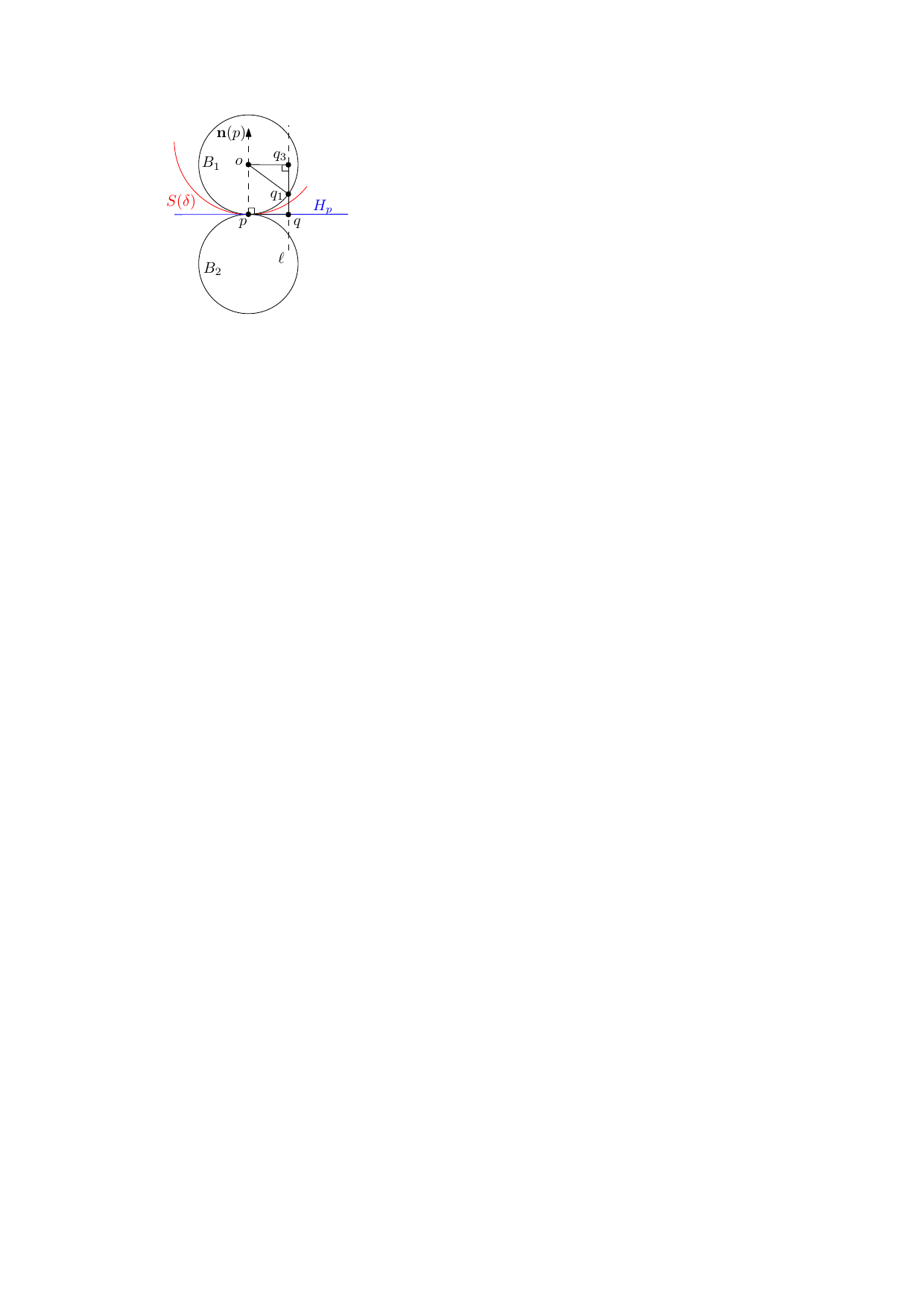}
\end{center}
\caption{The plane containing a touching segment $pq$ and the normal of $\cS(\delta)$ at $p$. The segment $pq$ will remain very close to $S(\delta)$.}
\label{fig:between-offsets}
\end{figure}
\begin{proof}
 
Note that due to the condition $|\delta|<\frac{1}{2}-2\eps^2$, the offset surfaces $\cS(\delta-2\eps^2)$ and $\cS(\delta+2\eps^2)$ are well-defined. Moreover, we have that $\frac{1}{2}-2\eps^2>0$, or equivalently that $\eps<\frac{1}{2}$.

For the following refer to Figure~\ref{fig:between-offsets}.
Consider the tangent balls $B_1,B_2$ of radius $\frac{1}{2}$ that touch $\cS(\delta)$ at $p$ and consider the line $\ell$ through $q$ parallel to $\bn(p)$. Since $|pq|=\eps<\frac{1}{2}$, the line $\ell$ intersects $B_1$. Let $q_1$ denote the point closest to $q$ in $B_1 \cap \ell$. Similarly, $\ell$ will also intersect $B_2$ and let $q_2$ denote the point closest to $q$ in $B_2 \cap \ell$. Because $\cS(\delta)$ is $1/2$-touchable, we have that the interiors of $B_1$ and $B_2$ cannot both be inside or both be outside $\cS(\delta)$, thus $q_1q_2$ intersects $\cS(\delta)$. In particular, ${q_1q}$ or ${q_2q}$ has to intersect $\cS(\delta)$; we assume without loss of generality that ${q_1q}$ intersects $\cS(\delta)$ at a point $q^*$. Since $|qq^*|<|q_1q|$, it suffices to show that $|q_1q|\leq 2\eps^2$.

Let $o$ denote the center of $B_1$ and choose $q_3$ in the extension of ${q_1q}$ towards $q_1$, such that $|q_3q|= \frac{1}{2}$. Then in the right triangle $\triangle{oq_1q_3}$ we have:
$|pq|^2+\left(\frac{1}{2}-|q_1q|\right)^2 = \left(\frac{1}{2}\right)^2$, which gives: 
$|q_1q| = \frac{1}{2} - \sqrt{\left(\frac{1}{2}\right)^2 - |pq|^2}$. (The other root of the equation would lead to $|q_1q|>1/2$, but this is not possible as $q_1q$ is shorter than the radius of $B_1$.)
Thus: 
\[
    |q_1q| =  \frac{1}{2} \cdot \left(1-\sqrt{1-(2|pq|)^2}\right)  \leq \frac{1}{2}\cdot (1-(1-(2|pq|)^2)) =  2|pq|^2  \leq 2\eps^2
\]
where in the second inequality above, we have used that $\sqrt{1-x^2} \geq 1-x^2$, for $x\in[0,1]$.
\end{proof}

\subsection{Nets, offsets, triangle layers}\label{sec:nets}

For the remainder of this section let $\cS$ be a $1$-patchwork surface, and let $|\delta|<1/2$. We start with two technical lemmas that construct nets.

\begin{restatable}{lemma}{lemEpsnet}\label{lem:epsnet}
For any $\zeta\leq 1/8$, we can compute a $(\zeta,8\zeta)$-net of size $O(\area(\cS)/\zeta^2)$ on $\cS$ in $O(\area(\cS)/\zeta^2)$ time.    
\end{restatable}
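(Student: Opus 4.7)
The plan is to build the net by placing candidate points on each of the finitely many surface patches of $\cS$, taking the union, and then greedily extracting a $\zeta$-separated subset. Since a patchwork is made up of only five parametric types of patches (squares, spheres, cylinders, quarter-cylinders, and joints), each type can be handled explicitly via its natural parameterization.

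For each patch $P$, I construct a candidate set $G_P$ of size $O(\area(P)/\zeta^2)$ whose Euclidean covering radius on $P$ is at most $c\zeta$ for a small absolute constant $c$. For a square patch I use the nodes of an axis-parallel $\zeta$-spaced grid intersected with the patch; for a cylinder or quarter-cylinder patch I parameterize by $(\theta,z)$ and use an arc-length $\zeta$-spaced grid in each direction; for spherical triangle patches and joint patches I use the analogous latitude-longitude style grid with arc-length spacing $\zeta$. Since every patch in a $1$-patchwork has principal radius of curvature at least $1$, and we are in the regime $\zeta\leq 1/8$, the Euclidean distance between points at intrinsic distance $\leq \zeta$ differs from the intrinsic distance by only an $O(\zeta^3)$ sagitta, so the Euclidean covering radius of $G_P$ on $P$ is at most some small absolute constant times $\zeta$.

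I then set $G=\bigcup_P G_P$, so $|G|=O(\area(\cS)/\zeta^2)$, and apply a standard greedy pruning: process the points of $G$ in any order, adding a point to $N$ if and only if its Euclidean distance to every point already in $N$ is at least $\zeta$. The output $N$ is $\zeta$-separated by construction, and $|N|\leq |G|=O(\area(\cS)/\zeta^2)$. For the covering property, pick any $q\in \cS$ and let $g\in G$ satisfy $|q-g|\leq c\zeta$; if $g\in N$ we are done, and otherwise some $n\in N$ with $|g-n|<\zeta$ must already have been present when $g$ was processed, so $|q-n|<(c+1)\zeta$, which is safely at most $8\zeta$ for any reasonable choice of per-patch grids.

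For the running time, I store the growing $N$ in a spatial hash with cells of side $\zeta$; since $N$ is $\zeta$-separated, every $O(\zeta)$-neighborhood in $\Reals^3$ contains only $O(1)$ points of $N$, so each candidate $g\in G$ is tested against $O(1)$ already-inserted points, and the total cost of the pruning is $O(|G|)=O(\area(\cS)/\zeta^2)$; enumerating the per-patch grids takes the same time. The main delicate step will be bounding the per-patch constant $c$ for the curved patches (spheres, cylinders, joints), where $1$-touchability and the bound $\zeta\leq 1/8$ are crucially used to keep the gap between intrinsic and Euclidean distances small enough that $c+1\leq 8$; handling the patch boundaries consistently across neighboring patches is a minor bookkeeping issue, since any boundary point is already covered by the grid of an adjacent patch.
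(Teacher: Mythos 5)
Your approach is correct for the statement as written, but it is a genuinely different route from the paper's. The paper builds an explicit grid-like net on each patch type (square, cylinder, quarter-cylinder, joint, spherical triangle) with the crucial design choice that the per-patch nets \emph{align along shared boundaries}; this gives a $(\zeta,8\zeta)$-net directly, with no pruning step, and the net on each curved patch factors as $I(A,B)$ where $A$ and $B$ are families of segments/arcs/circles. You instead generate an over-dense candidate set per patch, union them, and greedily prune to $\zeta$-separation, using a spatial hash for the time bound. Both yield a $(\zeta,8\zeta)$-net of the right size and running time, and your $O(\zeta^3)$ sagitta estimate is the correct way to control chord-vs.-arc discrepancy given $1$-touchability.

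What each approach buys: yours is shorter, more generic (it would work for any surface class with cheap candidate-point generators), and makes boundary bookkeeping trivial since near-duplicate candidates from adjacent patches are simply pruned. The paper's explicit construction is deliberately more structured: the very next two lemmas (\Cref{lem:epsnet_shift} and especially \Cref{lem:epsnet_neighbors}) lean on this structure. The proof of \Cref{lem:epsnet_neighbors} bounds the number of net points in $B(p,t\zeta)$ by counting, for each curved patch, the $O(t)$ arcs/segments in each of $A$ and $B$ that the ball can reach -- an argument tied to the $I(A,B)$ form. Your pruned net would also satisfy the $O(t^2)$ packing bound (a $\zeta$-separated subset of a $1/2$-touchable surface admits a standard area/packing argument), but that argument would have to be made separately; it is not ``for free'' as it is for the paper's net. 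So: correct for this lemma, but if you were reconstructing the section you would need to re-derive the downstream packing lemma by a different (packing) argument rather than inheriting it from the net's structure.

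One small point worth being explicit about: you argue the spatial hash makes pruning $O(|G|)$ because each $O(\zeta)$-ball contains $O(1)$ points of the growing $N$. That is a packing argument in $\Reals^3$, which is fine here because $N$ is $\zeta$-separated in Euclidean distance (no surface geometry needed). Just make sure this is not conflated with the surface-side packing claim above, which does use $1$-touchability.
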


\begin{proof}
We will construct a $(\zeta,8\zeta)$-net on $\cS$, by constructing a $(\zeta,8\zeta)$-net  for each type of patch that $\cS$ consists of. These constructions will be such that the nets of two neighboring patches of $\cS$ align along their common boundary. The union of these nets will form a $(\zeta,8\zeta)$-net for $\cS$.

First, we create nets for line segments and circular arcs. Let $ab$ be a segment of length at least $1$. Then let $k=\lfloor |ab|/\zeta\rfloor$ and place $k+1$ points at equal distances  along $ab$, so let $p_i=a+\frac{i}{k}(b-a)$ for $i=0,\dots,k$. We denote by $P_{ab}$ this collection of points. Note in particular that $a,b\in P_{ab}$ and the minimum distance of points among $P_{ab}$ is at least $\zeta$ but less than $1.2 \zeta$ (as $|ab|/\zeta\geq 8$), so they form a $(\zeta,1.2\zeta)$-net on $ab$.

Consider now a circular arc $\gamma$ of radius $r\geq \zeta$ and angle $\pi/2$. Notice that for any point pair $a,b\in \gamma$ we have that the chord $ab$ corresponds to sub-arc $\gamma(a,b)$ of length between $|ab|$ and $1.2|ab|$. Now given $\zeta<1/8$, let $k=\lfloor |\gamma|/(1.2 \zeta) \rfloor$, and place $k+1$ points at equal arc lengths along $\gamma$. More precisely, we slice $\gamma$ into $k$ equal length sub-arcs, and let $P_\gamma$ denote the $k+1$ endpoints of these sub-arcs. Again, we observe that the endpoints of $\gamma$ are included in $P_\gamma$, and two consecutive points are at arc distance between $1.2 \zeta$ and $2.4 \zeta$. Therefore consecutive points along $\gamma$ have Euclidean distance at least $\zeta$ and at most $2.4\zeta$. We note moreover that for arcs of radius at least $1$ we have that (due to $\zeta\leq 1/8$) the maximum Euclidean distance of two consecutive points is less than $2\zeta$, so we get a $(\zeta,2\zeta)$ net when $r\geq 1$.

We are now ready to construct the $(\zeta,8\zeta)$-net on each patch type.
It will be convenient to denote, for a set of elements $A$ and a set of elements $B$, the set of intersections of elements from $A$ with elements from $B$ as follows: $I(A,B) = \{a\cap b: a\in A, b\in B\}$.

\begin{figure}[b]
\begin{center}
\includegraphics[scale=0.8]{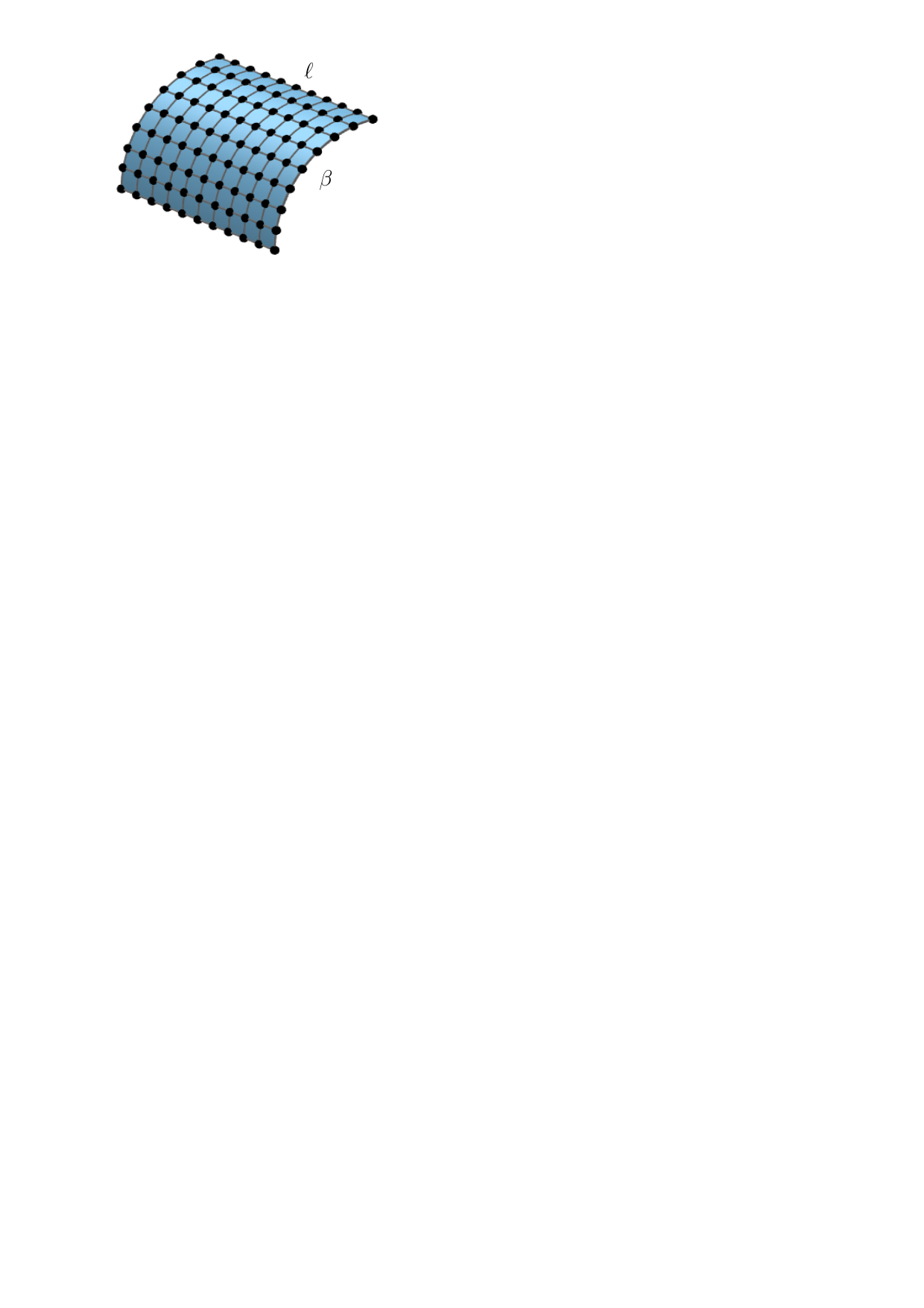}
\end{center}
\caption{ The net constructed on a quarter-cylinder patch.}
\label{fig:quarter-cylinder}
\end{figure}

\begin{itemize}

    \item \textbf{Quarter-cylinder patches.}
    
    Let $\beta$ denote one of the two boundary quarter-circular arcs of a quarter-cylinder $\cQ$. Let $\ell$ denote one of the two boundary segments of $\cQ$, as shown in Figure~\ref{fig:quarter-cylinder}. We place the point set $P_\ell$ on $\ell$ to form a $(\zeta,1.2\zeta)$-net, and the set $P_\beta$ on $\beta$ to form a $(\zeta,2\zeta)$-net. Let $B$ denote the translates of $\beta$ along $\ell$ to the points of $P_\ell$. Similarly, let $L$ denote the translates of $\ell$ along $\beta$ to the points of $P_\beta$.
    The set of points  $I(B,L)$ forms a $(\zeta,4\zeta)$-net on $\cQ$: by construction, any two arcs in $B$ and any two segments in $L$ have distance at least $\zeta$ and at most $2\zeta$. Therefore, any two points in $I(B,L)$ have distance at least $\zeta$ and from any point on $\cQ$ there is a point in $I(B,L)$ within distance $4\zeta$. The resulting net is shown in Figure~\ref{fig:quarter-cylinder}.  By construction, we have that $|I(B,L)| = |B|\cdot|L|= O(\frac{|\ell|\cdot|\beta|}{\zeta^2}) = O(\area(\cQ)/\zeta^2)$.

    \item \textbf{Cylinder patches.} 
    
    We obtain a $(\zeta,4.8\zeta)$-net for a cylinder patch by taking the union of four quarter-cylinder patches, such that their nets align along their common boundaries. By construction, the resulting set of points on the cylinder patch is a $(\zeta,4\zeta)$-net of the desired size.

    \item \textbf{Joint patches.} 
    
    Consider a joint $\cJ$ and let $\gamma,\gamma'$ denote its two circular boundaries, where $\gamma$ has radius $1$ and $\gamma'$ has radius $2$.  We place net points on $\gamma$ in the same way as for the base of a cylinder patch, which defines the point set $P_\gamma$. Let $\beta_0$ denote the quarter-circular arc that we rotated to obtain $\cJ$. For each point $p$ placed on $\gamma$, we draw on $\cJ$ the copy of $\alpha$ that goes through $p$. Let $B$ denote the resulting set of arcs.
    
    On each arc $\beta \in B$ we place a corresponding $(\zeta,2\zeta)$-net $P_\beta$. Let $A$ denote the set of circles on the patch with planes parallel to $\gamma$ and $\gamma'$ containing the net points. See Figure~\ref{fig:cube-joint} (ii).
    
    We will prove that the resulting point set $I(A,B)$ is a $(\zeta,8\zeta)$-net on $\cJ$.
    Let $p\in I(A,B)$ and let $p = \alpha\cap \beta$ for some $\alpha \in A$ and $\beta\in B$. Notice that any other net point $q$ will be either on a different circle $\alpha'\in A$ or a different circle $\beta'\in B$. Since different arcs and different circles have distance at least $\zeta$, we have that the distance of $p$ to any other point of $I(A,B)\setminus \{p\}$ is at least $\zeta$. This establishes the first property of the net. 
    
    For the second property, consider any point $p \in \cJ$ and let $\beta_p$ be the rotated copy of $\beta_0$ passing through $p$. Since any two consecutive circles in $A$ have Euclidean distance at most $2\zeta$, we know that $p$ is within distance $2\zeta$ to some point $p'\in \alpha$ where $\alpha\in A$. Since the points on $\alpha$ can be obtained from $\gamma$ by a scaling of factor at most $2$, we have that there is some point $p''\in I(A,B)$ within distance $2\cdot 2\zeta=4\zeta$ from $p'$. (We remark here that net points in $\alpha$ already form a $(\zeta,4\zeta)$-net with respect to $\alpha$.) Therefore, from $p$ we can reach a point in $I(A,B)$ within distance $6\zeta$. As a result $|pp''|\leq 6\zeta$. By construction, the size of the net is $|I(A,B)| = |A|\cdot |B| = O(\frac{|\gamma|\cdot|\beta_0|}{\zeta^2}) = O(1/\zeta^2)$. Since $\area(\cJ)=O(1)$, the size of the net is $O(\area(\cJ)/\zeta^2)$.

\begin{figure}
\begin{center}
\includegraphics[scale=0.8]{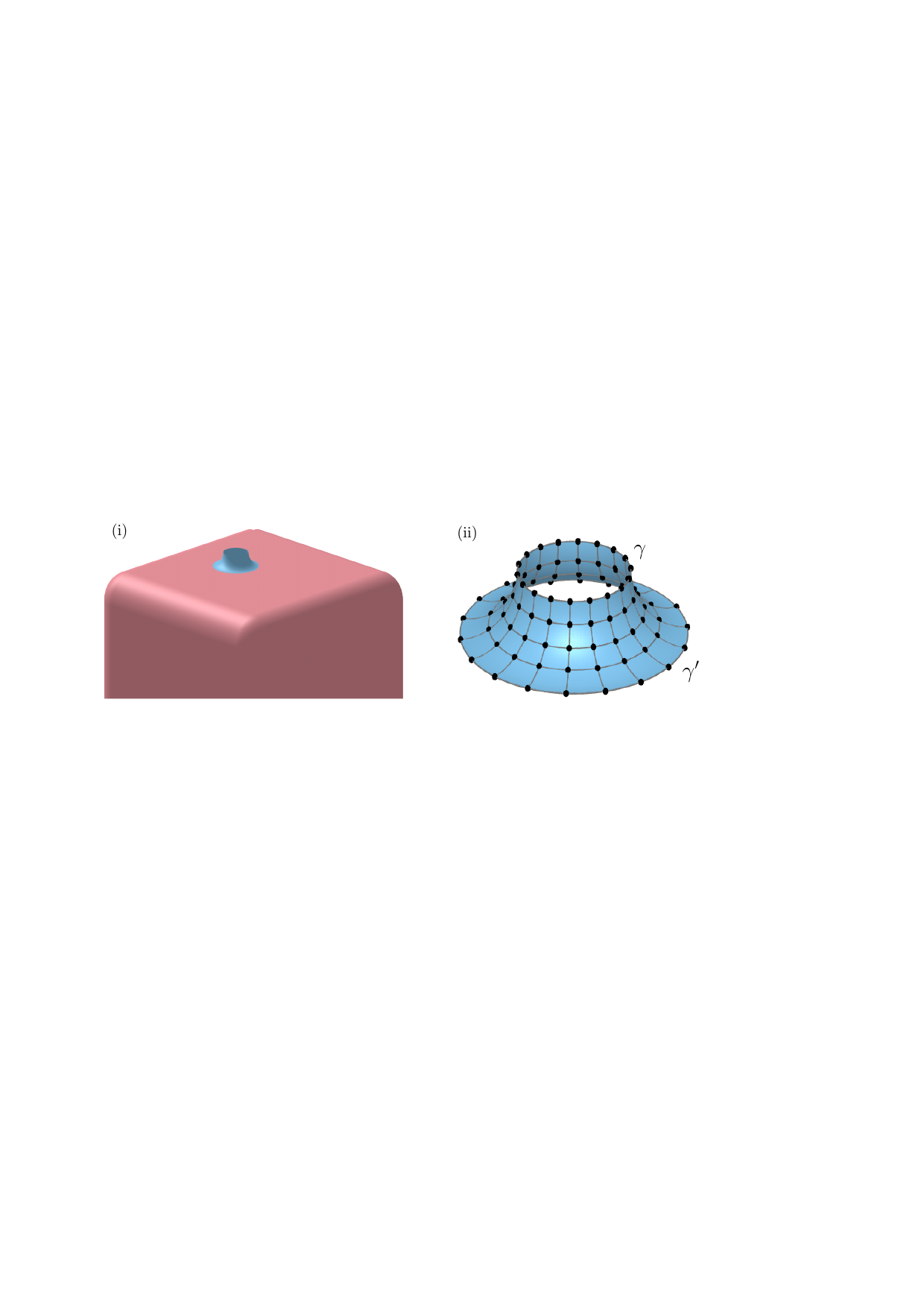}
\end{center}
\caption{(i) A joint (light blue color) on a square face of a rounded cube (light red color). (ii) The resulting net on the joint. }
\label{fig:cube-joint}
\end{figure}

\begin{figure}[t]
\begin{center}
\includegraphics{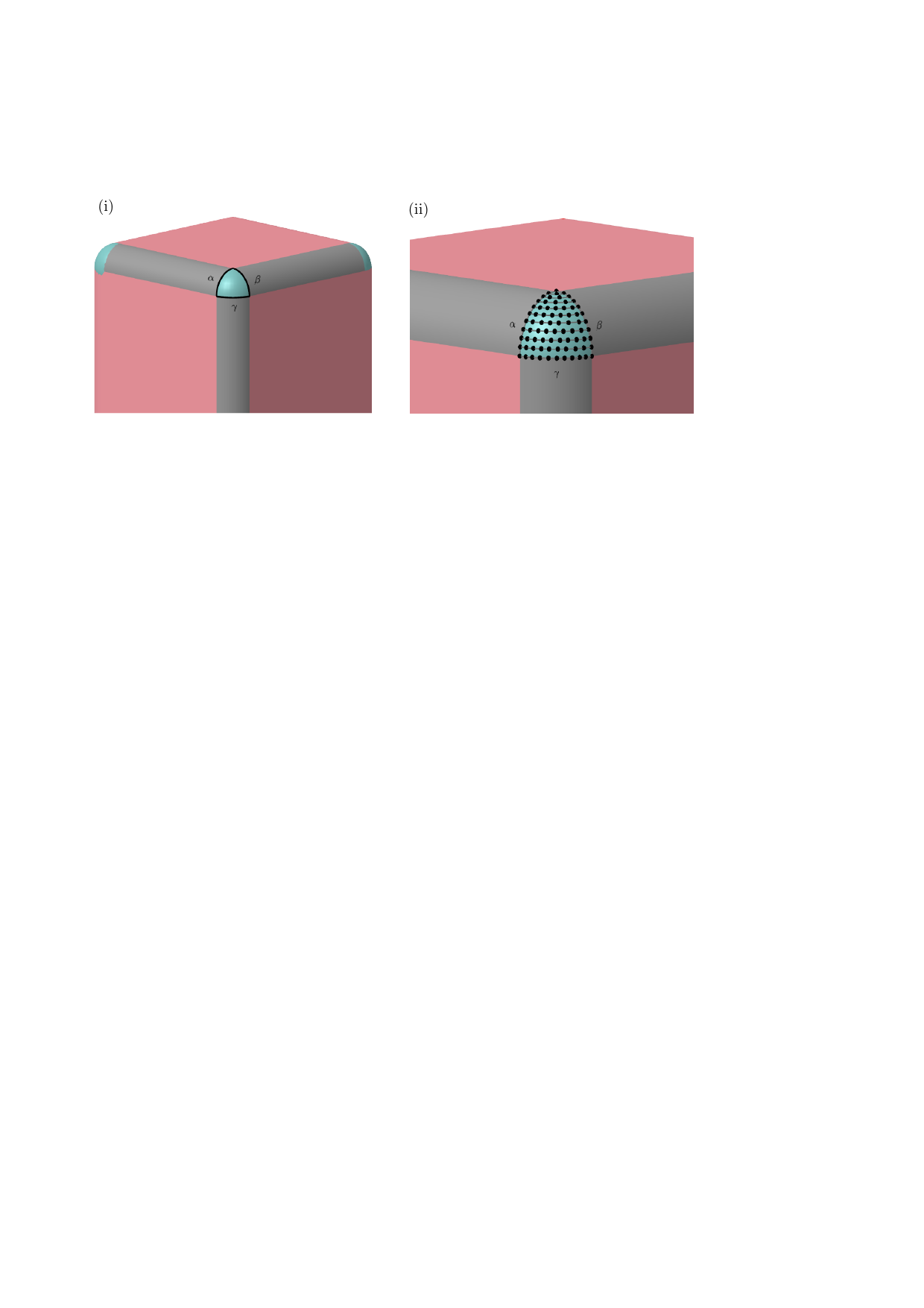}
\end{center}
\caption{(i) A spherical triangle patch (light blue color), connecting to three quarter-cylinder patches (drawn in grey). (ii) The resulting net on the spherical triangle.}
\label{fig:octant}
\end{figure}

    \item \textbf{Spherical triangle patches.}   
A spherical triangle patch $\cO$ is bounded by three quarter-circular arcs $\alpha,\beta,\gamma$ each of which connects $\cO$ to a quarter-cylinder patch; see Figure~\ref{fig:octant}(i). We place net points on $\alpha, \beta$ and $\gamma$ in the same way as for the boundary arcs of a quarter-cylinder patch; as a result, the distance of consecutive points on the same arc is between $\zeta$ and $2\zeta$. For each net point $p$ placed on $\alpha$, we draw on $\cO$ a circular arc parallel to the plane of $\gamma$ and going through $p$. Let $A$ denote the resulting set of arcs, together with $\gamma$. For an arc $\psi\in A$ let $P_\psi$ denote the $(\zeta,2.4\zeta)$-net we can make on this quarter circle. See Figure~\ref{fig:octant}(ii) for an illustration. We now show that the set of points $P_\zeta$ created this way is a $(\zeta,8\zeta)$-net on $\cO$. 

Two points of $P_\zeta$ on the same arc $\psi\in A$ have distance at least $\zeta$ by definition. For a pair of points on distinct arcs of $A$, we can lower bound their distance by the minimum distance of their arcs $\psi,\psi'$ in $A$. This distance is also realized between the points $\alpha\cap \psi, \alpha\cap \psi'$. Since the net points on $\alpha$ already have a distance of at least $\zeta$ from each other, this establishes the first property of the net. Note that consecutive arcs of $A$ have distance at most $2\zeta$ from each other, as their distances are realized along the arc $\alpha$, where the net points form a $(\zeta,2\zeta)$-net.

For the second property, let $q\in \cO$ and consider a horizontal plane through $p$. This plane intersects $\cO$ on a quarter-circular arc which is parallel to the arcs in $B$ and lies between two consecutive arcs in $A$. By the previous discussion, we know that from $q$ we can move to the nearest arc $\psi\in A$ within a distance of at most $2\zeta$. Since we know that net points on $\psi$ form a $(\zeta,2.4\zeta)$-net on $\psi$, we can then move to the nearest point in $\psi\cap P_\zeta$ within a distance of at most $2.4\zeta$. Therefore $\textrm{dist}(q,P_\zeta)\leq 8\zeta$ and thus $P_\zeta$ is a $(\zeta,4.4\zeta)$-net on $\cO$. The size of the net is $|P_\zeta| \leq \sum_{\psi\in A} |\psi|/\zeta < |A|\max_{\psi\in A} |\psi|/\zeta = O(1/\zeta^2) = O(\area(\cO)/\zeta^2)$.

    \item \textbf{Square patches.} 
    
    Let $\cK$ denote a square patch. The four edges of $\cK$ already contain net points shared with four quarter-cylinder patches. Moreover, the circular holes of radius $2$ on $\cK$ already contain net points shared with joint patches, and on each hole boundary these form a $(\zeta,4\zeta)$-net. We construct a net on $\cT$ by first placing a regular grid $G$ on $\cK$ generated by the $(\zeta,1.2\zeta)$-nets $P_{ab}$ and $P_{bc}$ on two adjacent sides $ab$ and $bc$. Next, we remove any grid points that fall inside a hole or within distance $\zeta$ from a net point on the boundary of a hole. The resulting net $P_\cK$ can be seen in Figure~\ref{fig:square-grid}. By construction, net points are at distance at least $\zeta$ from each other. For the second property of the net, first observe that any point $p\in \cK$, is within distance $1.2\cdot \sqrt{2} \zeta<2\zeta$ from a grid point $g\in G$. If $g$ was not subsequently removed, then we are done. Otherwise, there is a net point $q$ on the boundary of a hole, such that $\textrm{dist}(g,q)\leq \zeta$. Therefore, $\textrm{dist}(p,q)\leq 3\zeta$ and thus this is a $(\zeta,3\zeta)$-net on $\cK$.
    Finally, we have that $|P_\cK\cap G|=O(\area(\cK)/\zeta^2)$, and each hole has $O(1/\zeta)$ boundary net points. Due to the large pairwise distance of holes, the area of holes is less than the area of $\cK$. Since each hole has area $4\pi$, we have that there are $O(\area(\cK))$ holes, so they contribute $O(\area(\cK))/\zeta$ net points. Therefore $|P_\cK|=O(\area(\cK)/\zeta^2)$, as required. \qedhere
\end{itemize}    
\end{proof}

\begin{figure}
\begin{center}
\includegraphics[scale=0.8]{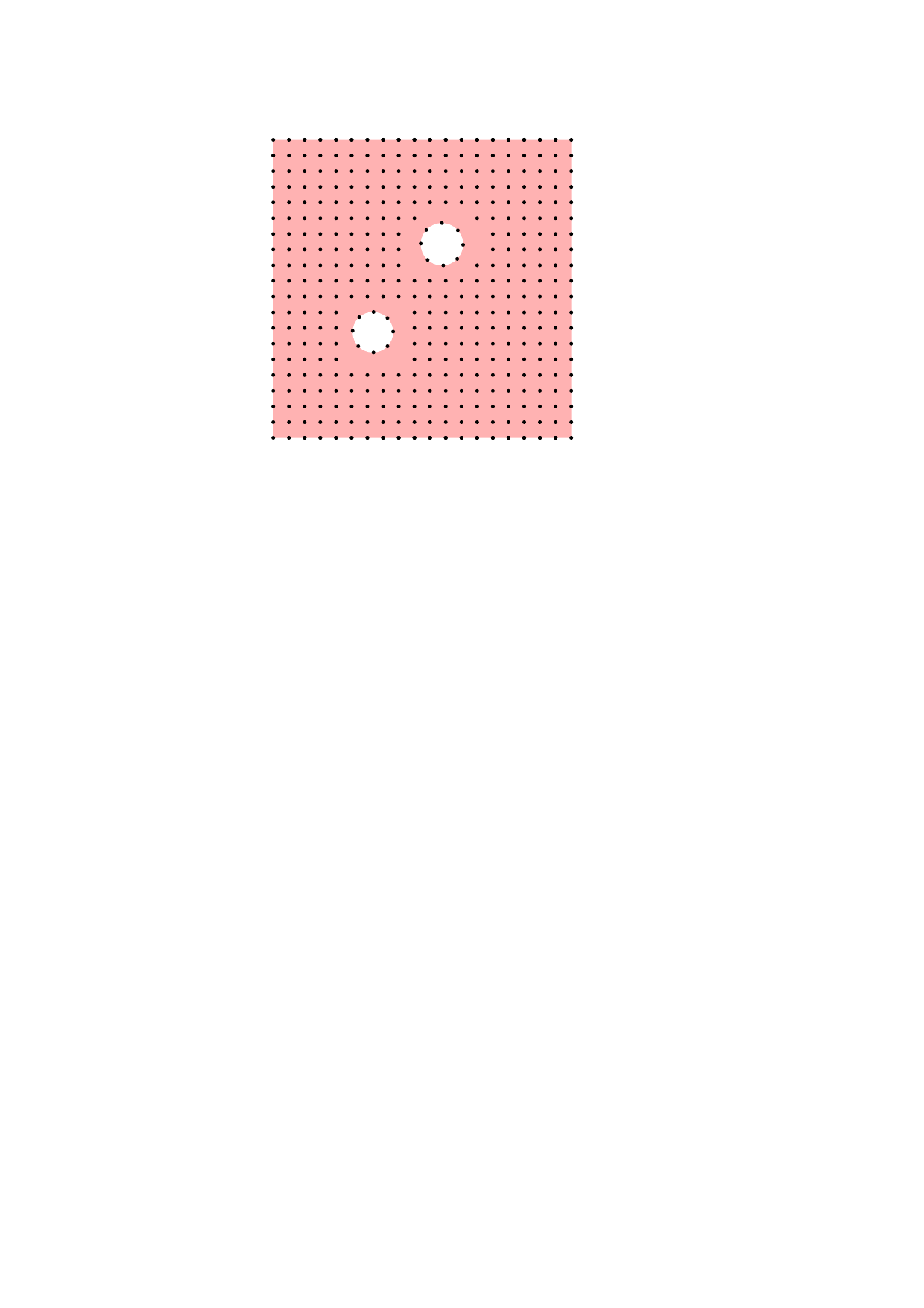}
\end{center}
\caption{A square patch with two disk holes and net points.}
\label{fig:square-grid}
\end{figure} 

\begin{restatable}{lemma}{lemEpsnetShift}\label{lem:epsnet_shift}
    Let $\zeta\leq 1/8$ and let $N\subset \cS$ be the $(\zeta,8\zeta)$-net of $\cS$ constructed above. Then, for any $\delta\in(-1/2,1/2)$, we have that $N_\delta:=\{p+\delta \bn(p)| p\in N\}$ is a $(\zeta/2,12\zeta)$-net of $\cS(\delta)$.
\end{restatable}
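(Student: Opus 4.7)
The plan is to transfer the net property from $\cS$ to $\cS(\delta)$ by analyzing the normal shift map $\Phi_\delta:\cS\to\cS(\delta)$ defined by $\Phi_\delta(p) = p+\delta\bn(p)$. The main step is to show that $\Phi_\delta$ is bi-Lipschitz:
\[
\tfrac{1}{2}\,|p-q| \;\le\; |\Phi_\delta(p)-\Phi_\delta(q)| \;\le\; \tfrac{3}{2}\,|p-q| \qquad \text{for all } p,q\in\cS.
\]
Once this is established, both parts of the net claim follow directly. For distinct $p,q\in N$ we have $|p-q|\ge\zeta$ and hence $|\Phi_\delta(p)-\Phi_\delta(q)|\ge\zeta/2$, giving the separation property. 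For the covering property, fix $r\in\cS(\delta)$; since $\Phi_\delta$ is a bijection, there is a unique $s\in\cS$ with $\Phi_\delta(s)=r$, and the $(\zeta,8\zeta)$-net property of $N$ gives some $p\in N$ with $|s-p|\le 8\zeta$, whence $|r-\Phi_\delta(p)|\le \tfrac{3}{2}\cdot 8\zeta = 12\zeta$.

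To prove the bi-Lipschitz bound I would write $\Phi_\delta(p)-\Phi_\delta(q) = (p-q)+\delta(\bn(p)-\bn(q))$ and apply the triangle inequality in both directions. Combined with $|\delta|<1/2$, this reduces everything to the Lipschitz inequality for the Gauss map,
\[
|\bn(p)-\bn(q)| \;\le\; |p-q| \qquad \text{for all } p,q\in\cS,
\]
which gives $(1-|\delta|)\,|p-q| \le |\Phi_\delta(p)-\Phi_\delta(q)| \le (1+|\delta|)\,|p-q|$. I would verify this Gauss-map bound patch by patch. On square patches the normal is constant, so the bound is trivial. On spherical triangle patches of radius $1$ one has $\bn(p)=p-c$ (where $c$ is the sphere center), yielding equality. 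On cylinder patches of radius $1$, $\bn(p)-\bn(q)$ equals the axis-perpendicular component of $p-q$, giving the bound immediately. On joint patches, the explicit $(\theta,\phi)$-parameterization of the surface of revolution produces the identity
\[
|p_1-p_2|^2 - |\bn(p_1)-\bn(p_2)|^2 \;=\; 4(\cos\theta_1+\cos\theta_2-2)(\cos(\phi_1-\phi_2)-1) \;\ge\; 0,
\]
as both factors are non-positive on the relevant parameter ranges.

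The principal obstacle is extending the Gauss-map bound to pairs of points lying on different but adjacent patches. The patchwork has several adjacency types (square--quarter-cylinder, quarter-cylinder--spherical triangle, square--joint, joint--cylinder), and each requires its own direct computation. These remain elementary thanks to the axis-aligned, unit-radius structure of the patches and the fact that the normals agree along all shared patch boundaries, which bootstraps each cross-patch inequality from the within-patch identities above.
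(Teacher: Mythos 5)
Your proposal reaches the same conclusion by a slightly different organizing principle than the paper. The paper argues patch by patch that the offset $\cP(\delta)$ is a (non-uniform) scaling of $\cP$ by factors in $[1/2,3/2]$, whereas you package the same quantitative fact into the single global bi-Lipschitz estimate
\[
(1-|\delta|)\,|p-q| \;\le\; |\Phi_\delta(p)-\Phi_\delta(q)| \;\le\; (1+|\delta|)\,|p-q|,
\]
reduced to the Gauss-map Lipschitz inequality $|\bn(p)-\bn(q)|\le |p-q|$. Your within-patch verifications check out: the sphere gives equality, the cylinder gives the axis-perpendicular projection of $p-q$, and your joint-patch identity $|p_1-p_2|^2-|\bn_1-\bn_2|^2 = 4(2-\cos\theta_1-\cos\theta_2)(1-\cos(\phi_1-\phi_2))\ge 0$ is correct. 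The remaining work is the cross-patch case, which you explicitly flag but do not carry out; note that the paper, as written, only bounds distance changes \emph{within} a single patch, so the separation between offset net points lying on different patches is never explicitly justified there either --- both arguments implicitly rely on the same cross-patch fact, and your Gauss-map framing at least isolates it as a clean lemma. The cross-patch computations do close because adjacent patches are tangent with matching normals along their shared boundary curve, so the inequality bootstraps from the within-patch identities (e.g.\ for a spherical triangle at a rounded-cube corner meeting a quarter-cylinder along an edge, the difference $|p-q|^2-|\bn(p)-\bn(q)|^2$ reduces to a sum of squares of axial offsets), but spelling these out is genuinely needed to make the argument airtight.
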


\begin{proof}
Let $\cP$ denote a spherical triangle patch, a quarter-cylinder patch, a cylinder patch or a square patch of $\cS$. Then the $(1,\delta)$-patch $\cP(\delta)$ of $\cS(\delta)$ is a scaling of $\cP$ by a factor of ${1+\delta}$ (or $1$ in case of the square patch). Therefore, the Euclidean distance of any two points on $\cP$ is scaled by a factor $1/2\leq 1+\delta \leq 3/2$. As a result, the set $\cP(\delta)\cap N_\delta$ is a $(\zeta/2,12\zeta)$-net of~$\cP(\delta)$. 

Let $\cJ$ be a joint patch of $\cS$ and let $\gamma,\gamma'$ denote its top and bottom circle respectively. Observe that in the joint $(1,\delta)$-patch $\cJ(\delta)$ of $\cS(\delta)$, $\gamma'$ remains a circle of radius $2$, while $\gamma$ becomes a circle of radius $1+\delta$.  Moreover, the arc $\alpha_\delta$ that is being rotated is a quarter-circular arc of a circle with radius $1-\delta$. Therefore, distances along the top circle of $\cJ(\delta)$ and along the rotated arc, are scaled by a factor $1/2\leq 1 \pm \delta \leq 3/2$. By following the proof for the joint patch in Lemma~\ref{lem:epsnet} we see that $N_\delta\cap \cJ(\delta)$ is a $(\zeta/2,12\zeta)$-net of $\cJ(\delta)$. 
\end{proof}

\begin{restatable}{lemma}{lemEpsnetNeighbors}\label{lem:epsnet_neighbors}
    Let $N_\delta$ be a $(\zeta/2,12\zeta)$-net of $\cS(\delta)$ for  each $|\delta|<1/2$ constructed above. Then for any fixed~$\delta$, any $1\leq t<1/(4\zeta)$ and any $p\in N_\delta$ we have that the ball $B(p,t \zeta)$ contains at most $O(t^2)$ points of $N_\delta$. Moreover, for any $t$ we can partition $N_\delta$ into $O(t^2)$ point sets $N_{\delta,i}$ such that the pairwise distance of points within $N_{\delta,i}$ is at least $t\zeta$.
\end{restatable}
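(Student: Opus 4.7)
The plan is to establish the packing bound first and then deduce the partition via a standard greedy coloring argument. The key geometric input will be a surface-area estimate for small Euclidean balls intersected with $\cS(\delta)$, leveraging that $\cS(\delta)$ is $\frac{1}{2}$-touchable (since $\cS$ is a $1$-patchwork and $|\delta|<1/2$). Concretely, I will show that for any $q\in\cS(\delta)$ and any $r\leq 1/2$, the set $\cS(\delta)\cap B(q,r)$ is a single ``disk-like'' region whose area is $\Theta(r^2)$. This should follow by parameterizing $\cS(\delta)$ near $q$ as a graph over the tangent plane at $q$: by Lemma~\ref{lem:between-offsets}, points of the tangent plane within distance $\eps$ of $q$ stay within Euclidean distance $2\eps^2$ of $\cS(\delta)$, and combined with the two tangent balls of radius $\frac{1}{2}$ at each surface point, one can uniformly bound the Jacobian of this graph parameterization and hence the area element.

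With the area estimate in hand, the packing bound is immediate. Note that since $t<1/(4\zeta)$ we have $t\zeta<1/4$, so the ball $B(p,t\zeta+\zeta/4)$ has radius less than $1/2$ and the area estimate applies. For each $q\in N_\delta\cap B(p,t\zeta)$, assign the disk $D_q:=\cS(\delta)\cap B(q,\zeta/8)$, of area $\Omega(\zeta^2)$. Since the points of $N_\delta$ are pairwise at Euclidean distance at least $\zeta/2>2\cdot(\zeta/8)$, the balls $B(q,\zeta/8)$ are pairwise disjoint in $\Reals^3$, hence so are the disks $D_q\subseteq \cS(\delta)$. All these disks are contained in $\cS(\delta)\cap B(p,t\zeta+\zeta/8)$, whose area is $O((t\zeta)^2)$. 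Dividing yields $|N_\delta\cap B(p,t\zeta)|=O(t^2)$.

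For the partition, I will build a conflict graph $G$ on vertex set $N_\delta$, placing an edge between two points whenever their Euclidean distance is strictly less than $t\zeta$. By the packing bound just proved (applied with the same $t$), every vertex of $G$ has degree at most $O(t^2)$, since all neighbors of $q$ lie inside $B(q,t\zeta)\cap N_\delta$. A standard greedy coloring then produces a proper coloring of $G$ using at most $O(t^2)+1=O(t^2)$ colors; taking the color classes as $\{N_{\delta,i}\}$ gives the desired partition, as within any color class (an independent set in $G$) all pairwise distances are at least $t\zeta$.

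The main obstacle I expect is the uniform surface-area estimate, which must hold across the different patch types of $\cS$ (square, cylinder, quarter-cylinder, joint, spherical triangle) and, crucially, also at the boundaries where patches meet and $\cS$ ceases to be twice differentiable. Once $\frac{1}{2}$-touchability is used to show that within a small Euclidean ball the surface is a graph over the tangent plane, the area estimate becomes a bookkeeping computation bounding the Jacobian of this parameterization, which the machinery developed in \Cref{sec:patches,sec:nets} should handle patch by patch.
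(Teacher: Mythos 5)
Your proposal takes a genuinely different route from the paper. The paper's proof is purely combinatorial and tailored to the explicit net constructed in Lemma~\ref{lem:epsnet}: it observes that the net points on each non-square patch arise as intersections $I(A,B)$ of two families of curves whose members are pairwise at distance $\geq\zeta/2$, so a ball of radius $t\zeta$ can reach at most $O(t)$ curves from each family and hence at most $O(t^2)$ intersection points; it then uses separation facts about the patchwork (cylinder patches are far from square patches, holes are far from each other and from patch boundaries) to argue $B(p,t\zeta)$ meets at most $7$ patches, giving the $O(t^2)$ bound. Your approach replaces this with a surface-area packing argument that would apply to any net on any surface satisfying a suitable area bound, which is cleaner and more robust; the greedy-coloring step at the end is essentially identical to the paper's.

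However, there is a genuine gap, and it sits exactly where you flagged it. Your area upper bound $\area\bigl(\cS(\delta)\cap B(q,r)\bigr)=O(r^2)$ is not a consequence of \Cref{lem:between-offsets} plus "bookkeeping." That lemma only bounds the \emph{height} of the tangent plane above the offset surfaces; it says nothing about the Jacobian of the projection to $H_q$, nor --- more seriously --- about whether $\cS(\delta)\cap B(q,r)$ is a \emph{single} graph over $H_q$. If a second sheet of $\cS(\delta)$ passed through $B(q,r)$, your area could be a multiple of $r^2$, and the packing count would weaken accordingly. Ruling this out does follow from $\tfrac12$-touchability, but the argument is not free: one must first establish a normal-variation bound (for $q,q'\in\cS(\delta)$ with $|qq'|$ small, $\angle(\bn(q),\bn(q'))=O(|qq'|)$) and then use touchability at \emph{both} potential sheet points to derive a contradiction from their being within distance $<1$ along a near-normal direction. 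Neither step is in the paper (which sidesteps the issue entirely by bounding the number of patches a small ball can meet), and neither is a routine consequence of the lemmas developed in \Cref{sec:patches,sec:nets}. So the proposal is sound as an outline, but the area estimate should be stated as its own lemma and actually proved; calling it "bookkeeping" undersells the work. A smaller issue: you should also verify that the $\Omega(\zeta^2)$ lower bound on $\area(D_q)$ continues to hold near patch boundaries where $\cS(\delta)$ is only $C^1$; the graph parameterization over $H_q$ does give this, but again it needs the normal-variation bound, not just \Cref{lem:between-offsets}.
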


\begin{proof}
    First note that $t\cdot \zeta < 1/4$. Moreover, any cylinder patch of $\cS(\delta)$, has distance at least $1/2$ from any square patch of $\cS(\delta)$. Therefore, $B(p,t\cdot \zeta)$ cannot intersect both a square and a cylinder patch at the same time. Recall that any two holes on a square patch have pairwise distance at least $6$ and are also at distance at least $2$ from the boundary of the square patch. Therefore, $B(p,t\cdot \zeta)$ cannot intersect both a quarter-cylinder patch and a joint or a cylinder patch at the same time. The highest number of distinct surface patches $B(p,t\cdot \zeta)$ can intersect is $7$, which may happen when it intersects a spherical triangle patch and the three quarter-cylinder patches around it and the three square patches around it. Since it can intersect at most a constant number of distinct patches, it suffices to show that it can contain at most $O(t^2)$ points of $N_\delta$ from each different type of surface patch.

   To that end, let $\cP$ denote a patch which is not a square patch and not a spherical triangle patch. For the remaining types of patches, the constructed net is taken as a point set of the form $I(A,B)$, where elements in $A$ and $B$ are quarter-circular arcs, line segments or circles. In all cases, we have argued already that any two elements in $A$ have distance at least $\zeta/2$ (since now $N_\delta$ is a $(\zeta/2,12\zeta)$-net) and any two elements in $B$ have also distance at least $\zeta/2$. Therefore, from any point $p\in \cP$ and within distance $t\cdot\zeta$, we can reach at most $2t$ elements of $A$ and $2t$ elements of $B$. Therefore we can reach at most $4t^2$ net points of $\cP$. On a spherical triangle patch, the arcs of $A$ again have pairwise distance at least $\zeta/2$, so we can reach at most $O(t)$ arcs, and on each arc we can reach at most $O(t)$ points as on each quarter-circular arc the distance between consecutive points is at least $\zeta/2$.

   For a square patch, the process of construction was slightly different due to the existence of the holes. Using the same argumentation as above, we can deduce that $B(p,t\cdot \zeta)$ can contain at most $4t^2$ net points from the initially placed grid. Moreover, $B(p,t\cdot\zeta)$ can intersect at most one hole. Note that a single hole contains $O(1/\zeta)$ net points and since $t<1/(2\zeta)$, it contains at most $O(t)$ net points. Therefore, $B(p,t\cdot\zeta)$ can contain at most $5t^2$ net points of a square patch. 

   Since we argued that $B(p,t\cdot \zeta)$ can intersect at most $7$ distinct surface patches, we have that it can contain at most $7\cdot5t^2 = 35t^2$ points of $N_\delta$.  

    For the second part of the lemma create $b=\lfloor{35t^2+1}\rfloor$ partition classes $N_{\delta,1},N_{\delta,2},...,N_{\delta,b}$ and greedily place points in these classes: we place each point $p\in N_{\delta}$ in the available class of smallest index (that is, a class of minimum index not already containing any point $q$ such that $|pq|<t\cdot \zeta$). Note that in this way, there will always be a class which is available, since for any point there are at most $35t^2$ points that cannot be in the same class with it.
\end{proof}

\begin{restatable}{lemma}{lemSeparation}\label{lem:separation}
There exists a constant $\nu$ such that for any $\zeta<1/(2\nu)$ and $-1/2<\delta<1/2-\nu \zeta^2$ there is collection $\cT$ of $O(\area(\cS)/\zeta^2)$ congruent equilateral triangles of side length $\Theta(\zeta)$ that   satisfy the following properties:
\begin{enumerate}[(i)]
    \item each triangle in $\cT$ is located between $\cS(\delta)$ and $\cS(\delta+\nu \zeta^2)$

    \item the triangles in $\cT$ are pairwise disjoint, and
    
    \item the geodesic distance from any point of $\cS(\delta)$ to any point of $\cS(\delta+\nu \zeta^2)$ is at least $4 \zeta$.
\end{enumerate}
\end{restatable}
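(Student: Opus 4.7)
\emph{Plan.}
The plan is to realize the curved wall construction from the proof overview by placing equilateral triangles in $b=O(1)$ thin layers parallel to the offset surfaces of $\cS$, using one partition class from Lemma~\ref{lem:epsnet_neighbors} per layer. Fix a sufficiently large absolute constant $c$ (I will take $c\geq 34\sqrt{3}$ below) and set the triangle side length to $s:=c\zeta$. Apply Lemma~\ref{lem:epsnet_shift} to obtain the shifted $(\zeta/2,12\zeta)$-net $N_{\delta'}$ on each offset $\cS(\delta')$ with $|\delta'|<1/2$, and apply Lemma~\ref{lem:epsnet_neighbors} with parameter $t:=4c$ to partition the base net $N$ on $\cS$ into $b=O(t^2)$ classes $N_1,\ldots,N_b$ with within-class pairwise distance at least $t\zeta$; then transfer this partition to each offset via the normal correspondence $N_{\delta',i}:=\{p+\delta'\bn(p):p\in N_i\}$. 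Set the layer half-thickness $\mu:=2s^2/3=(2c^2/3)\zeta^2$ and $\delta_i:=\delta+(2i-1)\mu$ for $i=1,\ldots,b$; then Lemma~\ref{lem:between-offsets} applied with $\eps:=s/\sqrt{3}$ (the circumradius of an equilateral triangle of side $s$) shows that any such triangle tangent to $\cS(\delta_i)$ at a point lies between $\cS(\delta_i-\mu)$ and $\cS(\delta_i+\mu)$.

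For each layer $i$ and each $q\in N_{\delta,i}$, I would place the equilateral triangle $T_{q,i}$ of side $s$ centered at the shifted net point $q^{(i)}:=q+(\delta_i-\delta)\bn(q)$ and lying in the plane tangent to $\cS(\delta_i)$ at $q^{(i)}$. The whole stack occupies offsets in $[\delta,\delta+2b\mu]$, so we may set $\nu:=4bc^2/3=O(1)$. Property~(i) is then immediate. For property~(ii), the centers $q^{(i)}$ within a single layer have pairwise Euclidean distance at least $t\zeta/2$ by Lemma~\ref{lem:epsnet_shift}, which exceeds the triangle diameter $s=c\zeta$ as soon as $t>2c$, so intra-layer triangles are pairwise disjoint; triangles from different layers are separated by the disjoint slabs.

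The main obstacle is property~(iii), the curved analogue of the flat-wall separation argument. I would argue by contradiction: suppose a free-space curve $\gamma$ from $p\in\cS(\delta)$ to $p'\in\cS(\delta+\nu\zeta^2)$ has length $|\gamma|<4\zeta$, so in particular $|pp'|<4\zeta$. By the covering property, some $q\in N_\delta$ satisfies $|pq|\leq 12\zeta$, and $q$ lies in a unique class $j$. Let $T:=T_{q,j}$, in the plane $H$ tangent to $\cS(\delta_j)$ at $q^{(j)}$, with inradius $r:=s/(2\sqrt{3})$. Orthogonally projecting $p$ and $p'$ onto $H$ yields $\bar p,\bar p'\in H$ with $|\bar p\, q^{(j)}|\leq 13\zeta$ and $|\bar p'\, q^{(j)}|\leq 17\zeta$ (using $|qq^{(j)}|=O(\zeta^2)$). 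A direct computation of the signed distance $h(x):=(x-q^{(j)})\cdot\bn(q)$, combined with Lemma~\ref{lem:between-offsets}, gives $h(p)=-(2j-1)\mu+O(\zeta^2)<0$ and $h(p')=(2b-2j+1)\mu+O(\zeta^2)>0$ (the normal component of $p'-p$ equals $\nu\zeta^2$ up to $O(\zeta^2)$), so $\gamma$ crosses $H$ at some $y\notin T$. Choosing $c\geq 34\sqrt{3}$ forces $r\geq 17\zeta$, placing $\bar p,\bar p'$ inside $T$, and the $1$-Lipschitz property of orthogonal projection onto $H$ yields
\[
|\gamma|\;\geq\;|py|+|yp'|\;\geq\;|\bar p y|+|y\bar p'|\;\geq\;\mathrm{dist}(\bar p,\partial T)+\mathrm{dist}(\bar p',\partial T)\;\geq\;2r-30\zeta\;\geq\;4\zeta,
\]
contradicting $|\gamma|<4\zeta$. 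The hard part will be controlling all $O(\zeta^2)$ error terms from curvature and normal twist against the linear-in-$\zeta$ thresholds above; this is what ultimately fixes the constant $\nu$ and the scale assumption $\zeta<1/(2\nu)$.
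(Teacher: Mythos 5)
Your proposal mirrors the paper's construction closely: the same layered placement of partitioned net points on shifted offsets, the same use of Lemmas~\ref{lem:epsnet_shift} and~\ref{lem:epsnet_neighbors}, and essentially identical arguments for properties~(i) and~(ii). The genuine difference is in how you prove property~(iii). The paper picks the first crossing point $u$ of the geodesic with $\cS(\delta_1)$, draws the normal segment $s$ through $u$, and splits into two cases: either the geodesic leaves the thin cylinder $\cC(s,4\zeta)$ (giving the bound directly), or it is trapped inside the cylinder, in which case the paper locates a layer $\cS(\delta_i)$ with a net point near $s$ and shows the corresponding triangle's incircle disconnects the cylinder --- contradiction. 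You instead anchor on the net point $q$ near the \emph{starting} point $p$, fix the single triangle $T_{q,j}$ in $q$'s class, and argue via signed distances and $1$-Lipschitz projection onto its plane $H$: both $\bar p$ and $\bar p'$ land inside $T$, so the forced crossing $y\notin T$ is far from both, giving $|\gamma|\geq 2r-30\zeta\geq 4\zeta$. This projection argument avoids the paper's case split and is, if anything, more direct, at the cost of having to verify the sign of $h(\cdot)$ explicitly; the $O(\zeta^2)$ curvature errors you flag at the end are of the same order as $\mu=\Theta(c^2\zeta^2)$ but with a smaller constant once you unwind Lemma~\ref{lem:between-offsets} (for $c\geq 34\sqrt{3}$ one gets roughly $\mu\approx 2300\zeta^2$ against an error of at most about $1100\zeta^2$), so the approach does go through as long as you argue this carefully or just take $c$ somewhat larger. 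In short: correct, and a legitimate alternative to the paper's cylinder-cut argument for~(iii), with the trade-off being explicit signed-distance bookkeeping in your version versus the paper's geometric separation of a cylinder by an incircle.
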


\begin{proof} 
By applying \Cref{lem:epsnet} and \Cref{lem:epsnet_shift} for we obtain a $(\zeta/2,12\zeta)$-net $N_\delta$ of $\cS(\delta)$ of size $O(\area(\cS)/\zeta^2)$. We set $t=48(1+\sqrt{3})$, and choose $\nu$ large enough such that $\zeta<1/(4t)$. Then, by Lemma~\ref{lem:epsnet_neighbors}, we can partition $N_\delta$ into $b = \lfloor{35t^2+1}\rfloor$ classes $(N_{\delta,i})_{i=1}^{b}$, where points within each class $N_{\delta,i}$ have pairwise distance at least $t\cdot \zeta = 48(1+\sqrt{3})z$. 
We will now place each class $N_{\delta,i}$ on a different offset $\cS(\delta_i)$, defined as follows. We start by setting $\nu = 4b+1$ and define
$\delta_i := \delta + 4i \cdot \zeta^2$, for $i=1,2,..,b$. From now on, we consider the points of each class $N_{\delta,i}$ on the surface $\cS(\delta_i)$ defined as above.

The above definition ensures $\delta_i \leq \delta+4b \zeta^2<\delta+\nu \zeta^2<1/2 $. Thus, the offsets $S(\delta_i)$ are well-defined. Furthermore, we also have the following: 
\begin{equation}\label{eq:property}
    \text{For $1 \leq i \leq b$, we have that $\delta_i - 2\zeta^2 < \delta_i < \delta_i + 2\zeta^2 < \delta_{i+1}$.}
\end{equation}

For $1\leq i \leq b$ and $p\in  N_{\delta,i}$, let $T_p$ denote the equilateral triangle of side length $48\sqrt{3}\zeta$ centered at $p$, which lies on the tangent plane $H_p$ at $p$. Let $\cT_i = \{ T_p: p\in \cN_{\delta,i}\}$  and $\cT = \bigcup_{i=1}^b \cT_i$. We now argue that $\cT$ satisfies the three properties of the Lemma:

\begin{enumerate}[(i)]
\item 
This holds trivially since $\delta+4\zeta^2\leq \delta_i\leq \delta+4b\zeta^2<\delta+\nu \zeta^2$.
\item Lemma~\ref{lem:between-offsets} together with \eqref{eq:property}, ensure that any two triangles in $\cT$ that belong to different offsets are disjoint. Next, we argue that any two triangles $T_{p},T_{q} \in \cT_i$ are disjoint. From Lemma~\ref{lem:epsnet_neighbors} we know that $|pq| \geq 48(1+\sqrt{3}) \zeta >96 \zeta$. Since each of our triangles has circumradius $48\zeta$, it follows that $T_p,T_q$ cannot intersect.

\item Refer to \Cref{fig:separation} for an illustration. Let $p\in \cS(\delta)$ and $q\in \cS(\delta+\nu \zeta^2)$ and consider a shortest geodesic path $\pi(p,q)$. We want to show that $\dist_\cT(p,q)\geq c\cdot \zeta$ for some constant $c$. Let $u$ be a point in $\pi(p,q)\cap \cS(\delta_1)$ and consider the segment $s$ between the points $u-\bn(u)$ and $u+\bn(u)$. We consider the following two cases:

\begin{itemize}
    \item \emph{Case I: There exists $v\in \pi(p,q)$, such that $\dist_{\Reals^3}(v,s) \geq 4\zeta$.} \\
    Then $\pi$ has the desired length.
    \item \emph{Case II: $\pi(p,q)$ is contained in the cylinder with axis $s$ and radius $4\zeta$} \\
    Let $\cC({ab},x)$ denote the cylinder of axis ${ab}$ and radius $x$.
    We will show that there exists $T\in \cT$ which cuts the cylinder $\cC(s,4\zeta)$ in such a way that the top and bottom disk of the cylinder are completely contained in different sides of the cut. Since $\pi(p,q)$ connects these disks and it is disjoint from all triangles $T\in \cT$, this will give a contradiction. 
    
    By the properties of $N_\delta$, we know that there exists a surface $\cS(\delta_i)$, such that $s\cap \cS(\delta_i)$ is within distance $12\zeta$ from a point $w\in N_{\delta,i}$. We will argue that $T_w$ is the required triangle. Towards that, let $s_i = s\cap \cS(\delta_i)$ and recall that $T_w$ has inradius $24\zeta$ and denote by $D$ the incircle of $T_w$. Observe that $\cC({s_{i-1}s_{i+1}},4\zeta) \subset B(w,24\zeta)$. By Lemma~\ref{lem:between-offsets}, we know that $T_w$ is between $\cS(\delta_{i-1})$ and $\cS(\delta_{i+1})$. Since $D$ splits $B(w,24\zeta)$ in two pieces where $s_{i-1}$ and $s_{i+1}$ are in different parts, we get that $D$ has to cut $\cC({s_{i-1}s_{i+1}},4\zeta)$. We conclude that indeed $T_w$ cuts $\cC(s,4\zeta)$ in the described manner, which is a contradiction. \qedhere
\end{itemize}
\end{enumerate}
\end{proof}
    \begin{figure}\label{fig:separation}
\begin{center}
\includegraphics{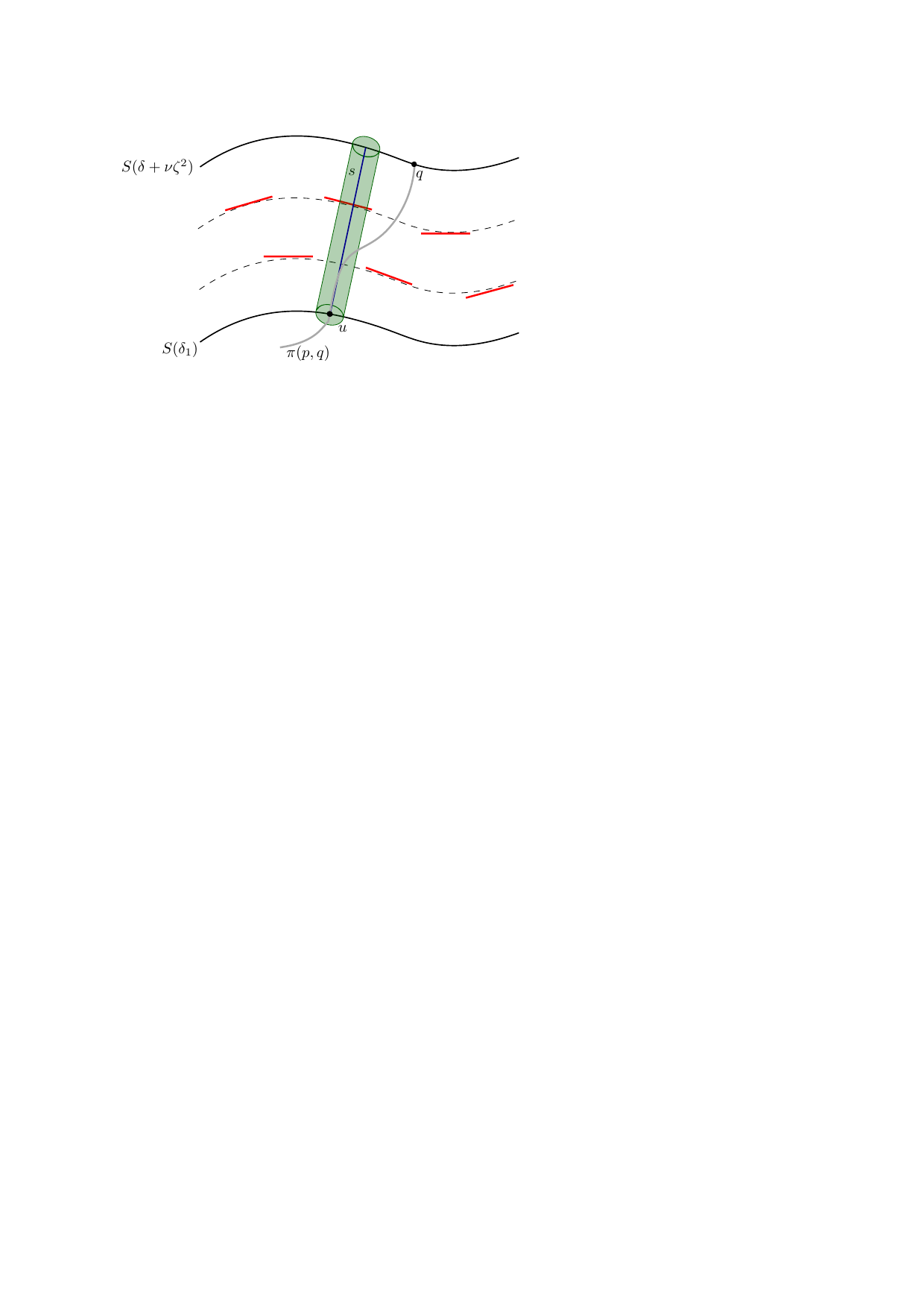}
\end{center}
\caption{Illustration for the proof of \Cref{lem:separation} (side view). The geodesic $\pi(p,q)$ is shown in grey. Note that the cylinder around $s$ is cut by one of the red triangles, therefore $\pi(p,q)$ needs to exit the cylinder.}
\label{fig:separation}
\end{figure}

By iterating the construction of Lemma~\ref{lem:separation} we get our curved wall separation theorem.

\begin{restatable}[Polynomial separation]{theorem}{thmPolySep}\label{thm:polynomial-sep}
    Let $\cS$ be a $1$-patchwork surface. Then for any $\sigma> 1$ there is a collection of $O(\sigma^4\area(\cS))$ pairwise disjoint congruent equilateral triangle obstacles between $\cS(-1/2)$ and $\cS(1/2)$, such that the geodesic distance from any point of $\cS(-1/2)$ to any point of $\cS(1/2)$ is at least $\sigma$.
\end{restatable}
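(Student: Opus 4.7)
The plan is to iterate \Cref{lem:separation} a large number of times, stacking slab constructions on top of each other to cover the full offset range $(-1/2,1/2)$. A single slab, of offset-thickness $\nu\zeta^2$, contributes a geodesic separation of $4\zeta$, so stacking roughly $1/(\nu\zeta^2)$ slabs accumulates a total separation of order $1/\zeta$, while using order $\area(\cS)/\zeta^2$ triangles per slab, hence $O(\area(\cS)/\zeta^4)$ triangles overall. Choosing $\zeta = \Theta(1/\sigma)$ therefore yields a separation of $\Omega(\sigma)$ together with a triangle count of $O(\sigma^4\area(\cS))$, matching the statement exactly.

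More precisely, I would set $\zeta := c/\sigma$ for a small absolute constant $c>0$ chosen so that $\zeta \leq 1/(2\nu)$ (the precondition of \Cref{lem:separation}) and so that $4/(\nu c) \geq 1$. Let $k := \lfloor 1/(\nu\zeta^2)\rfloor - 1$ and define offset values $\delta_j := -1/2 + (j+\tfrac{1}{2})\nu\zeta^2$ for $j = 0, 1, \ldots, k$, so that all $\delta_j$ lie strictly in $(-1/2, 1/2)$ with consecutive gaps $\nu\zeta^2$. For each $j \in \{0,\ldots,k-1\}$, I apply \Cref{lem:separation} with parameter $\delta = \delta_j$ to obtain a collection $\cT_j$ of $O(\area(\cS)/\zeta^2)$ pairwise disjoint congruent equilateral triangles of side length $48\sqrt{3}\,\zeta$ located strictly between $\cS(\delta_j)$ and $\cS(\delta_{j+1})$, and providing geodesic separation of at least $4\zeta$ across that slab. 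Set $\cT := \bigcup_{j=0}^{k-1}\cT_j$. Since the slabs $[\delta_j,\delta_{j+1}]$ have pairwise disjoint interiors and every triangle in $\cT_j$ lies strictly between $\cS(\delta_j)$ and $\cS(\delta_{j+1})$ by property~(i), triangles from distinct $\cT_j$ are automatically disjoint; all triangles share side length $48\sqrt{3}\,\zeta$ and are hence congruent. The total count is $k \cdot O(\area(\cS)/\zeta^2) = O(\area(\cS)/\zeta^4) = O(\sigma^4\area(\cS))$.

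For the separation lower bound, I would consider any geodesic $\pi$ in $\cF_\cT$ between $p \in \cS(-1/2)$ and $q \in \cS(1/2)$. Each closed surface $\cS(\delta_j)$ separates $\Reals^3$ into an inside and an outside component; since $p$ lies strictly inside $\cS(\delta_j)$ and $q$ lies strictly outside, $\pi$ must meet $\cS(\delta_j)$. Choosing crossing points $r_j \in \pi \cap \cS(\delta_j)$ in the order they appear along $\pi$, the subpath from $r_j$ to $r_{j+1}$ is a free-space path from a point of $\cS(\delta_j)$ to a point of $\cS(\delta_{j+1})$, so by property~(iii) of \Cref{lem:separation} its length is at least $4\zeta$. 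Summing over $j = 0, \ldots, k-1$ gives $\mathrm{length}(\pi) \geq 4k\zeta \geq \sigma$ for a suitable choice of the constant $c$.

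The step I expect to demand the most care is the bookkeeping of constants, together with the boundary cases where an endpoint of $\pi$ might coincide with a crossing point; the small buffer $\nu\zeta^2/2$ built into $\delta_0$ and $\delta_k$ is exactly what prevents this degeneracy. The small-$\sigma$ regime $\sigma \in (1,O(1)]$ is not really an obstacle: it follows immediately from instantiating the construction at a sufficiently large absolute constant $\sigma_0$, whose triangle count is already $O(\area(\cS))$ and hence $O(\sigma^4\area(\cS))$.
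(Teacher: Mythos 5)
Your proposal is correct and follows essentially the same strategy as the paper's proof: iterate \Cref{lem:separation} over a family of equally spaced offset slabs filling $(-1/2,1/2)$, accumulate the per-slab separations of $4\zeta$, and count $O(1/\zeta^2)$ slabs times $O(\area(\cS)/\zeta^2)$ triangles per slab with $\zeta=\Theta(1/\sigma)$. The only differences are constant-factor parameter choices (the paper uses $\sigma_0^2$ slabs of offset-width $2\nu\zeta^2$ with $\zeta=1/(2\nu\sigma_0)$ and substitutes $\sigma_0=\nu\sigma/2$ at the end, whereas you pack roughly $1/(\nu\zeta^2)$ slabs of width $\nu\zeta^2$), which do not affect the argument.
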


\begin{proof}
We apply \Cref{lem:separation} for $\zeta = \frac{1}{2\nu\sigma_0}$ and for $\sigma_0^2$ different values of $\delta$, namely for $\delta_i = -1/2+2i\cdot\nu \zeta^2$, $i = 1,2,...,\sigma_0^2$, where $\sigma_0>1$ and its exact value will be defined later. This is possible since $\zeta<1/(2\nu)$ and $-1/2<\delta_i<1/2-\nu \zeta^2$ for the chosen values of $\zeta$ and~$\delta_i$. 

From Lemma~\ref{lem:separation}, we have that there is a collection $\cT_i$ of $O(\sigma_0^2\area(\cS))$ pairwise disjoint and congruent equilateral triangles that satisfy the three properties of the Lemma. Namely, each triangle in $\cT_i$ is between $\cS(\delta_i)$ and $\cS(\delta_i+\nu \zeta^2)$ and since $\delta_i+\nu \zeta^2< \delta_{i+1}$ (where $\delta_{\sigma_0^2+1}:= 1/2-\nu \zeta^2$), we have that the geodesic distance from any point on $\cS(\delta_i)$ to any point on $\cS(\delta_{i+1})$ is at least $\frac{4}{2\nu\sigma_0}$. A geodesic path from a point $p\in\cS(-1/2)$ to a point $q\in\cS(1/2)$, will intersect each $\cS(\delta_i)$ at least once and therefore $\dist_\cT(p,q)\geq \sigma_0^2\cdot \frac{2}{\nu\sigma_0} = \frac{2}{\nu}\cdot \sigma_0$. The collection of triangles $\cT = \bigcup_{i=1}^{\sigma_0^2} \cT_i$ has size $O(\sigma_0^2\cdot \sigma_0^2\area(\cS))=O(\sigma_0^4 \area(\cS))$ and clearly consists of disjoint, congruent equilateral triangles. Substituting $\sigma_0=\frac{\nu}{2}\sigma$ yields the desired result.
\end{proof}

\subsection{The construction and the proof of \Cref{thm:approxrealizemetric}}\label{sec:construct}

Recall that $(X,\dist_X)$ is the metric space whose distances we want to realize, with $|X|=n$ and spread $\Phi=\max_{a,b,c,d\in X}(\frac{\dist_X(a,b)}{\dist_X(c,d)})$, and $\eps>0$ denotes an accuracy parameter. We can assume that the minimum distance of any two points in $X$ is $41n^3/\eps$, as in general we can scale the whole construction so that the minimum geodesic distance exactly matches the minimum distance in $X$. We construct a $1$-patchwork surface $\cS$ together with an injection $f:X\rightarrow \mathbb{R}^3$ as follows.

For each point $x_i\in X$, $1\leq i \leq n$, let $f(x_i) = (40n^2\cdot i,0,-10n^2)$. Let $\cC_i$ denote the cube centered at $f(x_i)$ of side-length $20n^2-2$. We define the rounded cube $\cR_i = \cC_i \bigoplus B_1$, where $B_1$ denotes the ball of radius $1$. Note that each rounded cube $\cR_i$ has \emph{size} $20n^2$, where we define the size of a rounded cube to be the side-length of its bounding box. Then the top square patch of each $\cR_i$ lies on the plane $z=0$, has side length $20n^2-2$ and is centered at the point $(40n^2\cdot i,0,0)$. For each pair $(i,j)$ where $i<j$, we model the distance between $x_i$ and $x_j$ as follows. We start by placing two holes $H_{i,j},H_{j,i}$ on the top square patch of $\cR_i$ and $\cR_j$ respectively, such that: \begin{itemize}
    \item  $H_{i,j}$ has radius $2$ and center $(40n^2\cdot i,-10n^2+10(n-1)\cdot i+10j-10,0)$, and
    \item $H_{j,i}$ has radius $2$ and center $(40n^2\cdot j,-10n^2+10(n-1)\cdot i+10j-10,0)$.
\end{itemize}  
   Observe that the centers of any two holes $H_{i,j},H_{j,i}$ defined this way have the same $y$-coordinates. Moreover, the $y$-coordinates of the centers of all holes are in the range $[-10n^2+10n,-10n]$. Since the $y$-coordinates of the top square patches are in the range $[-10n^2+1,10n^2-1]$, this ensures that: 

\begin{enumerate}
    \item holes are well-defined,
    \item each hole has distance at least $2$ from the boundary of its square patch, and
    \item for each $1\leq i,j,k \leq n$, where $i,j,k$ are pairwise different, the centers of $H_{i,j}$ and $H_{i,k}$ have distance at least $10$. As a result, $H_{i,j}$ and $H_{i,k}$ have distance at least $6$, as required in the definition of a square patch.
\end{enumerate}

It will be useful for the rest of the description, to define a \emph{cylinder-joint patch of length $\ell$} as a cylinder patch of length $\ell$, with two joint patches attached to its bases. Note that a cylinder-joint patch can be used to connect two holes that are opposite to each other. 

Next, for each $1\leq i<j \leq n$, we connect $H_{i,j}$ and $H_{j,i}$ as follows. We attach to both $H_{i,j}$ and $H_{j,i}$ a cylinder-joint patch of length $\ell_\textrm{vert}(i,j)$, which we denote by $\cK_{i,j}$ and $\cK_{j,i}$ respectively. The length $\ell_\textrm{vert}(i,j)$\ will depend on $\eps$ and its exact value will be set in the proof of \Cref{thm:approxrealizemetric}.  We then attach $\cK_{i,j}$ to a rounded cube $\cC_{i,j}$ of size $8$, through a circular hole of radius $2$ centered at the center of the bottom square patch of $\cC_{i,j}$. Note that this hole has distance at least $2$ from the boundary of the square patch. Similarly, we attach $\cK_{j,i}$ to a rounded cube $\cC_{j,i}$ of size $8$, through a circular hole of radius $2$ centered at the center of the bottom square patch of the cube. We finally connect $\cC_{i,j}$ and $\cC_{j,i}$ through a horizontal cylinder-joint patch of length $\ell_\textrm{hor}(i,j) = 40n^2\cdot(j-i)$, denoted by $\cJ_{i,j}$. This concludes the  construction of $\cS$. Refer to Figure~\ref{fig:construction_} for an illustration of the connection between $f(x_i),f(x_j)$. After constructing $\cS$ we apply Theorem \ref{thm:polynomial-sep} for $\sigma := \max_{x_i,x_j}\dist_X(x_i,x_j)=O(\Phi\cdot n^3/\eps)$, to achieve a separation of $\max_{x_i,x_j}\dist_X(x_i,x_j)$ between $\cS(-1/2)$ and $\cS(1/2)$. We let $\cT$ denote the resulting set of triangle obstacles. Using the collection $\cT$, we proceed to prove \Cref{thm:approxrealizemetric}.

\begin{figure}
\begin{center}
\includegraphics[scale=0.8]{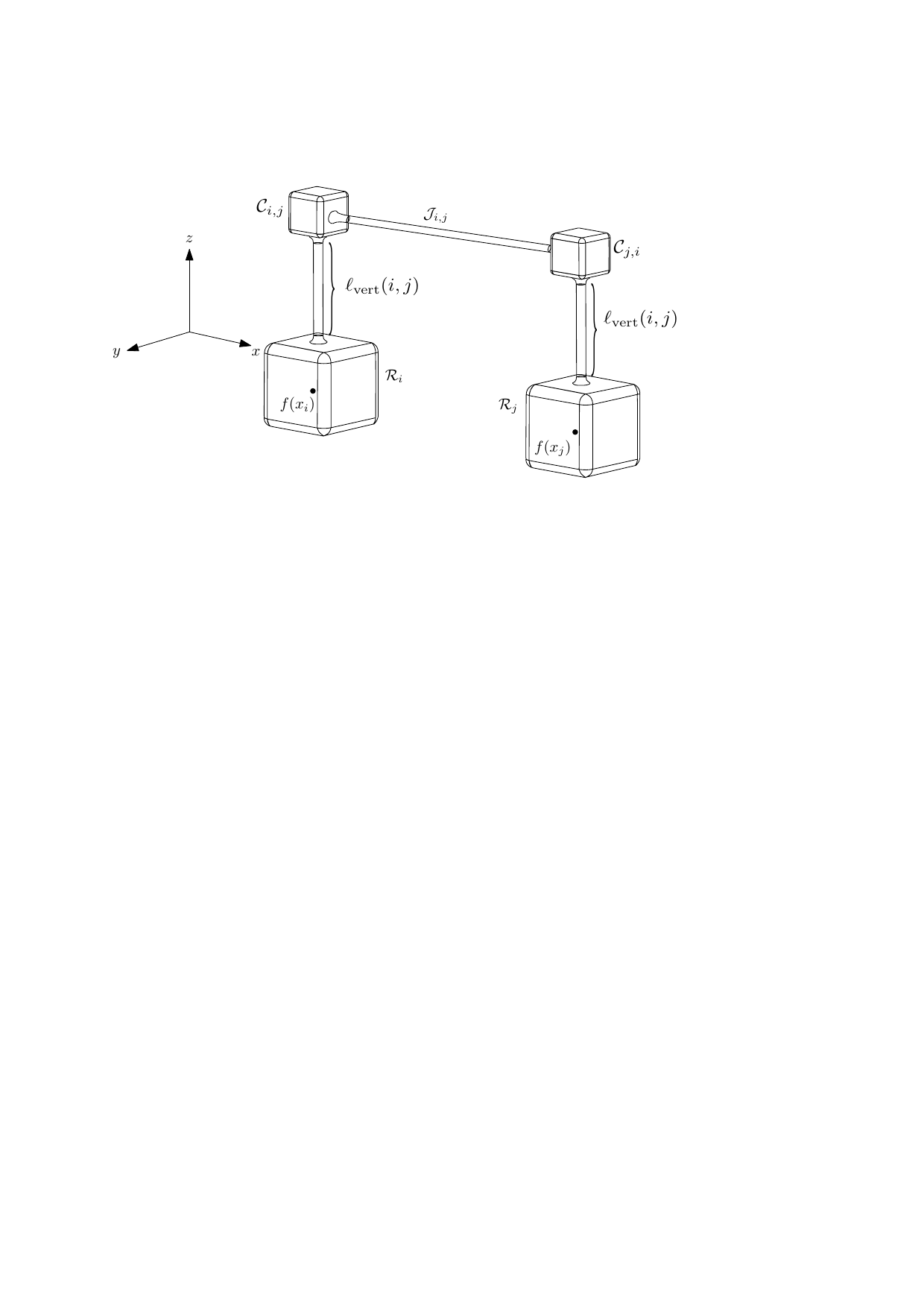}
\end{center}
\caption{A tube consisting of three cylinders and a turn that connects the rounded cubes $\cR_i$ and $\cR_j$. We will set $\ell_{\mathrm{vert}}(i,j)$ to ensure that the length of the tube corresponds to the distance $\dist_X(x_i,x_j)$.}
\label{fig:construction_}
\end{figure} 

\begin{proof}
We want to give an upper bound and a lower bound on the length of the shortest path from $f(x_i)$ to $f(x_j)$ for all $1\leq i<j \leq n$.

 We start by observing that a geodesic path shorter than $\max_{x_i,x_j}\dist_X(x_i,x_j)$ connecting $f(x_i)$ and $f(x_j)$ needs to stay inside $\cS(1/2)$, as leaving $\cS(1/2)$ requires length more than $\max_{x_i,x_j}\dist_X(x_i,x_j)$ by \Cref{thm:polynomial-sep}. A geodesic staying in $\cS(1/2)$ must do exactly one of the following:

 \begin{itemize}
     \item It will not visit any rounded cube $\cR_k$, where $k$ is different than $i,j$. In that case, we refer to this as a $1$-hop path.
     \item It will go through at least one rounded cube different than $\cR_i,\cR_j$. In that case, we refer to this path as a multi-hop path.
 \end{itemize}

 A $1$-hop shortest path has to consist of the following paths (note that here we ignore segments within $\cR_i,\cR_j$ in our consideration):

\begin{enumerate}
    \item A path $\pi_1$ connecting $H_{i,j}$ to $\cC_{i,j}$. Then $|\pi_1| \geq \ell_{\textrm{vert}}(i,j)+2$.
    \item A path $\pi_2$ between the two holes of $\cC_{i,j}$. The smallest segment between these two holes has length $2\sqrt2$.
    \item A path $\pi_3$ connecting $\cC_{i,j}$ to $\cC_{j,i}$. Then $|\pi_3|\geq \ell_{\textrm{hor}}(i,j)+2$.
    \item A path from the "left" hole of $\cC_{j,i}$ to $H_{j,i}$. By symmetry, this will have length at least $|\pi_1|+|\pi_2|$.
\end{enumerate}

Thus, a $1$-hop path has length at least $2\ell_{\textrm{vert}}(i,j)+\ell_{\textrm{hor}}(i,j)+6+4\sqrt{2}$. We set $\ell_{\textrm{vert}}(i,j)$ such that this matches exactly the distance of $x_i,x_j$ in the metric space:

\[
\dist_X(x_i,x_j) = 2\ell_{\textrm{vert}}(i,j)+\ell_{\textrm{hor}}(i,j)+6+4\sqrt{2}
\]
which leads to 
\[
\ell_{\textrm{vert}}(i,j)= \frac{\dist_X(x_i,x_j)-\ell_{\textrm{hor}}(i,j)}{2}-3-2\sqrt{2}
\]

We do this for all pairs $(i,j)$, where $1\leq i<j\leq n$. Since $\dist_X(x_i,x_j)\geq 41n^3/\eps$ and $\ell_{\textrm{hor}}(i,j) <40n^3$, we have that $\ell_{\textrm{vert}(i,j)}>1$, and thus this is a valid choice.

To lower bound $\dist_\cT(f(x_i),f(x_j))$, again we note that:
\begin{itemize}
    \item If it is a $1$-hop path, then we obtain the lower bound
\begin{equation}\label{eq:lower-bound}
\dist_\cT(f(x_i),f(x_j)) \geq 2\ell_{\textrm{vert}}(i,j) + \ell_{\textrm{hor}}(i,j) + 6 + 4\sqrt{2} = \dist_X(x_i,x_j)
\end{equation}
    \item If it is a multihop path, let $\cR_{i_1},\cR_{i_2},...,\cR_{i_\lambda}$ denote the sequence of large rounded cubes the path goes through, where $i=i_1$ and $j=i_\lambda$. Due to our choices of $\ell_{\textrm{vert}}(i,i_1), \ell_{\textrm{vert}}(i_1,i_2)$, \dots , $\ell_{\textrm{vert}}(i_{\lambda},j)$ this path will have length at least: 
    \[
    \sum_{s=1}^{\lambda-1}\dist_X(x_{i_s},x_{i_{s+1}}) \geq \dist_X(x_i,x_j),
    \]
    due to the triangle inequality in the metric space $(X,\dist_X)$. 
    Thus, Equation~\ref{eq:lower-bound} gives a lower bound also in this case.
\end{itemize}

To upper bound $\dist_\cT(f(x_i),f(x_j)))$, we construct a path  consisting of the following segments (note that here we also need to take into account segments within $\cR_i,\cR_j)$: 
\begin{enumerate}
    \item A segment between $f(x_i)$ and the center of $H_{i,j}$, which has length 
    \[\ell(i,j) = \sqrt{(10n^2)^2+(-10n^2+10(n-1)\cdot i+10j-10)^2}.\]
    \item A segment between the center of $H_{i,j}$ and the center of $\cC_{i,j}$, which has length $\ell_\textrm{vert}(i,j)+6$.
    \item A segment between the center of $\cC_{i,j}$ and the center of $\cC_{j,i}$, which has length $\ell_{\textrm{hor}}(i,j)+10$. 
    \item A segment between the center of $\cC_{j,i}$ and the center of $H_{j,i}$, which has length $\ell_\textrm{vert}(i,j)+6$.
    \item A segment between the center of $H_{j,i}$ and $f(x_j)$, which has length
    \[\ell(i,j) = \sqrt{(10n^2)^2+(-10n^2+10(n-1)\cdot i+10j-10)^2}.\]
\end{enumerate}

This path gives the following upper bound:
\[
\dist_\cT(f(x_i),f(x_j)) \leq 2\ell(i,j) + 2\ell_\textrm{vert}(i,j) + \ell_{\textrm{hor}}(i,j) + 22 
\]
 
Due to our choice of $\ell_\textrm{vert}(i,j)$, this can be rewritten as
\[
\dist_\cT(f(x_i),f(x_j)) \leq 2\ell(i,j)+\dist_X(x_i,x_j)+16-4\sqrt{2}
\]

It remains to show that $\dist_\cT(f(x_i),f(x_j)) \leq (1+\eps)\dist_X(x_i,x_j)$. It suffices to show that 
\[
2\ell(i,j)+\dist_X(x_i,x_j)+16-4\sqrt{2} \leq (1+\eps)\dist_X(x_i,x_j)
\]

or that 
\[
\eps\cdot\dist_X(x_i,x_j)\geq 2\ell(i,j)+16-4\sqrt{2}
\]

This holds since $\ell(i,j) = O(n^2)$ and $\eps\cdot\dist_X(x_i,x_j)\geq 41n^3$. 

It remains to bound the number of triangular obstacles in $\cT$. From Theorem~\ref{thm:polynomial-sep}, we need a bound on $\area(\cS)$, which we obtain by putting together the following:

\begin{itemize}
    \item The total surface area of the cylinders is ${n \choose 2} \cdot O(\mathrm{max}_{x_i,x_j\in X}(\dist_X(x_i,x_j))) = O(\frac{n^5\cdot \Phi}{\eps})$.
    \item  The total surface area of the cubes $\{\cR_i\}_{i=1}^n$ is $O(n\cdot n^4) = O(n^5)$.
\end{itemize}

The remaining parts (joints and small rounded cubes) are insignificant compared to the the are of the cylinders, thus $\area(\cS)= O(\frac{\Phi n^5}{\eps})$. Thus, from Theorem~\ref{thm:polynomial-sep} we have that $|\cT| = O(\area(\cS) \cdot \sigma^4) = O(\Phi n^5/\eps \cdot (\Phi n^3/\eps)^4)=O(n^{17}\Phi^5/\eps^{5})$.

Our construction can be followed by an algorithm whose running time is polynomial in the number of triangles placed.
\end{proof}

We note the following observation from within the proof of \Cref{thm:approxrealizemetric} for later use.

\begin{observation}\label{obs:approximate-paths}
    There is a path connecting $f(x_i)$ and $f(x_j)$ in the inside of $\cS(-1/2)$ that has length at most $(1+\eps)\cdot \dist_X(x_i,x_j)$.  
\end{observation}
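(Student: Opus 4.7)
The plan is to observe that the five-segment polyline $\pi$ that was constructed in the proof of Theorem~\ref{thm:approxrealizemetric} to upper bound $\dist_\cT(f(x_i), f(x_j))$ already lies inside $\cS(-1/2)$; the observation will then follow directly from the length bound established there.

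Recall that $\pi$ visits in order $f(x_i)$, the center of $H_{i,j}$, the center of $\cC_{i,j}$, the center of $\cC_{j,i}$, the center of $H_{j,i}$, and $f(x_j)$. Since $\cS$ is a $1$-patchwork, the interior of $\cS(-1/2)$ is precisely the set of points inside $\cS$ at Euclidean distance strictly greater than $1/2$ from $\cS$, so it suffices to verify this distance bound along each segment. The three middle segments each run along the axis of symmetry of one cylinder--joint patch ($\cK_{i,j}$, $\cJ_{i,j}$, or $\cK_{j,i}$) and through the centers of the small rounded cubes $\cC_{i,j}$ and $\cC_{j,i}$; along such an axis the cross-section of the surrounding patch of $\cS$ is a circle of radius in $[1, 2]$, so the distance to $\cS$ is at least $1$, and once the segment enters a size-$8$ rounded cube through its bottom/side hole of radius $2$, both the hole boundary and the opposite faces are at distance at least $2$.

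The main technical step will be checking the first and fifth (symmetric) segments, each going from the center of a large rounded cube $\cR_i$ (resp.~$\cR_j$) to a hole center on its top face. Using the explicit coordinates of the construction, the first such segment has constant $x$-coordinate and $y$-coordinate ranging from $0$ to some $y_0$ with $|y_0|\leq 10n^2-10n$, so it stays at $x$-distance $10n^2$ and $y$-distance at least $10n$ from every side face of $\cR_i$; the only relevant piece of $\cS$ nearby is thus the top face (with its hole of radius $2$). A short case analysis, split according to whether the current $(x,y)$-projection of the moving point lies inside the hole, shows that the distance to $\cS$ stays above $1/2$: while the projection is outside the hole, the vertical gap to the top face is at least $2$; once the projection enters the hole, the distance $\sqrt{(2-r)^2+z^2}$ to the hole boundary (with $r$ the horizontal distance to the hole center) remains at least $1/2$. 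With all five segments verified to lie at Euclidean distance greater than $1/2$ from $\cS$, the polyline $\pi$ lies in the interior of $\cS(-1/2)$, and since its length was bounded by $(1+\eps)\dist_X(x_i,x_j)$ in the proof of Theorem~\ref{thm:approxrealizemetric}, the observation follows.
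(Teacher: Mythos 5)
Your proof is correct and takes essentially the same approach as the paper: the paper simply notes this observation as a byproduct of the explicit five-segment polyline used to prove the upper bound in Theorem~\ref{thm:approxrealizemetric}, without spelling out why that polyline lies in the interior of $\cS(-1/2)$, and you fill in that verification. Your computations check out (the distance from the polyline to $\cS$ is in fact at least $1$ everywhere, comfortably above $1/2$), and the case analysis for the first and last segments — splitting on whether the $(x,y)$-projection lies inside or outside the hole and using the relation $|z|/r=10n^2/|y_0|\geq 1$ along the segment to bound $\sqrt{(2-r)^2+z^2}$ from below — is the right way to handle the only delicate spot.
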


\begin{remark}[Exact realizations]\label{rem:exact}
    It is particularly challenging to realize a metric \emph{exactly} with convex obstacles. While our construction can be modified to ensure the exact length of the shortest path through the corresponding tube has the same exact length as the original distance in $(X,\dist_X)$, this is not sufficient.
    
    The problem can be seen by considering the metric induced by a star on $4$ vertices. If $a,b,c$ are the leaves and $x$ is the center of the star, then the the shortest $f(a) \rightarrow f(b)$ path will be realized by going through the $f(a) \rightarrow  f(x)$ tube and the $f(x) \rightarrow f(b)$ tube, foregoing the small detour to $f(x)$ within its rounded cube. This path will be strictly shorter than $\dist_\cT(f(a),f(x))+\dist_\cT(f(x),f(b))$. While a realization exists for star graph metrics using relatively open triangles, this does not generalize. Can we for example realize $(X,\dist_X)$ exactly with disjoint convex obstacles assuming the image of $\dist_X$ is $\{1,2\}$?
\end{remark}

\section{Realizing metrics with fat obstacles}

We tweak our construction to prove the following theorem and answer \Cref{q:metricembed} for convex fat non-disjoint obstacles and for fat, disjoint but non-convex obstacles.

\begin{restatable}{theorem}{thmFat}\label{thm:fatobstacles}
    Let $\eps\in (0,1)$ and let $(X,\dist_X)$ be a metric space of size $|X|=n$ and spread $\Phi:=\max_{a,b,c,d\in X}(\frac{\dist_X(a,b)}{\dist_X(c,d)})$. Then there exists a collection $\cT$ of obstacles for each of the following restrictions:
    \begin{enumerate}[(i)]
        \item $\cT$ consists of $O(n^{17}\Phi^5/\eps^{5})$  congruent regular simplices (that are not necessarily disjoint).
        \item $\cT$ consists of  $O(n^5\Phi/\eps)$ disjoint, similarly-sized $\alpha$-fat objects that are homeomorphic to a ball. The objects have total complexity $O(n^{17}\Phi^5/\eps^{5})$, where $\alpha>0$ is a constant.
    \end{enumerate}
    In both cases, there is an injection $f:X\rightarrow \Reals^3$ such that
    \[\dist_X(a,b) \leq \dist_{\cT}(f(a),f(b))\leq  (1+\eps)\cdot \dist_X(a,b).\]
    for all $a,b\in X$.
    The set of obstacles can be constructed in $\poly(n,\Phi,1/\eps)$ time.
\end{restatable}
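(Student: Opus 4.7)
The plan is to modify the construction of \Cref{thm:approxrealizemetric} in two different ways, one for each case.

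For case (i), I replace each equilateral triangle $T \in \cT$ by a regular tetrahedron $T' \supset T$: the base of $T'$ is $T$ itself, and its apex is placed on the outward side of the triangle (in the $+\bn$ direction) at distance equal to the height of a regular tetrahedron with that side length. Each $T'$ is convex and $\alpha$-fat with $\alpha = 1/3$ (the ratio of inradius to circumradius of a regular tetrahedron). Since $T \subset T'$, any geodesic that avoided $T$ must also avoid $T'$, so the separation of \Cref{thm:polynomial-sep} and hence the lower bound on $\dist_\cT(f(x_i),f(x_j))$ from the proof of \Cref{thm:approxrealizemetric} is preserved. For the upper bound, the path from \Cref{obs:approximate-paths} lies in the interior of $\cS(-1/2)$ and is disjoint from every $T'$: each tetrahedron is based near some $\cS(\delta)$ with $\delta > -1/2$ and extends only outward, so it cannot reach the interior of $\cS(-1/2)$. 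Tetrahedra from different layers are permitted to overlap in this case. The total number of obstacles is unchanged, namely $O(n^{17}\Phi^5/\eps^5)$.

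For case (ii), I merge the triangles into a smaller collection of fat non-convex objects. Partition the surface $\cS$ into regions of area $\Theta(1)$, each contained in a single patch. For each region $U$, define the obstacle $O_U$ as the union of: a ``core'' ball of some constant radius $r_0 < 1/2$ placed inside the shell between $\cS(-1/2)$ and $\cS(1/2)$ above $U$; a slight thickening into thin slabs of each triangle whose projection lies in $U$; and a thin cylindrical bridge from each thickened triangle to the core. Then $O_U$ is homeomorphic to a ball (iteratively gluing balls through disks), it is $\alpha$-fat for a constant $\alpha$ depending on $r_0$ (inradius $\Theta(1)$ from the core, circumradius $\Theta(1)$ because all components lie in a constant-sized neighborhood of $U$), and it is pairwise disjoint from every $O_{U'}$ because cores, bridges, and slabs of different regions occupy disjoint columns of the shell. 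The number of objects is $O(\area(\cS)) = O(n^5\Phi/\eps)$, and their total complexity is dominated by the thickened triangles, giving $O(n^{17}\Phi^5/\eps^5)$. The distance bounds follow as in case (i): each $O_U$ contains the original triangles of its region, preserving the separation of \Cref{thm:polynomial-sep} for the lower bound, and the path from \Cref{obs:approximate-paths} avoids every $O_U$, giving the upper bound.

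The main difficulty is in case (ii), where fatness, disjointness, and ball topology must be achieved simultaneously. The key observations are that the $1$-touchability of $\cS$ guarantees a constant-radius ball fits inside the shell above every region, and the bridges can be routed roughly along normal directions to reach each thickened triangle without crossing the bridges of neighboring regions. Case (i) is more straightforward once one verifies that a regular tetrahedron based on a triangle and extending only outward stays outside $\cS(-1/2)$; this follows since the triangles in the proof of \Cref{thm:polynomial-sep} sit at offsets $\delta > -1/2$ and extension in the $+\bn$ direction only further increases $\delta$.
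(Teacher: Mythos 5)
Your case (i) is essentially the paper's argument: replace each triangle $T$ tangent to $\cS(\delta)$ at $p$ by the regular tetrahedron with base $T$ and apex along the outward normal, observe $\cF_{\cT'}\subset\cF_\cT$ for the lower bound and that the interior of $\cS(-1/2)$ remains free for the upper bound. (One sentence of yours is reversed: since $T\subset T'$, avoiding $T'$ implies avoiding $T$, not the other way around; what you need, and what actually holds, is $\cF_{\cT'}\subseteq\cF_\cT$, hence geodesic distances only increase.)

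Case (ii) takes a genuinely different route from the paper, and as written it has a real gap. You build each obstacle from a core ball, thin slabs around the triangles, and thin cylindrical bridges connecting slabs to the core. This creates three disjointness obligations you do not discharge: (a) a slab whose triangle sits near the boundary of a region $U$ can poke into the column over a neighboring region $U'$ and collide with $O_{U'}$; (b) the bridges of a single region, running from a central ball to triangles at many different offsets $\delta$ and many nearby surface locations, may cross one another (you only say they can be routed ``roughly along normal directions''); and (c) the assignment of a triangle to a region by ``projection lies in $U$'' is ambiguous for triangles straddling a boundary. The paper avoids all of this by not using bridges at all: it takes a $(1/8,1)$-net $N$ on $\cS$, lets the obstacle body be $U_p=C_p\cap\cU$ for the Voronoi cell $C_p$ of $p\in N$ in $\Reals^3$, shrinks to $U_p^{-\mu}$, attaches exactly the triangles whose centers lie in $U_p$, and crucially subtracts the $\mu$-neighborhoods of all other triangles, yielding
\[W_p:= \left(U_p^{-\mu} \cup \bigcup_{T\in \cT_p} T \right) \setminus \bigcup_{T\in \cT\setminus \cT_p} T^\mu.\]
Pairwise disjointness is then immediate from the Voronoi structure and the subtraction, the inscribed ball of radius $1/16$ comes from $1$-touchability via \Cref{lem:between-offsets}, and the diameter bound follows from the net parameters. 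If you want to keep your bridge-based construction you would need an explicit partition of $\cS$ and an explicit bridge-routing scheme with a proof of non-crossing, plus a carve-out rule for triangles near region boundaries analogous to the paper's $T^\mu$ subtraction; at that point the Voronoi approach is both simpler and already does everything you need.
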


\begin{proof}
(i)
    We modify the set of triangular obstacles in our basic construction as follows. Recall from the proof of Lemma~\ref{lem:separation} that each triangle $T\in \cT$ is tangent at the center of $T$ to some $\cS(\delta)$ at a point $p\in\cS(\delta)$, for some $-1/2<\delta<1/2$. We turn $T$ into a regular tetrahedron $T'$ where one face of $T'$ is $T$, and the vertex of $T'$ outside $T$ is placed along the normal ray from $p$, that is, the tetrahedron will point to the outside of $\cS(\delta)$.
    The obstacles in the new collection obtained this way, denoted by $\cT'$, are convex and $1/3$-fat.

    We can now prove Theorem~\ref{thm:approxrealizemetric} for $\cT'$ instead of $\cT$. For the lower bound, it suffices to observe that $\cF_{\cT'}\subset \cF_\cT$. This implies that the set of geodesics in $\cF_{\cT'}$ is a subset of the geodesics in $\cF_{\cT}$, and thus for any two points $x_i,x_j\in X$ we have $\dist_{\cT'}(f(x_i),f(x_j))\geq \dist_\cT(f(x_i),f(x_j))\geq \dist_X(x_i,x_j)$. On the other hand, the inside of $\cS(-1/2)$ is in $\cF_{\cT'}$, due to the way we constructed the tetrahedra (pointing ``out'' of the offset surface they are tangent to). \Cref{obs:approximate-paths} now implies that $\dist_{\cT'}(f(x_i),f(x_j))\leq (1+\eps)\dist_X(x_i,x_j)$.
    
\medskip
(ii) Let $\cT$ be the triangle collection constructed by \Cref{sec:construct}, and let $\cU$ be the union of all offsets of $\cS$, that is, $\cU:=\bigcup_{\delta\in [-1/2,1/2]} \cS(\delta)$. Refer to \Cref{fig:th13-1} for an illustration. Let $N$ be a $(1/8,1)$-net of $\cS$ obtained from \Cref{lem:epsnet_shift} with $\zeta=1/8$. Consequently, $|N|=\area(\cS)=O(n^5\Phi/\eps)$. Compute the \emph{Voronoi diagram} of $N$ in $\Reals^3$, that is, a partition of $\Reals^3$ into cells according to the closest point of $N$. The Voronoi cells can be computed in $\poly(|N|)$ time~\cite[Ch. 27]{MR1730156}. For the cell $C_p$ of $p\in N$ let $U_p:=C_p\cap \cU$; note that each $U_p$ can be represented explicitly in $\poly(|N|)=\poly(|\cT|)$ space.
    Observe that $U_p$ has diameter at most $\diam(U_p)\leq 3/2$: indeed, any point in $\cU$ is within distance at most $1/2$ from $\cS$ (along the normal direction or its inverse), and any point of $\cS$ is within distance at most $1$ from some point of $N$ by the net property of~$N$.

\begin{figure}
\begin{center}
\includegraphics[scale=0.95]{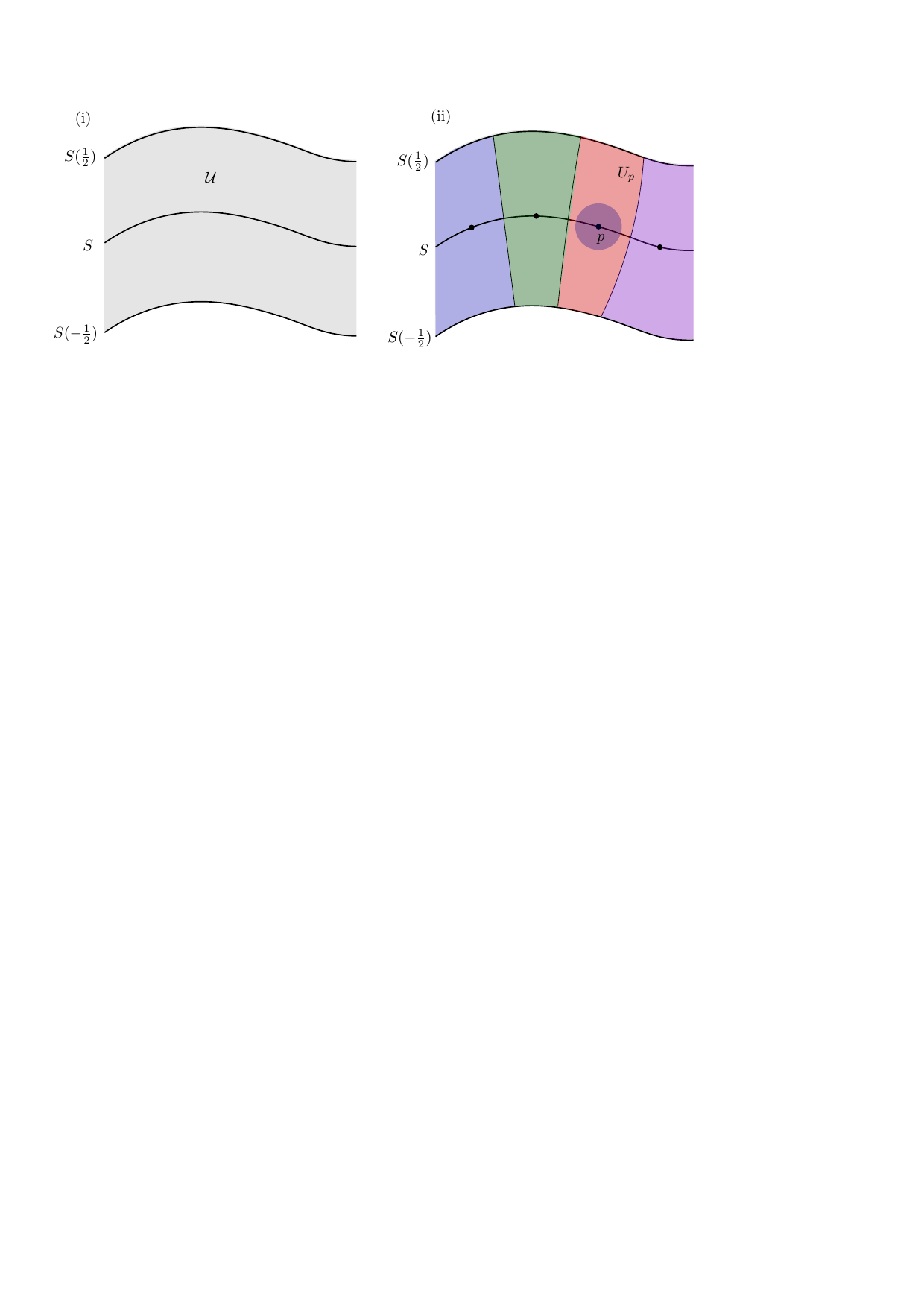}
\end{center}
\caption{(i) Side view of the region $\mathcal{U}$ (shaded in grey). (ii) The region $U_p = C_p \cap \mathcal{U}$ (in red), together with the ball $B_p\subset U_p$ (in blue), indicating that $U_p$ is fat.}
\label{fig:th13-1}
\end{figure}
    
    We claim that each set $U_p$ contains a ball $B_p$ of radius $1/16$. Clearly the Voronoi cell $C_p$ contains this ball by the net property; it remains to show that $B_p$ is contained in $\cU$. Consider the line $\ell$ normal to $\cS$ at $p$, and let $p(\delta)$ be the point of this line on $\cS(\delta)$ where $\delta\in [-1/16,1/16]$. Notice that a point $q\in B_p$ the nearest point of $\ell$ is some point $p(\delta)$, and the tangent plane $H_\delta$ of $\cS(\delta)$ at $p(\delta)$ contains $q$. Since $|q,p(\delta)|\leq 1/16$, we have by \Cref{lem:between-offsets} that $q$ is between $\cS(\delta-2/16^2)$ and $\cS(\delta+2/16^2)$. Since $\delta<1/16$, this implies that $q\in \cU$, and concludes the proof that $B_p\subset U_p$.

    Consider now the obstacle set $\cT$, and notice that its triangles have positive pairwise distance from each other and also from $\cS(-1/2)$ and from $\cS(1/2)$. Let $\mu>0$ be some small number such that the Euclidean $\mu$-neighborhoods of the triangles of $\cT$ remain pairwise disjoint, and they remain subsets of $\cU$. For a triangle $T\in \cT$ let $x(T)$ denote the center of $T$ and let $T^\mu$ denote the set of points $q$ in $\Reals^3$ such that $\dist_{\Reals^3}(q,T)<\mu$ (i.e., $T^\mu$ is an open set). Here $\dist_{\Reals^3}(q,T)$ denotes the minimum Euclidean distance between $q$ and a point of $T$.
    
    Let $U_p^{-\mu}$ denote the set of points $q$ with $\dist_{\Reals^3}(q,\Reals^3\setminus U_p)\geq \mu$, that is, $U_p^{-\mu}$ is a closed subset of $U_p$ where a small neighborhood of the boundary of $U_p$ has been removed. Consequently, the sets $U_p^{-\mu}$ are pairwise disjoint.
    
    We group $\cT$ according to the location of their triangle centers. Let $\cT_p$ denote the set of triangles $T\in \cT$ where $x(T)\in U_p$; if $x(T)$ is on the shared boundary of several sets $(U_p)_{p \in N}$, then we assign $T$ to one of the corresponding classes arbitrarily. As a result $\{\cT_p\}_{p \in N}$ is a partition of $\cT$.   
    For each point $p\in N$ we define the following obstacle $W_p$ by adding all triangles of $\cT_p$ to $U_p^{-\mu}$, but subtracting $T^\mu$ for all $T\in \cT\setminus \cT_p$:
    \[W_p:= \left(U_p^{-\mu} \cup \left(\bigcup_{T\in \cT_p} T \right)\right) \setminus \left(\bigcup_{T\in \cT\setminus \cT_p} T^\mu \right).\]
    Refer to \Cref{fig:th13-2} for an illustration of an obstacle $W_p$. The sets $W_p$ are computed in polynomial time, and they are closed sets that are pairwise disjoint. Moreover, each set $W_p$ has diameter at most $\diam(W_p)\leq \diam(U_p)+s_T<2$ where $s_T<1/100$ is the side length of the triangles in $T$.
    Recall that each set $U_p$ contains a ball of radius $1/16$ centered at $p$.
    It follows that $W_p$ contains the ball of radius $1/16-\mu-s_T>1/32$ centered at $p$. We conclude that the sets $W_p$ are $1/64$-fat.

\begin{figure}
\begin{center}
\includegraphics[scale=0.945]{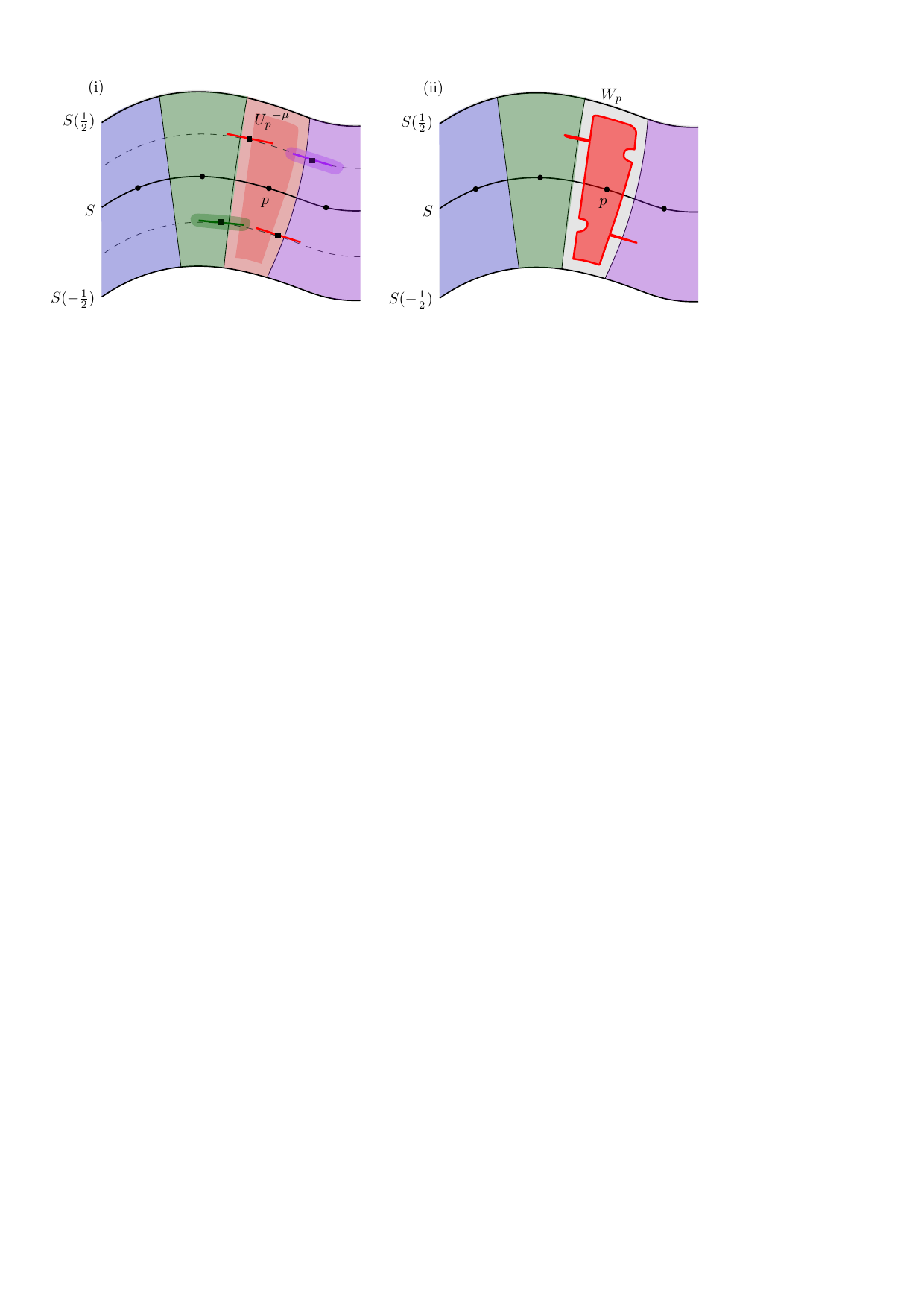}
\end{center}
\caption{(i) By shrinking $U_p$, we obtain the region $U_p^{-\mu}$. Square points correspond to centers of triangles in $\cT$. Four triangles are drawn, again in side view. For illustration purposes, various objects have not been drawn to scale. (ii) The final obstacle $W_p$, obtained by attaching to $U_p^{-\mu}$ the two red triangles, and removing a small neighborhood of the other two.}
\label{fig:th13-2}
\end{figure}

    The proof that the obstacle set $\{W_p\}_{p\in N}$ approximately realizes $(X,\dist_X)$ can be argued as in part (i). The number of obstacles is $|N|=O(n^5\Phi/\eps)$.
\end{proof}
We note that both the size of our obstacle collection and the construction time could be further improved; here, however, we did not attempt to optimize for these aspects.

\section{Algorithms and lower bounds for \otsp} \label{sec:algoandlower}

This section is devoted to proving \Cref{cor:algoandlower}. In order to give an algorithm, we will first need to (approximately) compute the distances between our input points. Let $\cT$ denote a collection of pairwise-disjoint polyhedral obstacles in $\Reals^3$ of total complexity $m$. Har-Peled \cite{Har-Peled99} showed how to construct an $\eps$-approximate shortest path map from any source point $s\in \cF_\cT$, in time roughly $O(m^4/\eps^6)$.  From his result we can deduce the following:

\begin{theorem}[Har-Peled~\cite{Har-Peled99}]\label{ptas-shortest-paths}
Let $\cT$ be a collection of pairwise-disjoint polyhedral obstacles in $\Reals^3$ of total complexity $m$ and let $S\subset \cF_\cT$, be a set of $|S|=n$ points. Then, for any $0<\eps<1$, we can compute a $(1+\eps)$-approximation for the all-pairs-shortest-paths problem on $S$ in time:

$$
O\left( n\frac{m^4}{\eps^2} \left( \frac{\beta(m)}{\eps^4} \log \frac{m}{\eps} + \log (m \rho) \log (m \log \rho) \right) \log \frac{1}{\eps} \right)
$$
where $\rho$ is the ratio of the longest edge in $\cT$ with the distance of the closest pair of points in $S$, and $\beta(m) = \alpha(m)^{O(\alpha(m))^{O(1)}}$, and $\alpha(m)$ is the inverse of the Ackermann function.
\end{theorem}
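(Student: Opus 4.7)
The plan is to apply Har-Peled's single-source construction~\cite{Har-Peled99} once from each point $s \in S$ and then simply query the resulting approximate shortest-path map at the other $n-1$ sites. Har-Peled's main result gives, for a fixed source $s \in \cF_\cT$ and any $0<\eps<1$, a data structure of polynomial size that, in time roughly
\[
T_1(m,\eps,\rho) \;=\; O\!\left( \frac{m^4}{\eps^2}\left( \frac{\beta(m)}{\eps^4}\log\frac{m}{\eps} + \log(m\rho)\log(m\log\rho) \right)\log\frac{1}{\eps} \right),
\]
produces a $(1+\eps)$-approximate shortest-path map from $s$ in the free space $\cF_\cT$, supporting $(1+\eps)$-approximate distance queries from $s$ to any query point in $\cF_\cT$ in time negligible compared to $T_1$ (in fact, polylogarithmic in $m$ and $1/\eps$). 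Here $\rho$ is, as stated, the ratio of the longest edge in $\cT$ to the distance of the closest pair in $S$; this spread-like quantity bounds the range of distances that the data structure must resolve and is the parameter that makes its running time input-sensitive.

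The main step is then to invoke this construction $n$ times, once with each point of $S$ as source, and for every source $s$ to query the resulting map at the remaining $n-1$ points of $S$. This yields a $(1+\eps)$-approximation of $\dist_\cT(s,t)$ for every ordered pair $(s,t) \in S\times S$, which solves the all-pairs shortest paths problem up to a $(1+\eps)$ factor. Since the querying cost is dominated by the construction cost, the total running time is $n\cdot T_1(m,\eps,\rho)$, matching the bound in the statement.

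One small technical point worth mentioning is that the spread parameter $\rho$ used in Har-Peled's bound depends on the closest pair distance in $S$, not on the source alone, and hence is the same for every one of the $n$ invocations; so the multiplicative factor $n$ in front of $T_1$ is indeed the only overhead. No other step is genuinely difficult: the correctness is immediate from the correctness of the single-source map, and the running-time bound is a direct multiplication by $n$. The only ``obstacle'' is keeping track of how the query time folds into the $n\cdot T_1$ expression, which is routine because the number of queries per source is $n-1$ and each query runs in time polylogarithmic in $m$ and $1/\eps$, well within the stated bound.
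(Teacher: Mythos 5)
Your proposal matches the paper's (mostly implicit) justification exactly: the paper simply notes that Har-Peled's single-source approximate shortest-path map is built once per source in $S$ and then queried at the other sites, giving the factor of $n$ in front of the single-source construction time. Your additional remark that $\rho$ depends only on the closest pair of $S$ (and hence is uniform across all $n$ invocations) is a correct and reasonable observation that the paper leaves unstated.
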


Note that Har-Peled's result is based on constructing a graph whose vertices correspond to points and whose edges correspond to pairwise visible point pairs, where edge lengths are high-precision estimates of the distance of the visible point pair. Har-Peled's result computes the true shortest paths in this graph. Consequently, the resulting approximate distance function $\dist_{\mathrm{approx}}$ on $S$ forms a metric space. One can also show using arguments similar to \Cref{le:doubling-space} that for any $\eps\in (0,1)$ the doubling dimension of $(S,\dist_{\mathrm{approx}})$ is at most constant times the doubling dimension of $(S,\dist_\cT)$.

 We now mention the result by Banerjee, Bartal, Gottlieb and Hovav, on TSP in doubling spaces \cite{banerjee24}.

\begin{theorem}[Corollary~35 in \cite{banerjee24}]\label{ptas-doubling}
    Let $(X,\dist_X)$ be a metric space with $|X|=n$ points and of bounded doubling dimension $\delta$. A $(1+\eps)$-approximation to the optimal tour of $X$ can be computed by a randomized algorithm in
    $2^{(\delta/\eps)^{O(\delta)}}n+(1/\eps)^{O(\delta)}n\log{n}$ time.
\end{theorem}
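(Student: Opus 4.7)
The plan is to follow the dynamic programming framework introduced by Arora and Mitchell for Euclidean \tsp, adapted to doubling metrics via the hierarchical net-tree technique of Bartal--Gottlieb--Krauthgamer (later refined by Chan--Jiang and sharpened to the claimed bound in \cite{banerjee24}). A standard preprocessing step lets us assume that the spread of $(X,\dist_X)$ is polynomially bounded in $n$ and $1/\eps$: snap points to a coarse grid of scale $\Theta(\eps\cdot \opt/n)$, collapse pairs within this scale, and truncate distances above $\poly(n,1/\eps)\cdot \opt$, each of which affects the tour length by at most a $(1+O(\eps))$ factor.

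First I would build a \emph{randomly shifted hierarchical net tree}. For each scale $r_i := 2^{-i}\diam(X)$, fix an $r_i$-net $N_i$, refining down to $r_L = \Theta(\eps\,\opt / n^2)$. Each $p\in N_i$ owns a cluster consisting of those points of $X$ closer to $p$ than to any other level-$i$ net point under a uniformly random permutation tie-break. The doubling property gives that each level-$i$ cluster has radius $O(r_i)$ and contains $2^{O(\delta)}$ children at level $i+1$. Randomisation of the tie-breaking (equivalently, a randomly shifted hierarchy) yields the usual \emph{padding inequality}: a geodesic segment of length $\ell$ is cut by some level-$i$ cluster boundary with probability $O(\ell\cdot 2^{\delta}/r_i)$.

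Next I would prove a \emph{portal-respecting structure theorem}. For each cluster $C$ at level $i$, install an $(\eps r_i/n)$-net of $m=(1/\eps)^{O(\delta)}$ \emph{portals} on a thin region around $\partial C$, and show that every boundary crossing of the optimal tour can be rerouted to pass through a portal at additive cost $O(\eps r_i/n)$. A patching step then bundles boundary crossings at each cluster into $(1/\eps)^{O(\delta)}$ groups using a Steiner-like reconnection in the local net $N_i$, the doubling analogue of Arora's patching lemma. Combined with the padding inequality, the expected total overhead telescopes to at most $\eps\cdot \opt$, giving a near-optimal portal-respecting tour.

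The algorithmic core is a dynamic program over the cluster tree. The DP state for a cluster $C$ is a \emph{portal interface}: a non-crossing matching on at most $(1/\eps)^{O(\delta)}$ portals that encodes how the partial tour enters and leaves $C$. There are $2^{(\delta/\eps)^{O(\delta)}}$ admissible interfaces, and combining the interfaces of the $2^{O(\delta)}$ children of $C$ by enumerating consistent concatenations costs $2^{(\delta/\eps)^{O(\delta)}}$ per cluster, summed over $O(n)$ clusters, producing the leading $2^{(\delta/\eps)^{O(\delta)}}n$ term. The additive $(1/\eps)^{O(\delta)}n\log n$ term comes from constructing the net tree and the portal neighbourhoods via the data structures for doubling metrics of Har-Peled--Mendel and Cole--Gottlieb. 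The main obstacle is the patching step in the abstract doubling setting: in $\Reals^d$ patching is easy because axis-aligned grid cells admit cheap boundary Steiner structures, whereas here one must argue that $k$ boundary crossings of a level-$i$ cluster can be reconnected by a Steiner object of weight $O(k\cdot r_i)$ while keeping the interface space singly exponential in $(\delta/\eps)^{O(\delta)}$. Getting this quantitative trade-off tight, rather than the quasi-polynomial bound of earlier works, is the delicate contribution of \cite{banerjee24} and would be the hardest step to execute.
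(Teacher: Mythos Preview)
The paper does not prove this theorem at all: it is quoted verbatim as Corollary~35 of \cite{banerjee24} and used as a black box in the proof of \Cref{cor:algoandlower}. There is therefore no ``paper's own proof'' to compare against; your proposal is an attempt to reconstruct the argument of the cited reference rather than anything the present paper supplies.

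As a sketch of the Bartal--Gottlieb--Krauthgamer style argument your outline is broadly on the right track (hierarchical nets, random shifting, portals, structure theorem, DP), but a couple of details betray Euclidean intuition that does not transfer. The phrase ``non-crossing matching on portals'' has no meaning in an abstract doubling metric---there is no ambient plane in which crossings are defined---so the interface of a cluster is simply a multiset of portal pairs (or a bounded-degree pairing), and the state-space bound must come purely from the portal count and crossing-number bound, not from planarity. Likewise ``a thin region around $\partial C$'' is not well-defined for net-tree clusters; portals are just a net of the cluster itself. These are not fatal, but if you were actually writing the proof you would need to replace those Euclidean crutches with the genuine doubling-metric arguments. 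For the purposes of this paper, however, the correct move is simply to cite \cite{banerjee24} and not attempt a proof.
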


By combining Theorem~\ref{ptas-doubling} and Theorem~\ref{ptas-shortest-paths}, we will obtain a PTAS for the case $\mathfrak P = \text{ convex } \wedge  \text{ pairwise disjoint } \wedge \text{ fat}$. For the other cases, the APX-hardness is based on the  inapproximability result by Karpinski, Lampis and Schmied on \mtsp \cite{karpinski}. They showed an inapproximability bound of $123/122$. By inspection of the construction given in~\cite{karpinski}, we have the following crucial observation:

\begin{observation}\label{tsp-inapprox}
   It is NP-hard to approximate \mtsp within a ratio of $123/122$, even when the underlying metric space has polynomial spread.
\end{observation}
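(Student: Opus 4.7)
The plan is to open up the Karpinski--Lampis--Schmied reduction~\cite{karpinski} and verify that the metric produced by the reduction, viewed as a function of the input size, has polynomially bounded spread; the $123/122$ inapproximability bound itself is not re-proven and is taken directly from~\cite{karpinski}. First, I would recall that their reduction is a polynomial-time reduction from a suitable NP-hard gap problem (e.g.\ a bounded-occurrence \textsc{Max-E3-Sat} variant) to \mtsp, producing from an instance of size $k$ an edge-weighted graph $G=(V,E,w)$ of size $|V|=\poly(k)$ together with an intended interpretation of the shortest-path metric $\dist_G$ on $V$ as the \mtsp instance.

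Next, I would check two numerical properties of $(V,\dist_G)$ by direct inspection of the gadgets used in~\cite{karpinski}. On the one hand, all edge weights in $G$ are positive integers bounded by a constant (in their reduction the weights used are small fixed numbers assigned to the gadget edges). On the other hand, $G$ is connected and has $|V|=\poly(k)$ vertices, so its diameter in the shortest-path metric satisfies $\diam(V,\dist_G)\leq |V|\cdot w_{\max}=\poly(k)$, while every nonzero distance is at least $1$ since edge weights are positive integers. Consequently,
\[
\Phi(V,\dist_G)=\frac{\max_{u\neq v}\dist_G(u,v)}{\min_{u\neq v}\dist_G(u,v)}\leq \frac{\poly(k)}{1}=\poly(k)=\poly(|V|),
\]
which is exactly the polynomial spread assertion.

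Finally, since the reduction is polynomial time, the inapproximability factor $123/122$ is preserved, and the hardness instance produced has polynomial spread as required. The only mildly nontrivial point, and the step I would expect to take most of the space, is the case analysis of the gadgets in~\cite{karpinski} to confirm that edge weights truly are constants (not, say, functions that could blow up) and that distinct vertices are always at distance at least $1$; everything else is a standard graph-theoretic diameter bound.
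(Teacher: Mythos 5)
Your proof plan follows essentially the same route as the paper: reduce to inspecting the Karpinski--Lampis--Schmied gadgets, observe that edge weights are bounded constants, and bound the spread via a diameter estimate on the connected weighted graph. One factual claim you make as a placeholder is not quite right, though it does not affect the conclusion. You assert that the KLS edge weights are \emph{positive integers} and hence that every nonzero distance is at least $1$. In fact, the construction in Section~5.1 of~\cite{karpinski} uses small rational weights: the minimum edge weight is $0.5$ and the maximum is $2$. Your intermediate claim ``every nonzero distance is at least $1$'' is therefore false as stated. Fortunately the argument is robust to this: a positive constant lower bound suffices, and with $w_{\min}=0.5$ and $w_{\max}=2$ one gets diameter at most $2(|V|-1)$ and spread at most $4(|V|-1)=O(|V|)$, which is the bound the paper records. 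You correctly flag this edge-weight inspection as the step you would need to carry out; once you do, you should replace ``positive integers, so distances are at least $1$'' with ``weights in $[0.5,2]$, so distances are at least $0.5$.'' With that correction your proof and the paper's proof are essentially identical.
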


\begin{proof}
 This follows by inspection of the construction by Karpinski, Lampis and Schmied (Section~5.1 in \cite{karpinski}), which is based on a the shortest path metric of a connected edge-weighted graph. The weights of all edges used are small constants. Namely, the smallest edge has weight $0.5$, and the longest edge has length $2$. Therefore, the graph has diameter at most $2(n-1)$, and the construction has a spread at most $\frac{2(n-1)}{0.5} = 4(n-1)= O(n)$.   
\end{proof}

 Now we have all the necessary elements to prove \Cref{cor:algoandlower}.

\corAlgoLower*

\begin{proof}[Proof of \Cref{cor:algoandlower}]
    Let $S\subset \cF_\cT$ denote a set of $n$ points, whose optimal tour we want to compute. We have the following cases: 
    \begin{itemize}
        \item \textbf{The obstacles in $\cT$ are convex, $\alpha$-fat and pairwise disjoint.}
        
        We show that then \otsp admits a PTAS. Towards that, let $\eps\in (0,1)$. By Theorem~\ref{ptas-shortest-paths}, we can approximate $\dist_T$ up to a factor of $1+\eps/3$. Let $\dist_{\mathrm{approx}}$ denote the approximate metric. The doubling dimension of $(S,\dist_{\mathrm{approx}})$ is at most a constant factor larger than that of $(S,\dist_\cT)$. Therefore, by Lemma~\ref{le:doubling-space}, it is at most $O(1)$.
        By Theorem~\ref{ptas-doubling}, we can get a $(1+\eps/3)$-approximation for \otsp on $(S,\dist_{\mathrm{approx}})$, with running time $2^{1/\eps^{O(1)}}n+1/\eps^{O(1)}n\log{n}$. Since $(1+\eps/3)^2<1+\eps$, the result is a $(1+\eps)$-approximation for \otsp. 
        
    
    \item \textbf{The obstacles in $\cT$ are convex and pairwise disjoint but not fat, or fat and convex but non-disjoint, or fat and disjoint but non-convex.}
    
    We will show that in this case the problem is APX-hard, with an inapproximability bound of $\frac{123}{122}-\eps$, for any $\eps>0$. 
    For a set $S\subset \cF_\cT$, let $\textrm{OPT}_\cT(S)$ denote the length of an optimal tour through $S$. Assume that we could, in polynomial time, compute a tour of length $T_{\mathrm{approx}}(S)$ such that $T_{\mathrm{approx}}(S) \leq \left(\frac{123}{122}-\eps\right)\textrm{OPT}_\cT(S)$, for a fixed $\eps>0$.

    Let $(X,\dist_X)$ denote an instance of \mtsp, where $|X|=n$ and $X$ has spread $\textrm{poly}(n)$. From \Cref{tsp-inapprox}, we have a $123/122$ inapproximability bound for computing the optimal tour on $X$, denoted by $\textrm{OPT}(X)$. From~\Cref{thm:approxrealizemetric} or \Cref{thm:fatobstacles}, we compute a set of obstacles $\cT$ in $\mathbb{R}^3$ and an injection $f:X\rightarrow \Reals^3$, such that: 

    $$\textrm{OPT}_\cT(f(X))\leq (1+\eps') \textrm{OPT}(X),$$

    where we set $\eps'$ to satisfy $\frac{123}{122(1+\eps')}=\frac{123}{122}-\eps$.

    By our assumption, we can compute a value $T_{\mathrm{approx}}(f(X))$ such that:

    $$T_{\mathrm{approx}}(f(X))\leq \left(\frac{123}{122}-\eps\right)\textrm{OPT}_\cT(f(X))$$

    Therefore, $T_{\mathrm{approx}}(f(X))\leq \frac{123}{122} \textrm{OPT}(X)$, which contradicts the APX-hardness of \mtsp.\qedhere
    \end{itemize}
\end{proof}

\section{Conclusion}\label{sec:conclusion}

In this paper we have shown that all metric spaces are approximately realizable in $\Reals^3$ using convex disjoint obstacles, or using convex fat but non-disjoint obstacles, or using fat disjoint obstacles that are non-convex. Our convex construction is easily generalizable to several other obstacle types, however, it always requires objects whose minimum bounding box has two long edges (polynomially long in $n,1/\eps$ and the spread) and one short edge of length $1$.

Some notable obstacle types where the possible realizable metrics remain unexplored include
disjoint axis-parallel box obstacles and disjoint $1\times 1 \times n$ boxes of arbitrary orientation. We suspect that obstacles of the former type cannot realize all metric spaces, and obstacles of the latter type can only realize spaces of small doubling dimension. 
It also remains open whether exact realization of any metric space is possible with convex obstacles, even if all distances in the metric space are either $1$ or $2$. 

Finally, can we generalize the planar studies of distances among weighted regions~\cite{papad91,bose22,mitchell_wrp} to $\Reals^3$? What metrics can be realized if we are allowed to pass through the objects of $\cT$, but we need to pay some custom penalty (weight) for doing so?

\bibliography{ref}

\begin{thebibliography}{10}

\bibitem{abam15}
Mohammad~Ali Abam, Marjan Adeli, Hamid Homapour, and Pooya~Zafar Asadollahpoor.
\newblock {Geometric Spanners for Points Inside a Polygonal Domain}.
\newblock In Lars Arge and J\'{a}nos Pach, editors, {\em 31st International Symposium on Computational Geometry (SoCG 2015)}, volume~34 of {\em Leibniz International Proceedings in Informatics (LIPIcs)}, pages 186--197, Dagstuhl, Germany, 2015. Schloss Dagstuhl -- Leibniz-Zentrum f{\"u}r Informatik.
\newblock \href {https://doi.org/10.4230/LIPIcs.SOCG.2015.186} {\path{doi:10.4230/LIPIcs.SOCG.2015.186}}.

\bibitem{abam21}
Mohammad~Ali Abam and Mohammad~Javad {Rezaei Seraji}.
\newblock Geodesic spanners for points in {$R^3$} amid axis-parallel boxes.
\newblock {\em Information Processing Letters}, 166:106063, 2021.
\newblock \href {https://doi.org/10.1016/j.ipl.2020.106063} {\path{doi:10.1016/j.ipl.2020.106063}}.

\bibitem{agarwal24}
Pankaj~K. Agarwal, Dan Halperin, Micha Sharir, and Alex Steiger.
\newblock {\em Near-Optimal Min-Sum Motion Planning for Two Square Robots in a Polygonal Environment}, pages 4942--4962.
\newblock \href {https://arxiv.org/abs/https://epubs.siam.org/doi/pdf/10.1137/1.9781611977912.176} {\path{arXiv:https://epubs.siam.org/doi/pdf/10.1137/1.9781611977912.176}}, \href {https://doi.org/10.1137/1.9781611977912.176} {\path{doi:10.1137/1.9781611977912.176}}.

\bibitem{alkema23}
Henk Alkema, Mark de~Berg, Morteza Monemizadeh, and Leonidas Theocharous.
\newblock {TSP in a Simple Polygon}.
\newblock In Shiri Chechik, Gonzalo Navarro, Eva Rotenberg, and Grzegorz Herman, editors, {\em 30th Annual European Symposium on Algorithms (ESA 2022)}, volume 244 of {\em Leibniz International Proceedings in Informatics (LIPIcs)}, pages 5:1--5:14, Dagstuhl, Germany, 2022. Schloss Dagstuhl -- Leibniz-Zentrum f{\"u}r Informatik.
\newblock \href {https://doi.org/10.4230/LIPIcs.ESA.2022.5} {\path{doi:10.4230/LIPIcs.ESA.2022.5}}.

\bibitem{asano1986}
Takao Asano, Tetsuo Asano, Leonidas Guibas, John Hershberger, and Hiroshi Imai.
\newblock Visibility of disjoint polygons.
\newblock {\em Algorithmica}, 1(1):49--63, Nov 1986.
\newblock \href {https://doi.org/10.1007/BF01840436} {\path{doi:10.1007/BF01840436}}.

\bibitem{banerjee24}
Sandip Banerjee, Yair Bartal, Lee-Ad Gottlieb, and Alon Hovav.
\newblock { Novel Properties of Hierarchical Probabilistic Partitions and Their Algorithmic Applications }.
\newblock In {\em 2024 IEEE 65th Annual Symposium on Foundations of Computer Science (FOCS)}, pages 1724--1767, Los Alamitos, CA, USA, 2024. IEEE Computer Society.
\newblock \href {https://doi.org/10.1109/FOCS61266.2024.00107} {\path{doi:10.1109/FOCS61266.2024.00107}}.

\bibitem{barba_et_al:LIPIcs.SWAT.2018.8}
Luis Barba, Michael Hoffmann, Matias Korman, and Alexander Pilz.
\newblock {Convex Hulls in Polygonal Domains}.
\newblock In David Eppstein, editor, {\em 16th Scandinavian Symposium and Workshops on Algorithm Theory (SWAT 2018)}, volume 101 of {\em Leibniz International Proceedings in Informatics (LIPIcs)}, pages 8:1--8:13, Dagstuhl, Germany, 2018. Schloss Dagstuhl -- Leibniz-Zentrum f{\"u}r Informatik.
\newblock \href {https://doi.org/10.4230/LIPIcs.SWAT.2018.8} {\path{doi:10.4230/LIPIcs.SWAT.2018.8}}.

\bibitem{BartalGK16}
Yair Bartal, Lee{-}Ad Gottlieb, and Robert Krauthgamer.
\newblock The traveling salesman problem: Low-dimensionality implies a polynomial time approximation scheme.
\newblock {\em {SIAM} J. Comput.}, 45(4):1563--1581, 2016.
\newblock \href {https://doi.org/10.1137/130913328} {\path{doi:10.1137/130913328}}.

\bibitem{bose22}
Prosenjit Bose, Guillermo Esteban, and Anil Maheshwari.
\newblock Weighted shortest path in equilateral triangular meshes.
\newblock In Yeganeh Bahoo and Konstantinos Georgiou, editors, {\em Proceedings of the 34th Canadian Conference on Computational Geometry, {CCCG} 2022, Toronto Metropolitan University, Toronto, Ontario, Canada, August 25-27, 2022}, pages 60--67, 2022.

\bibitem{10.5555/49142}
John~F. Canny.
\newblock {\em The complexity of robot motion planning}.
\newblock MIT Press, Cambridge, MA, USA, 1988.

\bibitem{chew2002}
L.Paul Chew, Haggai David, Matthew~J. Katz, and Klara Kedem.
\newblock Walking around fat obstacles.
\newblock {\em Information Processing Letters}, 83(3):135--140, 2002.
\newblock URL: \url{https://www.sciencedirect.com/science/article/pii/S0020019001003210}, \href {https://doi.org/10.1016/S0020-0190(01)00321-0} {\path{doi:10.1016/S0020-0190(01)00321-0}}.

\bibitem{Chvatal1975Combinatorial}
V.~Chv{\'a}tal.
\newblock A combinatorial theorem in plane geometry.
\newblock {\em Journal of Combinatorial Theory, Series B}, 18(1):39--41, February 1975.
\newblock \href {https://doi.org/10.1016/0095-8956(75)90061-1} {\path{doi:10.1016/0095-8956(75)90061-1}}.

\bibitem{Clarkson87}
Kenneth~L. Clarkson.
\newblock Approximation algorithms for shortest path motion planning (extended abstract).
\newblock In Alfred~V. Aho, editor, {\em Proceedings of the 19th Annual {ACM} Symposium on Theory of Computing, 1987, New York, New York, {USA}}, pages 56--65. {ACM}, 1987.
\newblock \href {https://doi.org/10.1145/28395.28402} {\path{doi:10.1145/28395.28402}}.

\bibitem{deberg24}
Sarita de~Berg, Tillmann Miltzow, and Frank Staals.
\newblock {Towards Space Efficient Two-Point Shortest Path Queries in a Polygonal Domain}.
\newblock In Wolfgang Mulzer and Jeff~M. Phillips, editors, {\em 40th International Symposium on Computational Geometry (SoCG 2024)}, volume 293 of {\em Leibniz International Proceedings in Informatics (LIPIcs)}, pages 17:1--17:16, Dagstuhl, Germany, 2024. Schloss Dagstuhl -- Leibniz-Zentrum f{\"u}r Informatik.
\newblock \href {https://doi.org/10.4230/LIPIcs.SoCG.2024.17} {\path{doi:10.4230/LIPIcs.SoCG.2024.17}}.

\bibitem{DumitrescuT23}
Adrian Dumitrescu and Csaba~D. T{\'{o}}th.
\newblock Maximal distortion of geodesic diameters in polygonal domains.
\newblock In {\em Combinatorial Algorithms - 34th International Workshop, {IWOCA} 2023, Tainan, Taiwan, June 7-10, 2023, Proceedings}, volume 13889 of {\em Lecture Notes in Computer Science}, pages 197--208. Springer, 2023.
\newblock \href {https://doi.org/10.1007/978-3-031-34347-6\_17} {\path{doi:10.1007/978-3-031-34347-6\_17}}.

\bibitem{fekete2014chromatic}
Sándor~P. Fekete, Stephan Friedrichs, Michael Hemmer, Joseph S.~B. Mitchell, and Christiane Schmidt.
\newblock On the chromatic art gallery problem.
\newblock In {\em Canadian Conference on Computational Geometry (CCCG)}, 2014.
\newblock URL: \url{http://www.cccg.ca/proceedings/2014/papers/paper11.pdf}.

\bibitem{mitchell_wrp}
L.~Gewali, A.~Meng, J.~S. Mitchell, and S.~Ntafos.
\newblock Path planning in 0/1/ weighted regions with applications.
\newblock In {\em Proceedings of the Fourth Annual Symposium on Computational Geometry}, SCG '88, page 266–278, New York, NY, USA, 1988. Association for Computing Machinery.
\newblock \href {https://doi.org/10.1145/73393.73421} {\path{doi:10.1145/73393.73421}}.

\bibitem{MR1730156}
Jacob~E. Goodman and Joseph O'Rourke, editors.
\newblock {\em Handbook of discrete and computational geometry}.
\newblock CRC Press Series on Discrete Mathematics and its Applications. CRC Press, Boca Raton, FL, 1997.

\bibitem{Halperin2018Robotics}
D.~Halperin, L.~Kavraki, and K.~Solovey.
\newblock {\em Robotics}, chapter~51, pages 1343--1376.
\newblock Chapman \& Hall/CRC, 3rd edition, 2018.

\bibitem{Halperin2018MotionPlanning}
D.~Halperin, M.~Sharir, and O.~Salzman.
\newblock {\em Algorithmic Motion Planning}, chapter~50, pages 1311--1342.
\newblock Chapman \& Hall/CRC, 3rd edition, 2018.
\newblock \href {https://doi.org/10.1201/9781315119601} {\path{doi:10.1201/9781315119601}}.

\bibitem{Har-Peled99}
Sariel Har{-}Peled.
\newblock Constructing approximate shortest path maps in three dimensions.
\newblock {\em {SIAM} J. Comput.}, 28(4):1182--1197, 1999.
\newblock \href {https://doi.org/10.1137/S0097539797325223} {\path{doi:10.1137/S0097539797325223}}.

\bibitem{heinonen2001metric}
Juha Heinonen.
\newblock {\em Lectures on Analysis on Metric Spaces}, volume~1 of {\em Universitext}.
\newblock Springer New York, NY, 1 edition, 2001.
\newblock \href {https://doi.org/10.1007/978-1-4613-0131-8} {\path{doi:10.1007/978-1-4613-0131-8}}.

\bibitem{hersh99}
John Hershberger and Subhash Suri.
\newblock An optimal algorithm for {E}uclidean shortest paths in the plane.
\newblock {\em SIAM Journal on Computing}, 28(6):2215--2256, 1999.
\newblock \href {https://doi.org/10.1137/S0097539795289604} {\path{doi:10.1137/S0097539795289604}}.

\bibitem{Hopcroft84}
J.~E. Hopcroft, J.~T. Schwartz, and M.~Sharir.
\newblock On the complexity of motion planning for multiple independent objects; {PSPACE}-hardness of the ``warehouseman's problem''.
\newblock {\em International Journal of Robotics Research}, 3(4):76--88, 1984.

\bibitem{karpinski}
Marek Karpinski, Michael Lampis, and Richard Schmied.
\newblock New inapproximability bounds for {TSP}.
\newblock {\em Journal of Computer and System Sciences}, 81(8):1665--1677, 2015.
\newblock \href {https://doi.org/10.1016/j.jcss.2015.06.003} {\path{doi:10.1016/j.jcss.2015.06.003}}.

\bibitem{lubiw_et_al:LIPIcs.ESA.2021.65}
Anna Lubiw and Anurag~Murty Naredla.
\newblock {The Visibility Center of a Simple Polygon}.
\newblock In Petra Mutzel, Rasmus Pagh, and Grzegorz Herman, editors, {\em 29th Annual European Symposium on Algorithms (ESA 2021)}, volume 204 of {\em Leibniz International Proceedings in Informatics (LIPIcs)}, pages 65:1--65:14, Dagstuhl, Germany, 2021. Schloss Dagstuhl -- Leibniz-Zentrum f{\"u}r Informatik.
\newblock \href {https://doi.org/10.4230/LIPIcs.ESA.2021.65} {\path{doi:10.4230/LIPIcs.ESA.2021.65}}.

\bibitem{doi:10.1137/1.9781611973105.60}
Joseph S.~B. Mitchell.
\newblock {\em Approximating Watchman Routes}, pages 844--855.
\newblock \href {https://arxiv.org/abs/https://epubs.siam.org/doi/pdf/10.1137/1.9781611973105.60} {\path{arXiv:https://epubs.siam.org/doi/pdf/10.1137/1.9781611973105.60}}, \href {https://doi.org/10.1137/1.9781611973105.60} {\path{doi:10.1137/1.9781611973105.60}}.

\bibitem{mitchell93}
Joseph S.~B. Mitchell.
\newblock Shortest paths among obstacles in the plane.
\newblock In {\em Proceedings of the Ninth Annual Symposium on Computational Geometry}, SCG '93, page 308–317, New York, NY, USA, 1993. Association for Computing Machinery.
\newblock \href {https://doi.org/10.1145/160985.161156} {\path{doi:10.1145/160985.161156}}.

\bibitem{papad91}
Joseph S.~B. Mitchell and Christos~H. Papadimitriou.
\newblock The weighted region problem: finding shortest paths through a weighted planar subdivision.
\newblock {\em J. ACM}, 38(1):18–73, January 1991.
\newblock \href {https://doi.org/10.1145/102782.102784} {\path{doi:10.1145/102782.102784}}.

\bibitem{MitchellS04}
Joseph S.~B. Mitchell and Micha Sharir.
\newblock New results on shortest paths in three dimensions.
\newblock In {\em Proceedings of the 20th {ACM} Symposium on Computational Geometry, Brooklyn, New York, USA, June 8-11, 2004}, pages 124--133. {ACM}, 2004.
\newblock \href {https://doi.org/10.1145/997817.997839} {\path{doi:10.1145/997817.997839}}.

\bibitem{oh2019geodesic}
Eunjin Oh, Sang~Won Bae, and Hee-Kap Ahn.
\newblock Computing a geodesic two-center of points in a simple polygon.
\newblock {\em Computational Geometry}, 82:45--59, 2019.
\newblock URL: \url{https://doi.org/10.1016/j.comgeo.2019.04.003}.

\bibitem{orourke1987art}
Joseph O'Rourke.
\newblock {\em Art Gallery Theorems and Algorithms}.
\newblock Oxford University Press, Inc., New York, NY, USA, 1987.

\bibitem{Pollack1989}
R.~Pollack, M.~Sharir, and G.~Rote.
\newblock Computing the geodesic center of a simple polygon.
\newblock {\em Discrete \& Computational Geometry}, 4(6):611--626, 1989.
\newblock \href {https://doi.org/10.1007/BF02187751} {\path{doi:10.1007/BF02187751}}.

\bibitem{rieck_et_al:LIPIcs.ISAAC.2022.67}
Christian Rieck and Christian Scheffer.
\newblock {The Dispersive Art Gallery Problem}.
\newblock In Sang~Won Bae and Heejin Park, editors, {\em 33rd International Symposium on Algorithms and Computation (ISAAC 2022)}, volume 248 of {\em Leibniz International Proceedings in Informatics (LIPIcs)}, pages 67:1--67:18, Dagstuhl, Germany, 2022. Schloss Dagstuhl -- Leibniz-Zentrum f{\"u}r Informatik.
\newblock \href {https://doi.org/10.4230/LIPIcs.ISAAC.2022.67} {\path{doi:10.4230/LIPIcs.ISAAC.2022.67}}.

\bibitem{schwartz83}
Jacob~T Schwartz and Micha Sharir.
\newblock On the "piano movers" problem. {II}. general techniques for computing topological properties of real algebraic manifolds.
\newblock {\em Advances in Applied Mathematics}, 4(3):298--351, 1983.
\newblock \href {https://doi.org/10.1016/0196-8858(83)90014-3} {\path{doi:10.1016/0196-8858(83)90014-3}}.

\bibitem{doi:10.1177/027836498300200304}
Jacob~T. Schwartz and Micha Sharir.
\newblock On the piano movers' problem: {III}. coordinating the motion of several independent bodies: The special case of circular bodies moving amidst polygonal barriers.
\newblock {\em The International Journal of Robotics Research}, 2(3):46--75, 1983.
\newblock \href {https://doi.org/10.1177/027836498300200304} {\path{doi:10.1177/027836498300200304}}.

\bibitem{sharir84}
Micha Sharir and Amir Schorr.
\newblock On shortest paths in polyhedral spaces.
\newblock In {\em Proceedings of the Sixteenth Annual ACM Symposium on Theory of Computing}, STOC '84, page 144–153, New York, NY, USA, 1984. Association for Computing Machinery.
\newblock \href {https://doi.org/10.1145/800057.808676} {\path{doi:10.1145/800057.808676}}.

\bibitem{toussaint1986relative}
G.~T. Toussaint.
\newblock An optimal algorithm for computing the relative convex hull of a set of points in a polygon.
\newblock {\em EURASIP, Signal Processing III: Theories Applications, Part 2}, pages 853--856, 1986.

\bibitem{wang21}
Haitao Wang.
\newblock Shortest paths among obstacles in the plane revisited.
\newblock In {\em Proceedings of the Thirty-Second Annual ACM-SIAM Symposium on Discrete Algorithms}, SODA '21, page 810–821, USA, 2021. Society for Industrial and Applied Mathematics.

\bibitem{wangdahl1974}
Glenn~E. Wangdahl, Stephen~M. Pollock, and John~B. Woodward.
\newblock Minimum-trajectory pipe routing.
\newblock {\em Journal of Ship Research}, 18(01):46--49, Mar 1974.
\newblock \href {https://doi.org/10.5957/jsr.1974.18.1.46} {\path{doi:10.5957/jsr.1974.18.1.46}}.

\end{thebibliography}

\end{document}